\DeclarePairedDelimiter\ceil{\lceil}{\rceil}
\DeclarePairedDelimiter\floor{\lfloor}{\rfloor}
\declaretheorem[within=section]{theorem}
\declaretheorem[sibling=theorem]{corollary}
\declaretheorem[sibling=theorem]{lemma}
\declaretheorem[sibling=theorem]{proposition}
\declaretheorem[sibling=theorem]{claim}
\declaretheorem[sibling=theorem]{definition}
\newcommand{\ignore}[1]{}
\newcommand{\inpro}[2]{\left\langle #1,#2 \right\rangle}
\newcommand{\linearspan}[1]{\mathrm{span}\{#1\}}
\newcommand{\img}{\mathrm{Im}}
\newcommand{\set}[1]{\{#1\}}
\newcommand{\rk}{\mathrm{rank}}
\newcommand{\sgn}{\mathrm{sgn}}
\renewcommand{\leq}{\leqslant}
\renewcommand{\geq}{\geqslant}
\renewcommand{\ge}{\geqslant}
\renewcommand{\le}{\leqslant}
\renewcommand{\epsilon}{\varepsilon}
\newcommand{\eps}{\epsilon}
\newcommand{\Z}{\mathbb{Z}}
\newcommand{\F}{\mathbb{F}}
\newcommand{\K}{\mathbb{K}}
\newcommand{\PF}{\mathbb{P}\mathbb{F}}
\newcommand{\cO}{\mathcal O}
\newcommand{\Esymb}{{\bf E}}
\newcommand{\Psymb}{{\bf Pr}}
\DeclareMathOperator*{\E}{\Esymb}
\DeclareMathOperator*{\ProbOp}{\Psymb}
\renewcommand{\Pr}{\ProbOp}
\def\notes{1}
 \newcommand{\snote}[1]{\ifnum\notes=1{{\sf\color{green} [Sergey: #1]}}\fi}
 \newcommand{\vnote}[1]{\ifnum\notes=1{{\sf\color{blue} [Venkat: #1]}}\fi}
 \newcommand{\gnote}[1]{\ifnum\notes=1{{\sf\color{red} [Gopi: #1]}}\fi}
\newcommand{\mattwoone}[2]{
\left(
\begin{matrix}
#1\\
#2\\
\end{matrix}
\right)}
\newcommand{\matthreeone}[3]{
\left(
\begin{matrix}
#1\\
#2\\
#3\\
\end{matrix}
\right)}
\def\PK{{\mathbb{P}\mathbb{K}}}
\def\bK{{\overline{\mathbb{K}}}}
\def\PbK{{\mathbb{P}\overline{\mathbb{K}}}}
\def\Li{{\mathrm{Li}}}
\begin{document}

\title{Maximally Recoverable LRCs: A field size lower bound and constructions for few heavy parities}

\author{
Sivakanth Gopi\thanks{Microsoft Research. Email: \texttt{sigopi@microsoft.com}. Research supported by NSF CAREER award 1451191 and NSF grant CCF-1523816. Most of this work was done when the author was visiting Microsoft Research in Summer 2017.}\\
\and    Venkatesan Guruswami\thanks{Carnegie Mellon University. Email: \texttt{venkatg@cs.cmu.edu}. Research supported in part by NSF grant CCF-1563742. Most of this work was done during a visit by the author to Microsoft Research, Redmond. The work was also partly done when
the author was visiting the School of Physical and Mathematical Sciences, Nanyang Technological University, Singapore, and the Center of Mathematical Sciences and Applications, Harvard University.}  \\
\and    Sergey Yekhanin\thanks{Microsoft Research. Email: \texttt{yekhanin@microsoft.com} }
}

\date{}

\maketitle

\begin{abstract} 
The explosion in the volumes of data being stored online has resulted in distributed storage systems transitioning to erasure coding based schemes. Local Reconstruction Codes (LRCs) have emerged as the codes of choice for these applications. These codes can correct a small number of erasures (which is the typical case) by accessing only a small number of remaining coordinates. An $(n,r,h,a,q)$-LRC is a linear code over $\mathbb{F}_q$ of length $n$, whose codeword symbols are partitioned into $g=n/r$ local groups each of size $r$. Each local group has $a$ local parity checks that allow recovery of up to $a$ erasures within the group by reading the unerased symbols in the group. There are a further $h$ ``heavy" parity checks to provide fault tolerance from more global erasure patterns. Such an LRC is Maximally Recoverable (MR), if it corrects all erasure patterns which are information-theoretically correctable under the stipulated structure of local and global parity checks, namely patterns with up to $a$ erasures in each local group and an additional $h$ (or fewer) erasures anywhere in the codeword. 

The existing constructions require fields of size $n^{\Omega(h)}$ while no superlinear lower bounds were known for any setting of parameters. Is it possible to get linear field size similar to the related MDS codes (e.g. Reed-Solomon codes)? In this work, we answer this question by showing superlinear lower bounds on the field size of MR LRCs. When $a,h$ are constant and the number of local groups $g \ge h$, while $r$ may grow with $n$, our lower bound simplifies to 
$$q\ge \Omega_{a,h}\left(n\cdot r^{\min\{a,h-2\}}\right).$$ 
 
MR LRCs deployed in practice have a small number of global parities, typically $h=2,3$~\cite{HuangSX}. We complement our lower bounds by giving constructions with small field size for $h\le 3$.  When $h=2$, we give a linear field size construction, whereas previous constructions required quadratic field size in some parameter ranges. Note that our lower bound is superlinear only if $h\ge 3$. When $h=3$, we give a construction with $O(n^3)$ field size, whereas previous constructions needed $n^{\Theta(a)}$ field size. Our construction for $h=2$ makes the choices $r=3, a=1, h=3$ the next smallest setting to investigate regarding the existence of MR LRCs over fields of near-linear size. We answer this question in the positive via a novel approach based on elliptic curves and arithmetic progression free sets.
\end{abstract}
\setcounter{page}{0}
\thispagestyle{empty}
\newpage

\section{Introduction}\label{Sec:Intro}
The explosion in the volumes of data being stored online means that duplicating or triplicating data is not economically feasible. This has resulted in distributed storage systems employing erasure coding based schemes in order to ensure reliability with low storage overheads. In recent years Local Reconstruction Codes (LRCs) emerged as the codes of choice for many such scenarios and have been implemented in a number of large scale systems e.g., Microsoft Azure~\cite{HuangSX} and Hadoop~\cite{XOR_ELE}.

Classical erasure correcting codes~\cite{MS} guarantee that data can be recovered if a bounded number of codeword coordinates is erased. However recovering data typically involves accessing all surviving coordinates. By contrast, Local Reconstruction Codes\footnote{The term local reconstruction codes is from~\cite{HuangSX}. Essentially the same codes were called locally repairable codes in~\cite{Dimakis_0} and locally recoverable codes in~\cite{TB}. Thankfully all names above abbreviate to LRCs.} (LRCs) distinguish between the typical case when only a small number of codeword coordinates are erased (e.g., few machines in a data center fail) and a worst case when a larger number of coordinates might be unavailable, and guarantee that in the prior case recovery of individual coordinates can be accomplished in sub-linear time, without having to access all surviving symbols.

LRCs are systematic linear codes, where encoding is a two stage process. In the first stage, $h$ redundant heavy parity symbols are generated from $k$ data symbols. Each heavy parity is a linear combination of all $k$ data symbols. During the second stage, the $k+h$ symbols are partitioned into $\frac{k+h}{r-a}$ sets of size $r-a$ and each set is extended with $a$ local parity symbols using an MDS code to form a \emph{local group} as shown in Figure~\ref{Fig:LRC}. Encoding as above ensures that when at most $a$ coordinates are erased, any missing coordinate can be recovered by accessing at most $r-a$ symbols. However, if a larger number of coordinates (that depends on $h$) is erased; then all missing symbols can be recovered by potentially accessing all remaining symbols.

\begin{figure}[h]
\label{Fig:LRC}
\includegraphics[scale=0.45,trim={0cm 13cm 19cm 10cm},clip]{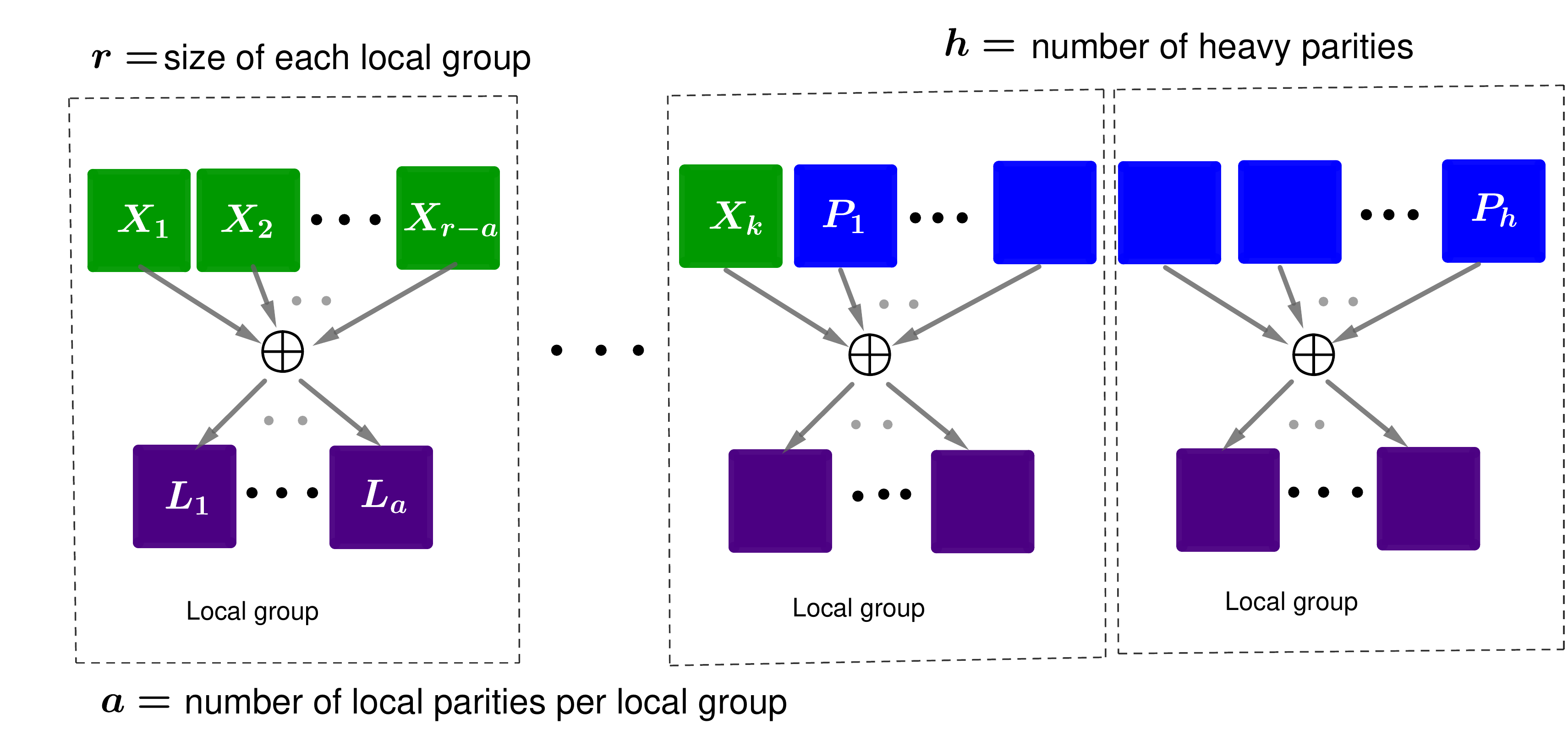}
\caption{An LRC with $k$ data symbols, $h$ heavy parities and `$a$' local parities per local group.}
\end{figure}

Our description of LRC codes above is not complete. To specify a concrete code we need to fix coefficients in linear combinations that define $h$ heavy and $\frac{k+h}{r-a}\cdot a$ local parities. Different choices of coefficients could lead to codes with different erasure correcting capabilities. The best we could hope for is to have an optimal choice of coefficients which ensures that our code can correct every pattern of erasures that is correctable for some setting of coefficients. Such codes always exist and are called Maximally Recoverable (MR)~\cite{CHL,HCL} LRCs.\footnote{Maximally recoverable LRCs are called Partial MDS (PMDS) in~\cite{Blaum,BHH} and many follow up works.} Combinatorially, an $(n,r,h,a,q)$-LRC is maximally recoverable it if corrects every pattern of erasures that can be obtained by erasing $a$ coordinates in each local group and up to $h$ additional coordinates elsewhere, here $q$ is the size of the field over which the linear code is defined. Explicit constructions of MR LRCs are available (e.g.,~\cite{CK}) for all ranges of parameters. Unfortunately, all known constructions require finite fields of very large size.

Encoding  a linear code and decoding it from erasures involve matrix vector multiplication and linear equation solving respectively. Both of these require performing numerous finite field arithmetic operations. Having small finite fields results in faster encoding and decoding and thus improves the overall throughput of the system~\cite[Section 2]{Plank}. It is also desirable in practice to work over finite fields of characteristic 2. Obtaining MR LRCs over finite fields of minimal size is one of the central problems in the area of codes for distributed storage.

\subsection{State of the art and our results}\label{SubSec:Results}

We now summarize what is known about the minimal field size of maximally recoverable local reconstruction codes with parameters $n,r,a$ and $h$ and first cover the easy cases.
\begin{itemize}
  \itemsep=0ex
\item When $a=0,$ LRCs are equivalent to classical erasure correcting codes. In this case Reed Solomon codes are maximally recoverable, and they have a field size of roughly $n,$ which is known to be optimal up to constant factors~\cite{MainMDS1}.
\item When $h=0$ or $h=1$, there are constructions of maximally recoverable LRCs over fields of size $O(r)$~\cite{BHH} which is optimal.
\item When $r=a+1,$ codes in the local groups are necessarily simple repetition codes. MR LRCs can be obtained by starting with a Reed Solomon code of length $n/r$  and repeating every coordinate $r$ times. Thus the optimal field size is $\Theta(n/r).$
\end{itemize}
This leaves us with the main case, when $a\geq 1,$ $r\geq a+2,$ and $h\geq 2.$ A number of constructions have been obtained~\cite{Blaum,BHH,TPD,GHJY,HY,GHKSWY,CK,BPSY,GYBS}. The best constructions for the case of $h=2$ are from~\cite{BPSY} and require a field of size $O(a\cdot n).$ For most other settings of parameters the best families of MR LRCs are from~\cite{GYBS}. They present two different constructions with field size
\begin{equation}
\label{Eqn:StateOfArt}
\begin{aligned}
O\left(r\cdot n^{(a+1)h-1}\right) \quad\mathrm{and} \quad O\left(\max\left(O(n/r), O(r)^{h+a}\right)^h\right)
\end{aligned}
\end{equation}
respectively.
The first bound is typically better when $r=\Omega(n).$ The second bound is better when $r\ll n.$   A recent (unpublished) work~\cite{GLX-ff} uses a new approach based on function fields to obtain some improvements to the above bounds in certain cases (e.g. $a=1$, or when $r$ is small and $h$ is large), but the exponential dependence on $h$ in the field size remains. Thus in all known constructions, the field size $q$ grows rapidly with the codeword length. With this context, we are now ready to discuss our results.

\medskip \noindent \textbf{Lower bound.}
The bounds in (\ref{Eqn:StateOfArt}) exhibit code constructions but not any inherent limitations. In particular, up until our work it remained a possibility that codes over fields of size $O(n)$ could exist for all ranges of LRC parameters. We obtain the first superlinear lower bound on the field size of MR LRCs, prior to our work no superlinear lower bounds were known in any setting of parameters.
\begin{theorem}\label{Th:lowerbound_mega}
Let $h\ge 2$ and $a$ be fixed constants while $r$ may grow with $n$. Any maximally recoverable $(n,r,h,a,q)$-LRC with $g=n/r\ge 2$ local groups must have:
   \begin{equation}\label{Eqn:MainLowerFormOur_mega}
q \ge \Omega_{h,a} \left( n\cdot r^\alpha \right) \text{ where } \alpha=\frac{\min\left\{a,h-2\ceil{h/g}\right\}}{\ceil{h/g}}.
\end{equation}
\end{theorem}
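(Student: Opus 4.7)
My strategy is to translate the maximal recoverability condition into an algebraic condition on a system of vectors over $\F_q$, and then extract the field-size bound from a careful incidence-counting argument that exploits the optimal spread of the $h$ heavy erasures across the $g$ groups.

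The first step is a reduction to an effective-vector formulation. I would write the parity-check matrix of the MR LRC in its standard block form --- $ga$ local-parity rows (with the $i$-th block an $a\times r$ MDS parity check for the $i$-th local group) plus $h$ heavy-parity rows supported on all coordinates --- and note that maximal recoverability is equivalent to the full-rank requirement on the column submatrix corresponding to every admissible erasure pattern $E = \bigcup_i E_i \cup H'$ with $|E_i|\le a$, $|H'|\le h$, and $|H'\cap\text{group}_i|\le r-a$. Row-reducing by the local MDS blocks factors the rank condition into local MDS minors (always nonzero) times an $h\times h$ determinant in \emph{effective vectors} $\tilde v_{i,j}\in\F_q^h$: explicit linear combinations of the heavy-row entries whose coefficients are determined by the local MDS matrices and the choice of $E_i$'s.

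Next, I would focus on the optimal erasure spread. Setting $s := \ceil{h/g}$ --- the minimum number of heavy erasures per group forced when distributing $h$ heavies across $g$ groups --- I would restrict attention to admissible patterns in which each of $\ceil{h/s}$ ``active'' groups contributes $a$ local and $s$ heavy erasures. For such patterns, the MR $h\times h$ determinantal condition rearranges as a multilinear pairing among group-level \emph{Pl\"ucker vectors} in $\bigwedge^{s}\F_q^h$: each active group contributes the wedge product of its $s$ effective vectors, and MR demands that these wedge products be in general position across groups. Each coordinate of the LRC thereby generates a family of Pl\"ucker vectors on the Grassmannian $\mathrm{Gr}(s,h)\subset \mathrm{PG}\bigl(\binom{h}{s}-1,q\bigr)$, which has size $\Theta(q^{s(h-s)})$.

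The final and most delicate step is the incidence bound. I would argue that each of the $n=gr$ coordinates contributes $r^{\min(a,h-2s)}$ ``essentially distinct'' Pl\"ucker configurations that must be avoided by Pl\"ucker vectors from other groups: the $r^a$ factor reflects the $\binom{r}{a}$ freedom in choosing local-erasure positions and dominates in the regime $a \le h-2s$, while the $r^{h-2s}$ factor comes from the packing capacity of the Grassmannian and dominates in the regime $a \ge h-2s$. A suitable double-counting --- balancing the $n\cdot r^{\min(a,h-2s)}$ total configurations against the $s$-fold dilution of Pl\"ucker incidences and the $q^{s(h-s)}$ Grassmannian capacity --- then yields $q \ge \Omega_{h,a}\bigl(n\cdot r^{\min(a,h-2s)/s}\bigr) = \Omega_{h,a}(n\cdot r^\alpha)$. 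The main obstacle will be this very counting: naively summing forbidden-flat volumes tends to give either too-weak bounds (because of overcounted overlaps) or too-strong bounds (exceeding what existing constructions permit), and achieving the tight exponent $\alpha$ with its two-regime $\min$ structure will require a careful combinatorial argument distinguishing the $a$-dominated and $(h-2s)$-dominated regimes.
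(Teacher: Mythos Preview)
Your high-level architecture is right and matches the paper: reduce MR to a general-position condition on a family of $s$-dimensional subspaces of $\F_q^h$ (one per group, per choice of $a+s$ columns), with $s=\lceil h/g\rceil$, and then argue that too small a field forces a violation. But the Pl\"ucker/Grassmannian counting you sketch does not close. Balancing $n\cdot r^{\min(a,h-2s)}$ configurations against a ``Grassmannian capacity'' $q^{s(h-s)}$ with ``$s$-fold dilution'' would at best give $q^s\gtrsim n\cdot r^{\min(a,h-2s)}$, i.e.\ $q\gtrsim n^{1/s}r^{\min(a,h-2s)/s}$, which is weaker than the target $q\gtrsim n\cdot r^{\min(a,h-2s)/s}$ whenever $s>1$. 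More fundamentally, the MR condition (that the $g$ subspaces together span $\F_q^h$) is not a linear/hyperplane condition in Pl\"ucker coordinates, so it is unclear what forbidden incidences you would be double-counting there.

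The paper avoids the Grassmannian entirely. It picks a uniformly random $(h-1)$-dimensional subspace $F\subset\F_q^h$ and shows that if $q$ is too small, then with positive probability $F$ contains a subspace $p_{i,S_i}$ from \emph{every} group, contradicting the spanning property. This is a second-moment computation on the random variable $Z_i=\#\{S:p_{i,S}\subset F\}$, and the entire proof hinges on one concrete lemma you are missing: for fixed $i$ and sets $S,T$ of size $a+s$, one has $\dim(p_{i,S}\cap p_{i,T})=\max(0,|S\cap T|-a)$, provided $|S\cup T|\le a+h$. This intersection formula is exactly what makes $\E[Z_i^2]/\E[Z_i]^2\to 1$ when $q=o(r^{1+a/s})$, and it is also what explains the two regimes in the $\min$: when $a>h-2s$ the constraint $|S\cup T|\le a+h$ fails for generic pairs, and the paper recovers the bound by restricting to sets $S$ that all contain a fixed $(a+2s-h)$-element set, cutting the count from $\binom{r}{a+s}$ to $\binom{r-(a+2s-h)}{h-s}$. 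Your outline gestures at the $\min$ but does not identify this mechanism; without it the second moment (or any analogous overlap control) will not go through.
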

\noindent
The lower bound~(\ref{Eqn:MainLowerFormOur_mega}) simplifies as follows in some special cases:
\begin{itemize}
\itemsep=-0.25ex
\item $g\ge h$ : $q\ge \Omega_{h,a}\left(nr^{\min\{a,h-2\}}\right)$
\item $g\le h$, $g$ divides $h$ and $a\le h-2h/g$ : $q\ge \Omega_{h,a}\left(n^{1+ag/h}\right)$
\item $g\le h$, $g$ divides $h$ and $a> h-2h/g$ : $q\ge \Omega_{h,a}\left(n^{g-1}\right)$.
\end{itemize}


Note that our lower bound is superlinear whenever $r$ is growing with $n$ except when $a=0$ or $h=2$ or $g=2$ or $(g=3, h=4,a=1)$. 
We believe that from a practical standpoint, the setting of $r$ slowly growing with $n$ (like say $r = \log n$ or $r = n^\eps$) is interesting because if $r$ is constant, the number of parity checks or redundant symbols $(an/r + h)$ will be linear in $n$, and applications of codes in distributed storage demand high rate codes.


When $a=0$, MR LRCs reduce to MDS codes and so there are linear field size constructions (Reed-Solomon codes). When $h=2$, we obtain a linear field size construction (Theorem~\ref{Th:H2Main}). This leaves $g=2$ and $(g=3, h=4,a=1)$  as the only cases where we don't know if linear field size is enough for MR LRCs.

The parity check view of MR LRCs throws a different light on our lower bound. The parity check matrix of an MR $(n,r,h,a,q)$-LRC with $g=n/r$ local groups is an $(ag+h)\times n$ matrix of the following form:
\begin{equation}\label{fig:MRtopology_intro}
H=
\left[
\begin{array}{c|c|c|c}
A_1 & 0 & \cdots & 0\\
\hline
0 &A_2 & \cdots & 0\\
\hline
\vdots & \vdots & \ddots & \vdots \\
\hline
0 & 0 & \cdots & A_g \\
\hline
B_1 & B_2 & \cdots & B_g \\
\end{array}
\right].
\end{equation}
Here $A_1,A_2,\cdots,A_g$ are $a\times r$ matrices over $\F_q$, $B_1,B_2,\cdots,B_g$ are $h\times r$ matrices over $\F_q.$ The rest of the matrix is filled with zeros. An erasure pattern with $ag+h$ erasures is correctable iff the corresponding minor in $H$ is non-zero. Thinking of the entries of the matrices $A_i,B_i$ as variables, every $(ag+h)\times (ag+h)$ minor of $H$ is either identically zero or a non-zero polynomial in those variables. We call the zero minors as trivial and the rest as non-trivial. It turns out that the non-trivial minors of $H$ in (\ref{fig:MRtopology_intro}) are exactly those which are obtainable by selecting $a$ columns in each local group and $h$ additional columns anywhere. There exists an MR LRC over $\F_q$ with these parameters iff there exists an assignment of $\F_q$ values to these variables which makes all the non-trivial minors non-zero. 
It is easy to see that if we assign random values from a large enough finite field $\F_q$ (say $q\gg n^{ag+h}$) to the variables, by Schwartz-Zippel lemma, all the non-trivial minors will be non-zero with high probability. But this probabilistic argument can only work for very large fields. Seen this way, it seems very natural to ask what is the smallest field size required to make all the non-trivial minors non-zero given a matrix with some pattern of zeros.

Thus our lower bound shows that one needs super linear size fields to instantiate $H$ to make all non-trivial minors non-zero. This is even more surprising when contrasted with a recent proof of the GM-MDS conjecture by Lovett~\cite{Lovett18} and independently by Yildiz and Hassibi~\cite{YH18}. This states that a $k\times n$ matrix ($k\le n$) with some pattern of zeros such that every $k\times k$ minor is non-trivial can be instantiated with a field of size $q\le n+k-1$ to make every $k\times k$ minor non-zero.  
 
 \medskip \noindent \textbf{Upper bounds (Code constructions).}
MR LRCs that are deployed in practice typically have a small constant number of global parities, typicially $h=2,3$~\cite{HuangSX}. Without explicit constructions, one has to search over assignments from a small field to variables in the parity check matrix~(\ref{fig:MRtopology_intro}) to find an assignment which makes all the non-trivial minors non-zero. This is prohibitively expensive even for small values of $n$ and $q$ that are deployed in practice. Note that for random assignments to work with high probability, the field should be very large. Keeping this in mind, we design explicit MR LRCs over small field size for $h\le 3$.
 \begin{itemize}
 \itemsep=0ex
\item We obtain a family of MR $(n,r,h=2,a,q)$-LRCs, where $q=O(n)$ for all settings of parameters. Prior to our work the best constructions~\cite{BPSY} required $q$ to be $O(a\cdot n)$ which in general may be up to quadratic in $n.$ If we require that the field has characteristic two, we get such codes with $q=n^{1+o(1)}$.
\item  We obtain a family of MR $(n,r,h=3,a,q)$-LRCs, where $q=O(n^{3})$ for all settings of parameters. Prior to our work the best constructions~(\ref{Eqn:StateOfArt}) required $q$ to be up to $n^{\Theta(a)}$ for some regimes.  If we require that the field has characteristic two, we can get such codes with $q=n^{3+o(1)}$.
\item Given our linear field size construction for $h=2$ (and since the problem is trivial for $r=2$), the setting $r=3, a=1, h=3$ is the next smallest regime to investigate regarding the existence of MR LRCs over fields of near-linear size. We construct such MR LRCs with a field size of $n \cdot \exp(O(\sqrt{\log n}))$ by developing a new approach to LRC constructions based on elliptic curves and AP-free sets.
\end{itemize}

\subsection{Our techniques}\label{SubSec:Techniques}
Similar to most earlier works in the area we represent LRC codes via their parity check matrices which look like~(\ref{fig:MRtopology_intro}). Such matrices $H$ have size $\left(a\cdot g+h\right) \times n$ and a simple block structure. Columns are partitioned into $r$-sized local groups. For each local group there is a corresponding collection of $a$ rows that impose $MDS$ constraints on coordinates in the group, and have no support outside the group. Remaining $h$ rows of $H$ correspond to heavy parity symbols and carry arbitrary values.

To establish our lower bound when $g\ge h$, we start with a parity check matrix of an arbitrary maximally recoverable local reconstruction code. From it, we obtain a family of large mutually disjoint subsets $X_1,\ldots,X_g$ in the projective space $\PF_q^{h-1},$ such that no hyperplane in $\PF_q^{h-1}$ intersects $h$ distinct sets among $X_1,\dots,X_g$. For example when $a=1$ and $h\ge 3$, the set $X_i$ is all the pairwise differences of columns of $B_i$ in ~(\ref{fig:MRtopology_intro}) thought of as points in $\PF_q^{h-1}$. We then show that if $q$ is too small, then a random hyperplane will intersect $h$ distinct sets among $X_1,\dots,X_g$ with positive probability, which gives the required lower bound. When $h>g$, each $X_i$ will be a collection of subspaces in $\F_q^h$ of dimension roughly $h/g$ such that any collection of $g$ subspaces, one from each $X_i$, will span $\F_q^h$. Again we show that if $q$ is too small, a random $(h-1)$-dimensional subspace will contain a subspace each from $X_i$ with high probability. The proof is more intricate in this case, because we need to carefully calculate how subspaces inside each $X_i$ intersect with each other.

We now explain the main ideas behind our constructions. An LRC is MR if any subset of columns of $H$ (as in~(\ref{fig:MRtopology_intro})) that can be obtained by selecting $a$ columns from each local group and then $h$ more has full rank. Suppose all $h$ additional columns are selected from distinct local groups. In this case showing that some $ag+h$ columns are independent easily reduces to showing that a certain $(ah+h)\times (ah+h)$ determinant is non-zero. An important algebraic identity that underlies our constructions for $h=2$ and $h=3$ reduces such determinants to much smaller $h\times h$ determinants of determinants in the entries of $H.$ A special case of this identity when $h=2$ and matrices are Vandermonde type appears in~\cite{BPSY}. In addition to that we utilize various properties of finite fields such as the structure of multiplicative sub-groups and field extensions. In the case of $h=3,$ we deviate from most existing constructions of MR LRCs in that we do not use linearized constraints $(x,x^q,x^{q^2})$ or Vandermonde constraints $(x,x^2,x^3)$ and instead rely on Cauchy matrices~\cite{LN} to specify heavy parities.

Our construction of MR $(n,r=3,h=3,a=1,q)$-LRCs is technically disjoint from our other results. We observe that in this narrow case, MR LRCs are equivalent to subsets $A$ of the projective plane $\PF_q^2,$ where $A$ is partitioned in to triples $A = \sqcup_i\{a_i,b_i,c_i\}$ so that some three elements of $A$ are collinear if and only if they constitute one of the triples $\{a_i,b_i,c_i\}$ in the partition. Moreover, minimizing the field size of maximally recoverable local reconstruction codes is in fact equivalent to maximizing the cardinality of such sets $A.$ By considering all the $q+1$ lines through an arbitrary point of $A,$ it is easy to see that $|A|\le q+3.$ We construct sets $A$ with size $|A|\ge q^{1-o(1)}.$ For our construction we start with an elliptic curve $E$ over $\F_q$ such that the group of $\F_q$-rational points, $E(\F_q)$, is a cyclic group of size $\Omega(q)$. We observe that three points of $E(\F_q)$ are collinear if only and only if they sum to zero in the group. We then select a large AP-free set of points of $E(\F_q)$ using the classical construction of Behrend~\cite{Beh46} and complete these points to desired triples.

\subsection{Related work}\label{SubSec:Related}
The first family of codes with locality for applications in storage comes from~\cite{HCL,CHL}. These papers also introduced the concept of maximal recoverability in a certain restricted setting. The work of~\cite{GHSY} introduced a formal definition of local recovery and focused on codes that guarantee local recovery for a single failure. For this simple setting they were able to show that optimal codes must have a certain natural topology, e.g., codeword coordinates have to be arranged in groups where each group has a local parity. While~\cite{GHSY} focused on systematic codes that provide local recovery for information symbols,~\cite{Dimakis_0} considered codes that provide locality for all symbols and defined local reconstruction codes. In parallel works maximally recoverable  LRCs have been studied in~\cite{BHH,Blaum}. Construction of local reconstruction codes with optimal distance over fields of linear size has been given in~\cite{TB}. (Note that distance optimality is a much weaker property than maximal recoverability, e.g., when $a+h<r$ it only requires all patterns of size $a+h$ to be correctable, while MR property requires lots of very large patterns including some of size $(a+1)h$ to be correctable.)

Maximal recoverability can be defined with respect to more general topologies then just local reconstruction codes~\cite{GHJY}. The first lower bound for the field size of MR
codes in any topology was recently given in~\cite{GHKSWY}. This line of work was continued in~\cite{KLR} where nearly matching upper and lower bounds were obtained. The topology considered in~\cite{GHKSWY,KLR} is a grid-like topology, where codewords form a codimension one subspace of tensor product codes, i.e., codewords are
matrices, there is one heavy parity symbol, and each row / column constitutes a local group with one redundant symbol.

Finally, there are few other models of erasure correcting codes that provide efficient recovery in typical failure scenarios. These include regenerating codes~\cite{Dimakis_1,WTB,YB,GW} that optimize bandwidth consumed during repair rather than the number of coordinates (machines) accessed during repair; locally decodable codes~\cite{Y_now} that guarantee sub-linear time recovery of information coordinates even when a constant fraction of coordinates are erased; and SD codes~\cite{Blaum,BPSY} that correct a certain subset of failure patterns correctable by MR LRCs.

\subsection{Organization}\label{SubSec:Organization}
In Section~\ref{Sec:Prelim}, we setup our notation, give formal definitions of local reconstruction codes and maximal recoverability, and establish some basic facts about MR LRCs. In Section~\ref{Sec:LowerBound}, we present our main lower bound on the alphabet size. In Section~\ref{Sec:H2}, we introduce the determinantal identity and use it to give a construction of MR LRCs with two heavy parity symbols over fields of linear size. In Section~\ref{Sec:H3}, we get explicit MR codes over fields of cubic size. Finally, in Section~\ref{Sec:Elliptic}, we focus on the narrow case of codes with three heavy parities, one parity per local group, and local groups of size three. We introduce the machinery of elliptic curves and AP free sets and employ it to obtain maximally recoverable codes over fields of nearly linear size. We conclude by listing some open problems in Section~\ref{Sec:Open}. Appendix contains some missing proofs and proofs of the determinantal identities.

\section{Preliminaries}\label{Sec:Prelim}
We begin by summarizing few standard facts about erasure correcting codes~\cite{MS}.
\begin{itemize}
  \itemsep=-0.25ex
\item $[n,k,d]_q$ denotes a linear code (subspace) of dimension $k,$ codeword length $n,$ and Hamming distance $d$ over a field $\mathbb{F}_q.$ We often write $[n,k,d]$ or $[n,k]$ instead of $[n,k,d]_q$ when the left out parameters are not important.

\item An $[n,k,d]$ code is called Maximum Distance Separable (MDS) if $d=n-k+1.$

\item A linear $[n,k,d]_q$ code $C$ can be specified via its parity check matrix $H\in\mathbb{F}_q^{(n-k)\times n},$ where $C=\{x\in \mathbb{F}_q^n\mid H\cdot x=0\}.$ A code $C$ is MDS iff every $(n-k)\times (n-k)$ minor of $H$ is non-zero.

\item Let $C$ be an $[n,k]$ code with a parity check matrix $H\in\mathbb{F}^{(n-k)\times n}.$ Let $E$ be a subset of the coordinates of $C.$ If coordinates in $E$ are erased; then they can be recovered (corrected) iff the matrix $H$ restricted to coordinates in $E$ has full rank.
\end{itemize}
We proceed to formally define local reconstruction codes.
\begin{definition}\label{Def:LRC}
Let $r \mid n,$  $a<r,$ and $h$ be integers and $q$ be a prime power. Let $g=\frac{n}{r}.$ Assume $h\leq n-ag$ and let $k=n-ga-h.$ A linear $[n,k]$ code $C$ over a field $\mathbb{F}_q$ is an  $(n,r,h,a,q)$-LRC if for each $i\in [g],$ restricting $C$ to coordinates in $\{r(i-1)+1,\ldots,ri\},$ yields a maximum distance separable code with parameters $[r,r-a,a+1].$
\end{definition}

Let $[n]=\{1,\ldots,n\}.$ In what follows we refer to subsets $\{r(i-1)+1,\ldots,ri\}$ of the set of code coordinates $[n]$ as local groups. There are $g$ local groups and each such group has size $r.$ It is immediate from the Definition~\ref{Def:LRC} that every $(n,r,h,a,q)$-LRC admits a parity check matrix $H$ of the following form
\begin{equation}\label{fig:MRtopology}
H=
\left[
\begin{array}{c|c|c|c}
A_1 & 0 & \cdots & 0\\
\hline
0 &A_2 & \cdots & 0\\
\hline
\vdots & \vdots & \ddots & \vdots \\
\hline
0 & 0 & \cdots & A_g \\
\hline
B_1 & B_2 & \cdots & B_g \\
\end{array}
\right].
\end{equation}
Here $A_1,A_2,\cdots,A_g$ are $a\times r$ matrices over $\F_q$, $B_1,B_2,\cdots,B_g$ are $h\times r$ matrices over $\F_q.$ The rest of the matrix is filled with zeros. Every matrix $\{A_i\}_{i\in [g]}$ is a parity check matrix of an $[r,r-a,a+1]$ MDS code. The bottom $h$ rows of $H$ serve to increase the code co-dimension from $ag$ to $ag+h$. Conversely, every matrix $H$ as in~(\ref{fig:MRtopology}), where $\rk(H)=ag+h,$ and every $a\times a$ minor in each $\{A_i\}_{i\in [g]}$ is non-zero, defines an $(n,r,h,a,q)$-LRC. Note that  the bottom $h$ rows of the parity check matrix $H$ in~(\ref{fig:MRtopology}) can be chosen in any way and Definition~\ref{Def:LRC} does not impose any conditions on this. The minimum distance of a $(n,r,h,a,q)$-LRC is at most $a+h$.

\begin{definition}\label{Def:MRLRC}\footnote{Alternatively, one could define MR LRCs is as follows. Consider a matrix~(\ref{fig:MRtopology}). Each way of fixing non-zero entries in~(\ref{fig:MRtopology}) gives rise to (instantiates) a linear code. An instantiation is MR if it corrects all erasure patterns that are correctable for some other instantiation. It can be shown that under such definition and the minor technical assumption of $h\leq \frac{n}{r}\cdot (r-a)-\max\left\{\frac{n}{r},r-a\right\}$ local codes have to be MDS~\cite[Proposition 4]{GHKSWY} as required in Definition~\ref{Def:LRC}.}
Let $C$ be an arbitrary $(n,r,h,a,q)$-local reconstruction code. We say that $C$ is maximally recoverable if for any set $E\subseteq [n],$ $|E|=ga+h,$ where $E$ is obtained by selecting $a$ coordinates from each of $g$ local groups and then $h$ more coordinates arbitrarily; $E$ is correctable by the code $C.$
\end{definition}
The term maximally recoverable code is justified by the following observation (e.g.,~\cite{GHJY}): if an erasure pattern cannot be obtained via the process detailed in the Definition~\ref{Def:MRLRC}; then it cannot be corrected by any linear code whose parity check matrix has the shape~(\ref{fig:MRtopology}). Thus MR codes provide the strongest possible reliability guarantees given the locality constraints defining the shape of the parity check matrix.

Existence of MR LRCs can be established non-explicitly~\cite{GHJY} (i.e., by setting the non-zero entries in the matrix~(\ref{fig:MRtopology}) at random in a large finite field and then analyzing the properties of the resulting code). There are also multiple explicit constructions available~\cite{CK,GHJY,GYBS}. The key challenge in this line of work is to determine the minimal size of finite fields where such codes exist. In practice one is naturally mostly interested in fields of characteristic two.

\smallskip\noindent
\textbf{Notation:} We use $A\gtrsim B$ to denote $A=\Omega(B)$ and $A\lesssim B$ to denote $A=O(B)$. We use $A=O_\ell(B)$ and $A=\Omega_\ell(B)$ to denote that the hidden constants can depend on some parameter $\ell$ but independent of other parameters.

Given an $m\times n$ matrix $A$ and a subset $S\subset [m]$ of its rows and a subset $T\subset [n]$ of its columns, $A^{(S)}$ denotes the matrix formed by the rows of $A$ in $S$ and $A(T)$ denotes the matrix formed by the columns of $A$ in $T$.

\section{The lower bound}\label{Sec:LowerBound}

In this Section we prove Theorem~\ref{Th:lowerbound_mega} which gives a lower bound on the field size of maximally recoverable local reconstruction codes. We break up the proof of Theorem~\ref{Th:lowerbound_mega} into two cases based on $g\ge h$ and $g<h$ and prove the two cases in Corollary~\ref{Cor:LowerAsymptotic} and Proposition \ref{prop:LowerAsymptotic_gsmall} respectively. Though the underlying ideas in the lower bound for both the cases are very similar, the $g\ge h$ case is simpler and conveys all the main conceptual ideas. So we will prove this case first.
\subsection{Lower bound when $g\ge h$}
A code is MR if it corrects every erasure pattern that can be obtained by erasing $a$ symbols per local group, and then $h$ more. Note that if some local group carries at most $a$ erasures; then it can be immediately corrected using only the properties of the local MDS code. Thus we never need to consider erasure patterns spread across more than $h$ groups. Our lower bound does not use all the properties of MR LRCs, but only relies on code's ability to correct all patterns obtained by erasing $a+h$ elements in a single  group as well as all patterns obtained by erasing exactly $a+1$ coordinates in some $h$ local groups. Note that here we use the fact that the number of local groups $g$ is at least $h$.

The lower bound is obtained by turning a parity check matrix of an MR $(n,r,h,a,q)$-LRC into a large collection of points (of size $\approx nr^a$ when $a\le h-2$) in the projective space $\PF_q^{h-1},$ partitioned into $g$ equal parts $X_1,\ldots,X_{g}$, such that no hyperplane can intersect $h$ distinct sets in $\{X_j\}_{j\in [g]}.$ For example when $a=1$ and $h\ge 3$, the set $X_i$ is all the pairwise differences of columns of $B_i$ in (\ref{fig:MRtopology}) thought of as points in $\PF_q^{h-1}$ and so $|X_i|=\binom{r}{2}$.  In Lemma~\ref{lem:hyperplane_incidence}, we prove the size of such a collection can be at most $O(q)$ which implies the required lower bound. We will start by proving Lemma~\ref{lem:hyperplane_incidence}.

  \begin{lemma}\label{lem:hyperplane_incidence}
Let $X_1,\dots,X_g \subseteq \PF_q^d$ be mutually disjoint subsets each of size $t$ with $g\ge d+1$. If
\begin{equation}\label{Eqn:GeneralPos}
q<\left(\frac{g}{d}-1\right)t-4
\end{equation}
then there exists a hyperplane $H$ in $\PF_q^d$ which intersects $d+1$ distinct subsets among $X_1,\cdots,X_g.$
\end{lemma}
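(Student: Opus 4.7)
The plan is a second moment (Paley-Zygmund) argument on a uniformly random hyperplane $H$ in $\PF_q^d$. Setting $Z_i := \mathbbm{1}[X_i \cap H \neq \emptyset]$, the goal is to show $\E\bigl[\sum_{i=1}^g Z_i\bigr] > d$, so that some $H$ must meet at least $d+1$ distinct $X_i$'s.

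By counting hyperplanes through one, respectively two, prescribed points of $\PF_q^d$, one gets for any fixed $x$ and any distinct pair $x, y$
\[
p_1 := \Pr[x \in H] = \frac{q^d - 1}{q^{d+1} - 1}, \qquad p_2 := \Pr[x, y \in H] = \frac{q^{d-1} - 1}{q^{d+1} - 1}.
\]
By linearity $\E|X_i \cap H| = tp_1$ and $\E|X_i \cap H|^2 = tp_1 + t(t-1)p_2$, so Paley-Zygmund gives
\[
\E Z_i \;\geq\; \frac{(tp_1)^2}{tp_1 + t(t-1)p_2} \;=\; \frac{t p_1}{1 + (t-1)\,p_2/p_1}.
\]
Summing over $i$ and clearing denominators, $\E\bigl[\sum_i Z_i\bigr] > d$ is equivalent to the polynomial inequality
\[
g t (q^d - 1)^2 \;>\; d(q^{d+1}-1)\bigl[(q^d-1) + (t-1)(q^{d-1}-1)\bigr].
\]

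The dominant balance in this inequality (the coefficient of $q^{2d}$) yields the sufficient condition $gt > d(q+t-1)$, i.e., $q < (g/d - 1)t + 1$; the hypothesis $q < (g/d-1)t - 4$ then provides enough $O(1)$ slack to absorb the remaining lower-order terms in $q^{d+1}, q^d, q^{d-1}, \dots$. The main obstacle is precisely this bookkeeping: the naive estimates $p_1 \geq 1/(q+1)$ and $p_2/p_1 \leq 1/q$ are too loose when $d$ is large (since in fact $p_1 \to 1/q$ and $p_2/p_1 \to 1/q$ very rapidly), so one should work directly with the exact rational expressions. In the complementary small-$q$, large-$t$ regime, a cheaper argument suffices: the algebraic identity $(q^d - 1)^2 \geq (q^{d-1}-1)(q^{d+1}-1)$, equivalent to $q^{d-1}(q-1)^2 \geq 0$, yields $p_1^2/p_2 \geq 1$, so the exact Paley-Zygmund bound $\E Z_i \geq tp_1^2/(p_1 + (t-1)p_2)$ tends to $p_1^2/p_2 \geq 1$ as $t/q \to \infty$, forcing $\E[\sum_i Z_i] \to g \geq d+1 > d$ automatically in that regime.
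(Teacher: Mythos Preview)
Your approach is essentially identical to the paper's: choose a uniformly random hyperplane, apply Paley--Zygmund to $|X_i\cap H|$ to lower-bound $\Pr[X_i\cap H\neq\emptyset]$, and then use linearity to force the expected number of sets hit to exceed $d$. Your values of $p_1,p_2$ and the resulting inequality $gt(q^d-1)^2>d(q^{d+1}-1)[(q^d-1)+(t-1)(q^{d-1}-1)]$ are exactly what the paper derives.

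Where you diverge is in the bookkeeping. You propose to either manipulate the exact rational expressions or split into a large-$t/q$ regime; the paper instead handles everything in one stroke with two clean inequalities: (i) it bounds the Paley--Zygmund ratio by $\frac{t}{t+q}(1-1/q^d)^2$ using only $p_1\leq 1/q$, $p_2\leq 1/q^2$, and $p_1\geq (1-1/q^d)/q$; and (ii) it controls the nuisance factor $(1-1/q^d)^2$ via the observation that the disjointness of the $X_i$ forces $gt\leq |\PF_q^d|\leq (d+1)q^d$, hence $1/q^d\leq (d+1)/(gt)$. This yields directly that the expected number of sets hit exceeds $d$ whenever $q<(g/d-1)t-2(d+1)/d$, and since $2(d+1)/d\leq 4$ the stated hypothesis suffices. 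This is tidier than your proposed case split and explains exactly where the constant $4$ comes from; you may want to adopt it in place of the regime analysis.
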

\begin{proof}
We will show that a random hyperplane will intersect $d+1$ distinct subsets among $X_1,\dots,X_g$ with positive probability if $q<\left(\frac{g}{d}-1\right)t-4$. Choose a uniformly random hyperplane $H$ in $\PF_q^d$. Fix some $i\in [g]$, we will first lower bound the probability that $H$ intersects $X_i$. Let the random variable $Z=|H\cap X_i|$. Since a hyperplane contains $|\PF_q^{d-1}|$ points, $$\E[Z]=\frac{|\PF_q^{d-1}|}{|\PF_q^d|}t.$$ We can also estimate the second moment as follows:
\begin{align*}
\E[Z^2]&=\E[Z]+\sum_{p,p'\in X_i, p\ne p'}\Pr[p,p'\in H]\\
&=\E[Z]+t(t-1)\frac{|\PF_q^{d-2}|}{|\PF_q^d|}
\end{align*}
where we used the fact that the number of hyperplanes containing two fixed distinct points is $|\PF_q^{d-2}|$. Note that $|\PF_q^d|=q^d+q^{d-1}+\dots+q+1=(q^{d+1}-1)/(q-1).$
Now we can lower bound $\Pr[Z>0]$ as:
\begin{align*}
\Pr[Z>0]&\ge \frac{\E[Z]^2}{\E[Z^2]}\\
&= \frac{\frac{(q^d-1)^2t^2}{(q^{d+1}-1)^2}}{(\frac{q^d-1)t}{(q^{d+1}-1)}+\frac{t(t-1)(q^{d-1}-1)}{(q^{d+1}-1)}}\\
& \ge \frac{(t^2/q^2)(1-1/q^d)^2}{t/q+t(t-1)/q^2}\\
& \ge \frac{t/q}{1+t/q} (1-1/q^d)^2.
\end{align*}
Since $X_1,\dots,X_g$ are mutually disjoint subsets of $\PF_q^d$ of size $t$, $gt\le |\PF_q^d| \le (d+1)q^d$. Therefore 
\begin{align*}
\Pr[H\cap X_i\ne \phi]&=\Pr[Z>0]\ge \frac{t}{t+q} \left(1-\frac{2}{q^d}\right)\ge \frac{t}{t+q}\left(1-\frac{2(d+1)}{gt}\right).
\end{align*}
By linearity of expectation, a random hyperplane $H$ intersects $\ge g\cdot \frac{t}{t+q}\left(1-\frac{2(d+1)}{gt}\right)$ sets among $X_1,\dots,X_g$ in expectation. Therefore if $ \frac{gt}{(q+t)} \left(1-\frac{2(d+1)}{gt}\right) >  d$, there exists a hyperplane which intersects $d+1$ distinct subsets among $X_1,\dots,X_g$. Rearranging this inequality, such a hyperplane exists whenever $q<\left(\frac{g}{d}-1\right)t-\frac{2(d+1)}{d}$.
 \end{proof}

We are now ready to prove the lower bound. We will first prove a lower bound under the assumption that $a+2 \le h$. Later in Proposition~\ref{prop:MainParikshit}, we generalize our argument to take care of the case when $h<a+2.$ 

\begin{proposition}\label{prop:MainOur}
When $a+2\le h \le n/r$, any maximally recoverable $(n,r,h,a,q)$-local reconstruction code must have
\begin{equation}\label{Eqn:MainLowerFormOur}
q \geq \left(\frac{n/r}{h-1}-1\right) \cdot \binom{r}{a+1}-4
\end{equation}
\end{proposition}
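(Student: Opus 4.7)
The plan is to extract, from any MR $(n,r,h,a,q)$-LRC, a family $X_1,\dots,X_g\subseteq \PF_q^{h-1}$ of $g = n/r$ mutually disjoint subsets each of size $\binom{r}{a+1}$ such that no hyperplane meets $h$ distinct $X_i$'s. Since $g \ge h = (h-1)+1$ by hypothesis, plugging into Lemma~\ref{lem:hyperplane_incidence} with $d=h-1$ and $t=\binom{r}{a+1}$ then yields the bound~(\ref{Eqn:MainLowerFormOur}).

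To build the $X_i$'s, start with a parity-check matrix $H$ of the form~(\ref{fig:MRtopology}). For each group $i\in[g]$ and each $(a+1)$-subset $S$ of columns of the $i$-th group, the $a\times(a+1)$ submatrix $A_i(S)$ has rank $a$ (by the MDS property of $A_i$), so its kernel is spanned by a unique (up to scalar) vector $\alpha_{i,S}\in\F_q^{a+1}$ with all entries nonzero. Set $v_{i,S} := B_i(S)\,\alpha_{i,S} \in \F_q^h$. By design the combination $\sum_{k\in S}(\alpha_{i,S})_k \cdot (\text{column }k\text{ of }H)$ is supported on the last $h$ coordinates and equals $v_{i,S}$ there. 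Let $X_i := \{[v_{i,S}] : S\in\binom{[r]}{a+1}\} \subseteq \PF_q^{h-1}$.

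The three combinatorial properties of $\{X_i\}$ all follow by the same pattern: assume a ``bad'' projective incidence, lift it to a nontrivial $\F_q$-linear dependence among a controlled number of columns of $H$, and extend this to a full MR-correctable erasure pattern of size $ag+h$ to contradict maximal recoverability. Concretely: (i) $v_{i,S}\neq 0$, else $a+1$ columns of group $i$ are dependent, extendable to an MR pattern since $h\ge 1$; (ii) for $S\ne S'$ both inside group $i$, $[v_{i,S}]\ne[v_{i,S'}]$, since a proportionality gives a dependence supported on $|S\cup S'|\le 2(a+1)$ columns of group $i$, which extends to an MR pattern precisely because $2(a+1)\le a+h$, i.e., the hypothesis $h\ge a+2$; (iii) for $i\ne j$, $X_i\cap X_j=\emptyset$, since a proportionality between $v_{i,S}$ and $v_{j,T}$ gives a dependence on $2(a+1)$ columns spread over two groups, extendable to an MR pattern because $h\ge 2$ (padding the remaining $h-2$ erasures anywhere). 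Hence $|X_i|=\binom{r}{a+1}$ and the $X_i$'s are pairwise disjoint.

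The main step is the hyperplane condition. Suppose some hyperplane in $\PF_q^{h-1}$ meets $h$ distinct groups at points $[v_{i_1,S_1}],\ldots,[v_{i_h,S_h}]$. Then $v_{i_1,S_1},\dots,v_{i_h,S_h}$ are linearly dependent in $\F_q^h$, so there exist $c_1,\dots,c_h$ not all zero with $\sum_j c_j\, v_{i_j,S_j}=0$. Forming $\sum_j c_j \sum_{k\in S_j}(\alpha_{i_j,S_j})_k \cdot(\text{column }k)$ produces a nontrivial linear dependence (all $\alpha$-entries are nonzero, so coefficients in any group where $c_j\ne 0$ survive) among the $h(a+1)$ columns $\bigsqcup_j S_j$; augmenting with any $a$ columns from each of the remaining $g-h$ groups produces an erasure pattern of size $h(a+1)+(g-h)a=ag+h$ of exactly the MR form, contradicting maximal recoverability. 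This rules out the bad hyperplane, so Lemma~\ref{lem:hyperplane_incidence} applies and delivers~(\ref{Eqn:MainLowerFormOur}). The only real obstacle is verifying that the parameter inequalities ($h\ge 1$, $h\ge 2$, $h\ge a+2$, $g\ge h$) are exactly what make each extension step fit inside a legal MR pattern; the hypotheses of the proposition were chosen to accommodate all of them simultaneously.
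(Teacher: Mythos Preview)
Your proposal is correct and follows essentially the same approach as the paper: define the projective points $[v_{i,S}]=[B_i(S)\,A_i(S)^\perp]$, establish distinctness within and across groups (using $2(a+1)\le a+h$ exactly as the paper does in its Claim~\ref{claim:same_group}), show the hyperplane-avoidance property via the MR condition, and conclude with Lemma~\ref{lem:hyperplane_incidence}. The only cosmetic difference is that where the paper packages the ``no hyperplane hits $h$ groups'' step as a matrix identity $M_1M_2=M_3$ with $M_1$ full rank by MR, you argue the contrapositive directly by exhibiting the nontrivial dependence among columns of $H$; the two are the same argument unwound.
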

\begin{proof}
It might be helpful to the reader to think of the $a=1$ case through out the proof, as things get simpler. When $a=1$, wlog, one can assume that the entries of the matrices $A_i$ in (\ref{eqn:MRparitycheck}) (which will have only one row) are all 1's.

Consider an arbitrary maximally recoverable $(n,r,h,a,q)$-LRC $C$ with $g=\frac{n}{r}$ local groups. According to the discussion in Section~\ref{Sec:Prelim} the code $C$ admits a parity check matrix of the shape
\begin{equation}
\label{eqn:MRparitycheck}
\left[
\begin{array}{c|c|c|c}
A_1 & 0 & \cdots & 0\\
\hline
0 &A_2 & \cdots & 0\\
\hline
\vdots & \vdots & \ddots & \vdots \\
\hline
0 & 0 & \cdots & A_g \\
\hline
B_1 & B_2 & \cdots & B_g \\
\end{array}
\right].
\end{equation}
Here $A_1,A_2,\cdots,A_g$ are $a\times r$ matrices over $\F_q$, $B_1,B_2,\cdots,B_g$ are $h\times r$ matrices over $\F_q.$ The rest of the matrix is filled with zeros. Every $a\times a$ minor in each matrix $\{A_i\}_{i\in [g]}$ is non-zero. So for every subset $S\subset [r]$ of size $|S|=a+1$, $A_i(S)$ is an $a\times(a+1)$ matrix of full rank. Let $A_i(S)^\perp\in \F_q^{a+1}$ be a non-zero vector orthogonal to the row space of $A_i(S)$ i.e. $A _i(S)A_i(S)^\perp=0$. Note that $A_i(S)^\perp$ is unique upto scaling. For $i \in [g]$ and each subset $S\subseteq [r]$ of size $|S|=a+1,$ define $p_{i,S} \in \F_q^h$ as 
\footnote{When $a=1$, one can take $A_i(S)^\perp=\mattwoone{1}{-1}$ and so $p_{i,S}=B_i(j)-B_i(j')$ where $S=\{j,j'\}$; therefore $\{p_{i,S}: |S|=a+1\}$ is just the set of all pairwise differences of columns of $B_i$.}
\begin{align*}
  p_{i,S}
  = B_i(S)A_i(S)^\perp.
\end{align*}

The MR property implies that any subset of columns of the parity check matrix~(\ref{eqn:MRparitycheck}) which can be obtained by picking $a$ columns in each local group and $h$ arbitrary additional columns is full rank. We will use this property to make two claims about the vectors $\left\{p_{i,S}\right\}.$
\begin{claim}\label{claim:diff_groups}
For every distinct $\ell_1,\cdots, \ell_h\in [g]$ and subsets $S_1,\cdots,S_h\subseteq [r]$ of size $a+1$ each, the $h\times h$ matrix $\left[p_{\ell_1,S_1},\cdots, p_{\ell_h,S_h}\right]$ is full rank.
\end{claim}
\begin{proof}
Consider the following matrix equation:
\begin{equation*}
\left[
\begin{array}{c|c|c|c}
A_{\ell_1}(S_1) & 0 & \cdots & 0\\
\hline
0 &A_{\ell_2}(S_2) & \cdots & 0\\
\hline
\vdots & \vdots & \ddots & \vdots \\
\hline
0 & 0 & \cdots & A_{\ell_h}(S_h) \\
\hline
B_{\ell_1}(S_1) & B_{\ell_2}(S_2) & \cdots & B_{\ell_h}(S_h) \\
\end{array}
\right]
\left[
\begin{array}{c|c|c|c}
A_{\ell_1}(S_1)^\perp & 0 & \cdots & 0\\
\hline
0 &A_{\ell_2}(S_2)^\perp & \cdots & 0\\
\hline
\vdots & \vdots & \ddots & \vdots \\
\hline
0 & 0 & \cdots & A_{\ell_h}(S_h)^\perp \\
\end{array}
\right]
=
\left[
\begin{array}{c|c|c|c}
0 & 0 & \cdots & 0\\
\hline
0 &0 & \cdots & 0\\
\hline
\vdots & \vdots & \ddots & \vdots \\
\hline
0 & 0 & \cdots & 0 \\
\hline
p_{\ell_1,S_1} & p_{\ell_2,S_2} & \cdots & p_{\ell_h,S_h}\\
\end{array}
\right].
\end{equation*}
Let us denote the matrices which occur in the above equation as $M_1,M_2,M_3$ respectively so that the above equation becomes $M_1M_2=M_3$. By the MR property, when we erase the coordinates corresponding to $S_1,\cdots,S_h$ in groups $\ell_1,\cdots,\ell_h$ respectively, the resulting erasure pattern is correctable. This implies that $M_1$ has full rank. Also $M_2$ has full column rank because its columns are non-zero and have disjoint support. Therefore $M_3$ should have full rank which implies that $\left[p_{\ell_1,S_1},\cdots, p_{\ell_h,S_h}\right]$ is full rank.
\end{proof}
In particular the vectors $p_{i,S}$ are non-zero for every $i\in [g]$ and $S\in \binom{[r]}{a+1}$. We can also conclude that across different local groups, $p_{i,S}$ and $p_{j,T}$ are never multiples of each other when $i\ne j.$ In fact, we will now show that even in the same local group, $p_{i,S}$ and $p_{i,T}$ are not multiples of each other unless $S=T.$
\begin{claim}\label{claim:same_group}
For every $i\in [g],$ no two vectors in $\set{p_{i,S}: S\subseteq \binom{[r]}{a+1}}$ are multiples of each other.
\end{claim}
\begin{proof}
Suppose $p_{i,S}=\lambda\cdot p_{i,T}$ for some distinct sets $S,T\subset [r]$ of size $a+1$ each and some non-zero $\lambda\in \F_q.$ So,
\begin{align*}
\begin{bmatrix}
A_i(S)\\
B_i(S)\\
\end{bmatrix}
&\cdot
A_i(S)^\perp
-\lambda \cdot
\begin{bmatrix}
A_i(T)\\
B_i(T)\\
\end{bmatrix}
\cdot
A_i(T)^\perp\\
&=
\mattwoone{0}{p_{i,S}}-\lambda\cdot \mattwoone{0}{p_{i,T}}=0.
\end{align*}
Note that every coordinate of $A_i(S)^\perp$ is non-zero. If not, then it will imply a linear dependency between $a$ columns of $A_i(S)$ whereas we know that every $a\times a$ minor of $A_i(S)$ is non-zero. Thus we have a linear combination of the columns of $\mattwoone{A_i(S\cup T)}{B_i(S\cup T)}$ which is zero. Moreover the combination is non-trivial because there is some $j\in S\setminus T$ and the column $A_i(j)$ has a non-zero coefficient. However
\begin{equation}\label{Eqn:UnionSize}
|S\cup T|\le 2a+2\le a+h.
\end{equation}
By the MR property, any set of columns of the matrix $\mattwoone{A_i}{B_i}$ of size at most $a+h$ has to be full rank, as this set can be obtained by selecting (a subset of) $a$ and then $h$ more columns from the matrix~(\ref{eqn:MRparitycheck}). Thus we arrive at a contradiction that completes the proof of the claim.
\end{proof}
By Claim~\ref{claim:same_group} and the discussion above the claim, we can think of $\left\{p_{i,S}: i\in [g], S\in \binom{[r]}{a+1}\right\}$ as distinct points in $\PF_q^{h-1}.$ For brevity, from here on we assume that $p_{i,S}$ refers to the corresponding point in $\PF_q^{h-1}.$
Define sets $X_1,\cdots,X_g\subseteq \PF_q^{h-1}$ as $X_i=\left\{p_{i,S}: S \in \binom{[r]}{a+1}\right\},$ we have $|X_1|=|X_2|=\cdots=|X_g|=\binom{r}{a+1}$ and they are mutually disjoint. Also $g\ge h$ by the hypothesis. By Claim~\ref{claim:diff_groups}, there is no hyperplane in $\PF_q^{h-1}$ which contains $h$ points from distinct subsets of $X_1,\cdots, X_g.$ So applying Lemma~\ref{lem:hyperplane_incidence},
$$q \geq \left(\frac{g}{h-1}-1\right) \cdot \binom{r}{a+1}-4,$$
which concludes the proof.
\end{proof}
In the argument above we used vectors $\left\{p_{i,S}\right\},$ where $i$ varies across indices of $g$ local groups and $S$ varies across all $\binom{r}{a+1}$ subsets of $[r]$ of size $a+1.$ In the proof we relied on the condition $a+2 \leq h$ to ensure that the union of any two such sets $S$ has size at most $a+h.$

Parikshit Gopalan~\cite{Gopalan} has observed (and kindly allowed us to include his observation here) that we can generalize Proposition~\ref{prop:MainOur} to the case when $2\le h<a+2.$ To do this, in cases when $h<a+2$ we only consider sets $S$ that have size $a+1$ but are constrained to contain the set $\{1,2,\ldots,a+2-h\},$ as this ensures that pairwise unions still have size at most $a+h.$ Clearly, the total number of such sets is $\binom{r-a+h-2}{h-1}.$ The rest of the proof remains the same and yields the following
\begin{proposition}\label{prop:MainParikshit}
Assume $2\le h<a+2$ and $h\leq n/r;$ then any maximally recoverable $(n,r,h,a,q)$-local reconstruction code must have
\begin{equation}\label{Eqn:MainLowerFormOur2}
q \geq \left(\frac{n/r}{h-1}-1\right) \cdot \binom{r-a+h-2}{h-1} -4.
\end{equation}
\end{proposition}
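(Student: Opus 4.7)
The plan is to mimic the proof of Proposition~\ref{prop:MainOur} essentially verbatim, but to carefully restrict the family of sets $S \subseteq [r]$ used to define the points $p_{i,S}$ so that the union of any two of them still has size at most $a+h$. This is the only place in the previous proof where the hypothesis $a+2 \le h$ was used (in the display ~(\ref{Eqn:UnionSize}) during the proof of Claim~\ref{claim:same_group}), so fixing this single step is all that is needed.

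Concretely, fix $F = \{1, 2, \ldots, a+2-h\} \subseteq [r]$ (note $a+2-h \ge 1$ by assumption). I would work only with subsets $S \subseteq [r]$ of size $a+1$ that contain $F$; call this family $\mathcal{F}$. For any two such $S, T \in \mathcal{F}$, we have
\begin{equation}
|S \cup T| \;=\; |S| + |T| - |S \cap T| \;\le\; 2(a+1) - (a+2-h) \;=\; a+h,
\end{equation}
which restores exactly the bound used in the contradiction in Claim~\ref{claim:same_group}. The cardinality of $\mathcal{F}$ is $\binom{r - (a+2-h)}{(a+1) - (a+2-h)} = \binom{r-a+h-2}{h-1}$.

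Now, given an arbitrary MR $(n,r,h,a,q)$-LRC with parity check matrix of the form~(\ref{eqn:MRparitycheck}), define $p_{i,S} = B_i(S) A_i(S)^\perp \in \F_q^h$ for $i \in [g]$ and $S \in \mathcal{F}$. Claim~\ref{claim:diff_groups} carries over without change because its proof only used that $|S_j| = a+1$ in each group (never the pairwise-union bound). Claim~\ref{claim:same_group} carries over using the improved union bound above: if $p_{i,S} = \lambda \cdot p_{i,T}$ for distinct $S, T \in \mathcal{F}$, the same linear combination produces a non-trivial dependence among at most $a+h$ columns of $\bigl[\begin{smallmatrix} A_i \\ B_i\end{smallmatrix}\bigr]$, contradicting the MR property. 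Thus the $p_{i,S}$ can be viewed as distinct points in $\PF_q^{h-1}$, and setting $X_i = \{p_{i,S} : S \in \mathcal{F}\}$ gives $g$ mutually disjoint sets of common size $t = \binom{r-a+h-2}{h-1}$ in $\PF_q^{h-1}$ with the property that no hyperplane meets $h$ of them.

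Finally I would apply Lemma~\ref{lem:hyperplane_incidence} with $d = h-1$ (which requires $g \ge h$, guaranteed by $h \le n/r = g$) to conclude
\begin{equation}
q \;\ge\; \left(\frac{g}{h-1} - 1\right) \cdot \binom{r-a+h-2}{h-1} - 4,
\end{equation}
which is exactly the desired inequality~(\ref{Eqn:MainLowerFormOur2}). There is no real obstacle here beyond tracking the definition of $\mathcal{F}$; the only slightly delicate point is verifying that restricting to $\mathcal{F}$ does not break Claim~\ref{claim:diff_groups} (it does not, because that claim is about picking one $S_j$ from each group and never used that $\mathcal{F}$ is closed under anything, only that each $|S_j| = a+1$).
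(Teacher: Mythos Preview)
Your proposal is correct and follows exactly the paper's approach: restrict to sets $S$ of size $a+1$ containing the fixed set $\{1,\ldots,a+2-h\}$ so that pairwise unions have size at most $a+h$, count these sets as $\binom{r-a+h-2}{h-1}$, and leave the rest of the argument from Proposition~\ref{prop:MainOur} unchanged. The paper's proof sketch is in fact briefer than yours; you have filled in the details faithfully.
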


The following corollary follows immediately from Propositions~\ref{prop:MainOur} and ~\ref{prop:MainParikshit} and presents the asymptotic form of our field size lower bound when $g\ge h$.
\begin{corollary}\label{Cor:LowerAsymptotic}
Suppose that $a$ and $h\ge 2$ are arbitrary constants, but $r$ may grow with $n.$ Further suppose that $h\leq n/r.$ In every maximally recoverable $(n,r,h,a,q)$-LRC, we have:
\begin{equation}\label{Eqn:MainLowerRepeat}
q\geq \Omega_{a,h}\left(n\cdot r^{\min\{a,h-2\}}\right).
\end{equation}
\end{corollary}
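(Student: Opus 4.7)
The plan is to observe that Corollary~\ref{Cor:LowerAsymptotic} is essentially a bookkeeping exercise: the two propositions just established already give explicit (non-asymptotic) field-size bounds, and I only need to verify that in each regime of parameters, the dominant term on the right-hand side simplifies to $n \cdot r^{\min\{a,h-2\}}$ (up to constants depending on $a,h$). The natural way to organize the argument is to split on which of the two propositions applies, namely whether $h \ge a+2$ or $2 \le h < a+2$.

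First I would handle the case $h \ge a+2$, so that $\min\{a,h-2\} = a$. Here Proposition~\ref{prop:MainOur} gives
\[
q \;\ge\; \left(\frac{n/r}{h-1}-1\right)\binom{r}{a+1}-4.
\]
Since $a,h$ are fixed constants and $r \ge a+1$ (otherwise the binomial is zero and the claim is vacuous), $\binom{r}{a+1} = \Omega_a(r^{a+1})$. The hypothesis $h \le n/r$ ensures $n/r \ge h \ge 2$, and for $n/r$ sufficiently large (in terms of $h$) we have $(n/r)/(h-1)-1 = \Omega_h(n/r)$; when $n/r$ is bounded in terms of $h$, then $n$ itself is $O_h(r)$ and the stated lower bound $n r^a$ is already $O_{a,h}(r^{a+1})$, matched by the $\binom{r}{a+1}$ factor alone. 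Multiplying the two pieces yields $q \ge \Omega_{a,h}(n \cdot r^a)$.

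Next I would dispatch the case $2 \le h < a+2$, where $\min\{a,h-2\} = h-2$. Here Proposition~\ref{prop:MainParikshit} gives
\[
q \;\ge\; \left(\frac{n/r}{h-1}-1\right)\binom{r-a+h-2}{h-1}-4.
\]
With $a,h$ constant, $r-a+h-2 = r - O_{a,h}(1)$, so for $r$ larger than some constant depending on $a,h$ we have $\binom{r-a+h-2}{h-1} = \Omega_{a,h}(r^{h-1})$; for bounded $r$ the target bound is again a constant and the inequality is trivial. Combining with the same $\Omega_h(n/r)$ estimate for the first factor gives $q \ge \Omega_{a,h}(n \cdot r^{h-2})$, which is the desired bound in this regime.

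The only mild subtlety — and it is what I would treat most carefully — is justifying that the additive $-4$, the additive $-1$, and the small shifts inside the binomial coefficients do not spoil the asymptotics when $r$ or $n/r$ is small in terms of $a,h$. In each such corner case either the target quantity $n\cdot r^{\min\{a,h-2\}}$ is itself bounded by a constant depending only on $a,h$ (so the conclusion $q \ge \Omega_{a,h}(\cdot)$ is automatic for any $q \ge 2$), or the dominant term on the right-hand side of the proposition absorbs the additive correction. Handling this cleanly is really the only ``work'' in the proof; the main inequalities are already done.
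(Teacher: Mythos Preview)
Your proposal is correct and is exactly the approach the paper takes: the paper simply states that the corollary ``follows immediately from Propositions~\ref{prop:MainOur} and~\ref{prop:MainParikshit}'' without further elaboration, and what you have written is precisely the routine case split (on $h\ge a+2$ versus $2\le h<a+2$) together with the standard asymptotic simplification of the binomial and prefactor that makes this immediate. Your handling of the small-$r$ and small-$n/r$ corner cases is more careful than anything the paper spells out, but it is the natural way to justify the $\Omega_{a,h}(\cdot)$ conclusion.
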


\subsection{Lower bound when $g\le h$}
In this case, we cannot distribute the $h$ additional erasures among $h$ different local groups. Instead we will look at erasure patterns where either all the extra $h$ erasures occur in the same group or they are spread equally ($\ceil{h/g}$ or $\floor{h/g}$) in the $g$ local groups. The sets $X_1,\dots,X_g$ will now be a collection of subspaces of dimension roughly $h/g$ such that no $(h-1)$-dimensional subspace can contain a subspace each from all of $X_1,\dots,X_g$. To obtain the lower bound, we show that if $q$ is too small, a random $(h-1)$-dimensional subspace will contain a subspace from each of $X_1,\dots,X_g$ with high probability. The argument is more involved than in the $g \ge h$ case, because the subspaces inside each $X_i$ can intersect non-trivially and the analysis has to account for this carefully. We obtain the following lower bound, the proof of which appears in Section~\ref{Sec:proofof_lowerasymptotic_gsmall}.

\begin{proposition}\label{prop:LowerAsymptotic_gsmall}
Suppose that $a,g,h$ are fixed constants such that $2\le g \le h$. In every maximally recoverable $(n,r,h,a,q)$-LRC with $g$ local groups each of size $r=n/g$, we have:
\begin{equation}
q\geq \Omega_{a,h,g}\left(n^{1+\alpha}\right) \text{ where } \alpha=\frac{\min\{a,h-2\ceil{h/g}\}}{\ceil{h/g}}.
\end{equation}
\end{proposition}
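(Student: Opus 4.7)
The strategy is to extend the hyperplane incidence argument of Proposition~\ref{prop:MainOur} from points to higher-dimensional subspaces, since with $g\le h$ at least $s:=\lceil h/g\rceil$ of the $h$ extra erasures must land in a single local group. Fix a distribution $(s_1,\dots,s_g)$ with $s_i\in\{\lfloor h/g\rfloor,\lceil h/g\rceil\}$ and $\sum_i s_i=h$. For each group $i$ and each $S\in\binom{[r]}{a+s_i}$, define the $s_i$-dimensional subspace
\[
p_{i,S}:=B_i(S)\,\ker A_i(S)\;\subseteq\;\F_q^h,
\]
where the local MDS property of $A_i$ together with single-group MR guarantees $\dim p_{i,S}=s_i$. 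Replaying the block-matrix identity of Claim~\ref{claim:diff_groups} with these larger subsets shows that cross-group MR forces: for every choice of $(S_1,\dots,S_g)$ the subspaces $p_{1,S_1},\dots,p_{g,S_g}$ span $\F_q^h$, equivalently no $(h-1)$-dimensional subspace $W\subseteq\F_q^h$ contains a subspace from each of $g$ (suitably chosen) sub-families $X_1,\dots,X_g$.

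The lower bound is then derived by exhibiting a hyperplane $W$ that does contain some $V\in X_i$ for every $i$ whenever $q$ is below the claimed threshold. Taking $W$ uniformly at random and setting $Z_i:=|\{V\in X_i:V\subseteq W\}|$, a union bound reduces the task to $\Pr[Z_i=0]<1/g$ for each $i$, which follows from Paley--Zygmund provided $\E[Z_i^2]\le(1+O(1/g))\E[Z_i]^2$. The binding constraint comes from the groups with $s_i=s$, so fix one such group. The engine of the second-moment computation is the within-group dimension identity
\[
\dim(p_{i,S}+p_{i,T})\;=\;2s-\max(0,|S\cap T|-a)\qquad\text{whenever }|S\cup T|\le a+h,
\]
which follows from the MDS structure of $A_i$ (giving $\dim(\ker A_i(S)\cap\ker A_i(T))=\max(0,|S\cap T|-a)$ after the obvious zero-padding into $\F_q^{|S\cup T|}$) combined with single-group MR (giving that $B_i(S\cup T)$ is injective on $\ker A_i(S\cup T)$).

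The analysis now splits according to whether this identity covers every pair. If $a\le h-2s$, then $|S\cup T|\le 2(a+s)\le a+h$ automatically; taking $X_i=\{p_{i,S}:|S|=a+s\}$ in full and grouping pairs by $t=|S\cap T|$, the ratio of the $t=a+k$ contribution to $\E[Z_i]^2$ is $\Theta(r^{-a-k}q^{k})$ for $k\in\{0,1,\dots,s\}$, and the binding diagonal term at $k=s$ forces $q^{s}\lesssim r^{a+s}$. If $a>h-2s$, pairs with $|S\cap T|<a+2s-h$ violate $|S\cup T|\le a+h$; there MR no longer controls $\dim(p_{i,S}+p_{i,T})$, which may collapse and blow up $\E[Z_i^2]$. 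I circumvent this via a base-fixing device in the spirit of Proposition~\ref{prop:MainParikshit}: fix $B_0\subseteq[r]$ with $|B_0|=a+2s-h$ and restrict $X_i$ to subsets of the form $B_0\cup S_0$ with $S_0\subseteq[r]\setminus B_0$ of size $h-s$. Then $|S\cup T|\le 2(a+s)-(a+2s-h)=a+h$ for every pair in the restricted family, $|X_i|=\Theta(r^{h-s})$, and the same second-moment computation now yields $q^{s}\lesssim r^{h-s}$. Combining both sub-cases and using $r=n/g$ with $g$ constant, one obtains $q\ge\Omega(n^{1+\alpha})$ with $\alpha=\min\{a,h-2s\}/s$.

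The main obstacle is the second sub-case: MR genuinely does not pin down $\dim(p_{i,S}+p_{i,T})$ once $|S\cup T|>a+h$, and an adversarial MR code could make this dimension drop significantly below $2s$, rendering a naive second moment useless. The base-fixing trick restores control by confining every pair in $X_i$ to the safe regime $|S\cup T|\le a+h$, at the cost of shrinking $|X_i|$ from $\Theta(r^{a+s})$ down to $\Theta(r^{h-s})$, which is precisely what produces the weaker $(h-2s)/s$ exponent in this regime. Verifying that cross-group MR continues to apply to the restricted families (it does, since MR holds for every choice of $S_i$, in particular those containing $B_0$) and that the groups with $s_i=\lfloor h/g\rfloor<s$ satisfy their Paley--Zygmund estimates under strictly weaker constraints are then routine.
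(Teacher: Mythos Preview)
Your proposal is correct and follows essentially the same approach as the paper's proof in Appendix~A: the same subspaces $p_{i,S}=B_i(S)\ker A_i(S)$, the same cross-group span property (the paper's Claim~\ref{claim:diff_groups_2}), the same within-group intersection formula (the paper's Claim~\ref{claim:same_groups_2}, which you phrase equivalently as a sum-dimension identity), the same random-hyperplane second-moment argument, and the same base-fixing trick for the case $a>h-2\lceil h/g\rceil$ (the paper's Proposition~\ref{prop:Mainlowerbound_gsmall_alarge}). Your derivation of the intersection formula via injectivity of $B_i(S\cup T)$ on $\ker A_i(S\cup T)$ is somewhat cleaner than the paper's explicit matrix manipulation, but the underlying idea is identical.
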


\section{Maximally recoverable LRCs with $h=2$}\label{Sec:H2}
In this section we present our construction of maximally recoverable local reconstruction codes with two heavy parity symbols. Our construction relies on a determinantal identity (Lemma~\ref{lem:blockdet}) and properties of $\F_q^*$, the multiplicative group of the field $\mathbb{F}_q.$ The following identity conveniently reduces the $(ah+h)\times (ah+h)$ determinants that arise during our analysis to $h\times h$ determinants which are much easier to calculate. We will prove Lemma~\ref{lem:blockdet} in Section~\ref{Sec:determinantal}.

\begin{lemma}\label{lem:blockdet} Let $C_1,\cdots,C_h$ be $a\times (a+1)$ dimensional matrices and $D_1,\cdots,D_h$ be $h \times (a+1)$ dimensional matrices over a field and let $D_i^{(j)}$ be the $j^{th}$ row of $D_i$. Then,
\begin{align*}
&\det \left[
\begin{array}{c|c|c|c}
C_1 & 0 & \cdots & 0\\
\hline
0 &C_2 & \cdots & 0\\
\hline
\vdots& \vdots & \ddots & \vdots \\
\hline
0 & 0 & \cdots & C_h \\
\hline
D_1 & D_2 & \cdots & D_h \\
\end{array}
\right]= (-1)^{\frac{ah(h-1)}{2}}
\det\left[
\begin{matrix}
\det\mattwoone{C_1}{D_1^{(1)}}& \cdots & \det\mattwoone{C_h}{D_h^{(1)}}\\
\vdots & \ddots & \vdots \\
\det\mattwoone{C_1}{D_1^{(h)}}& \cdots & \det\mattwoone{C_h}{D_h^{(h)}}\\
\end{matrix}
\right].
\end{align*}
\end{lemma}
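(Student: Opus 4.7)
The plan is to prove the identity by combining generalized Laplace expansion on the left-hand side with cofactor expansion on the right-hand side, and matching the resulting expressions term by term.

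First, denote the $(ah+h)\times(ah+h)$ block matrix on the left by $M$ and apply generalized Laplace expansion along the top rows $R=\{1,\ldots,ah\}$. Since the rows in the $i$-th row block have support contained in the $i$-th column block, a choice of $ah$ columns $J$ yields a nonzero top minor only when it selects exactly $a$ columns from each of the $h$ column blocks (any imbalance forces one block to contribute $a$ rows with support of dimension $<a$, killing the minor). Parameterize such $J$ by the tuple of omitted indices $(t_1,\ldots,t_h)\in[a+1]^h$. The top minor then factors as $\prod_{i=1}^h \det C_i([a+1]\setminus\{t_i\})$, and the complementary $h\times h$ bottom minor equals $\det[D_1(t_1)\mid\cdots\mid D_h(t_h)]$, where $D_i(t)$ denotes the $t$-th column of $D_i$. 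Computing the Laplace sign $(-1)^{\sum R+\sum J}$ separates into a constant $(-1)^{\varepsilon_0}$ depending only on $a$ and $h$, times a factor $\prod_i(-1)^{t_i}$.

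Next, expand the right-hand side along a dual route. Apply cofactor expansion along the bottom row of each $(a+1)\times(a+1)$ matrix $\mattwoone{C_i}{D_i^{(j)}}$ to obtain
\[
\det\mattwoone{C_i}{D_i^{(j)}}=\sum_{t=1}^{a+1}(-1)^{t+a+1}D_i^{(j)}(t)\det C_i([a+1]\setminus\{t\}).
\]
Substituting into the outer $h\times h$ determinant and interchanging the order of summation, the remaining alternating sum over permutations of $[h]$ reassembles into $\det[D_1(t_1)\mid\cdots\mid D_h(t_h)]$. After pulling out signs, the RHS becomes $(-1)^{ah(h-1)/2+h(a+1)}$ times the same sum over $(t_1,\ldots,t_h)$ of $\prod_i(-1)^{t_i}\det C_i([a+1]\setminus\{t_i\})\cdot\det[D_1(t_1)\mid\cdots\mid D_h(t_h)]$ that appeared on the LHS.

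Both sides are therefore the same linear combination of the products $\prod_i\det C_i([a+1]\setminus\{t_i\})\cdot \det[D_1(t_1)\mid\cdots\mid D_h(t_h)]$, up to a global sign. The identity then reduces to the parity check $\varepsilon_0\equiv\tfrac{ah(h-1)}{2}+h(a+1)\pmod 2$, which is a routine mod-$2$ calculation: the difference of the two expressions simplifies to $2ah(ah+1)$, always even. The main obstacle I anticipate is the sign bookkeeping: both the Laplace expansion and the cofactor expansion each introduce multiple sign factors — one per omitted column, plus contributions from the permutation signatures in the outer $h\times h$ expansion and the fixed row-index sum $\sum R$ — and only after careful cancellation does the clean constant $(-1)^{ah(h-1)/2}$ emerge. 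Everything else (the Laplace reduction and the identification of the non-vanishing column choices) is routine.
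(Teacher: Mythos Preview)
Your proof is correct and takes a genuinely different route from the paper's. The paper first proves a column-block Laplace expansion (an appendix lemma), applies it to the full block matrix to get a sum over ordered partitions $S_1\sqcup\cdots\sqcup S_h$ of the bottom $h$ row-indices, and then observes that when each $|S_i|=1$ this sum is exactly the Leibniz expansion of the $h\times h$ determinant on the right. So the paper never further expands the inner $(a+1)\times(a+1)$ determinants; the matching happens at the level of permutations of $[h]$.

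You instead Laplace-expand along the top $ah$ rows, which produces a sum indexed by the omitted columns $(t_1,\ldots,t_h)\in[a+1]^h$, and then cofactor-expand each inner determinant on the right along its last row to reach the same index set. Your approach is more elementary---just the standard row-block Laplace formula plus single-row cofactor expansion---but is tailored to the case where every block contributes exactly one extra column. The paper's approach, by expanding along column blocks and summing over row partitions, immediately yields the more general identity with block widths $a+t_i$ (which the paper needs for the $h=3$ construction with a $1{+}2$ split), whereas your expansion would not deliver that generalization without further work. One minor note: the parity difference you compute actually simplifies to $ah(ah+1)$ rather than $2ah(ah+1)$, though of course both are even, so the conclusion is unaffected.
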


\begin{lemma}\label{lem:H2Preciese}
Let $r \mid n,$  $a<r$ be integers. Let $g=\frac{n}{r}.$ Assume that $n-ga-2$ is positive. Suppose $q$ is a prime power such that there exists a subgroup of $\F_q^*$ of size at least $r$ and with at least $n/r$ cosets; then there exists an explicit maximally recoverable $(n,r,h=2,a,q)$-local reconstruction code.
\end{lemma}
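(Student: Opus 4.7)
The plan is to write down the parity-check matrix~(\ref{fig:MRtopology}) explicitly using the subgroup structure. Let $H\le\F_q^*$ be a subgroup with $|H|\ge r$ and $[\F_q^*:H]\ge g=n/r$. I would fix distinct $h_1,\dots,h_r\in H$ and coset representatives $\alpha_1,\dots,\alpha_g\in\F_q^*$ of $g$ distinct cosets of $H$, and define
$$A_i[k,j]=h_j^{k}\ \ (k=1,\dots,a),\qquad B_i^{(1)}[j]=1,\qquad B_i^{(2)}[j]=\alpha_i h_j^{a+1}.$$
Note that the $A_i$'s and the top heavy row $B_i^{(1)}$ do not depend on $i$; only the bottom heavy row carries the group-specific scalar $\alpha_i$ together with the higher Vandermonde exponent. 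Once this matrix is fixed, the rest of the proof is verification of the MR property.

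For each $i$, the stacked matrix $[A_i;B_i]$ is, up to scaling its last row by $\alpha_i$, a Vandermonde matrix with exponents $0,1,\dots,a+1$ at the distinct evaluation points $h_1,\dots,h_r$, so every $(a+2)\times(a+2)$ minor is a nonzero Vandermonde product. This takes care of the local-MDS requirement of Definition~\ref{Def:LRC}. For the MR condition, if both ``extra'' erasures fall inside a single local group $i$, a block elimination against the invertible $a\times a$ minors $A_l(C_l)$ for $l\ne i$ (using row operations to zero the $B_l(C_l)$ entries) leaves precisely one such local $(a+2)\times(a+2)$ Vandermonde minor, hence nonzero.

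The interesting case is when the two extras lie in distinct groups $i\ne j$: group $i$ has $a{+}1$ selected columns $S$, group $j$ has $a{+}1$ selected columns $T$, and each other group has exactly $a$. The same elimination reduces the admissible minor to $\prod_{l\ne i,j}\det A_l(C_l)$ times the $(2a+2)\times(2a+2)$ determinant of
$$\begin{pmatrix}A_i(S)&0\\ 0&A_j(T)\\ B_i(S)&B_j(T)\end{pmatrix}.$$
I would then invoke Lemma~\ref{lem:blockdet} with $h=2$ to collapse this into a $2\times 2$ determinant of ``small'' determinants $\det[A_*(\cdot);B_*^{(k)}(\cdot)]$. The top-row entries correspond to the exponent set $\{0,1,\dots,a\}$ and equal (up to sign) the Vandermondes $V_S=V(h_s:s\in S)$ and $V_T$. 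The bottom-row entries correspond to the exponent set $\{1,2,\dots,a+1\}$; factoring one $h_s$ from each column reduces them to $\alpha_i\pi_S V_S$ and $\alpha_j\pi_T V_T$ with $\pi_S=\prod_{s\in S}h_s$. The $2\times 2$ determinant then simplifies to $\pm V_S V_T(\alpha_j\pi_T-\alpha_i\pi_S)$.

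Non-vanishing reduces to the condition $\alpha_i\pi_S\ne\alpha_j\pi_T$. Since $\pi_S,\pi_T$ are products of elements of the subgroup $H$, the ratio $\pi_S/\pi_T$ lies in $H$; but $\alpha_j/\alpha_i\notin H$ because $\alpha_i,\alpha_j$ lie in different cosets, so the two sides cannot coincide. The main obstacle, I expect, will be pinpointing the right exponent pattern for $A_i$ and $B_i$: the naive Vandermonde extension (e.g.\ $A_i$ with exponents $0,\dots,a-1$ and $B_i$ with exponents $a,a+1$) produces a cross-group non-vanishing condition of the form $\alpha_i\sigma_S\ne\alpha_j\sigma_T$ where $\sigma_S=\sum_{s\in S}h_s$, and such additive subset sums can vanish or escape $H$, breaking the argument. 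Shifting the exponent set up by one so that the last heavy row pulls out a \emph{product} $\pi_S$ rather than a sum is the key conceptual step that lets the multiplicative closure of $H$ finish the proof with only a linear-size field.
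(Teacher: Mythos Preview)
Your proof is correct and follows essentially the same approach as the paper: the paper uses the identical Vandermonde-type $A_i$ with exponents $1,\dots,a$, applies Lemma~\ref{lem:blockdet} to collapse the $(2a{+}2)\times(2a{+}2)$ cross-group minor to a $2\times 2$ determinant, and finishes with the same coset argument that $\pi_S/\pi_T\in H$ while the ratio of coset representatives is not. The only cosmetic difference is that the paper places the coset representative $\lambda_i$ in the \emph{first} heavy row (with the $(a{+}1)$-st power in the second, group-independent, row), whereas you put $1$ in the first heavy row and fold the coset representative into the second row as $\alpha_i h_j^{a+1}$; the resulting $2\times 2$ determinants differ only by which row carries the scalar, and the non-vanishing criterion is the same.
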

\begin{proof}
Let $G\subset \F_q^*$ be the multiplicative subgroup from the statement of the Lemma. Let $\alpha_1,\alpha_2,\cdots,\alpha_r \in G$ be distinct elements from $G$ and let $\lambda_1,\lambda_2,\cdots,\lambda_g \in \F_q^*$ be elements from distinct cosets of $G$. We specify our code via a parity check matrix of the form~(\ref{fig:MRtopology}). For $i\in [g],$ we choose matrices $\{A_i\}$ and $\{B_i\}$ as:
\begin{equation*}
A_i=
\left[
\begin{matrix}
\alpha_1 & \alpha_2 & \cdots & \alpha_r\\
\alpha_1^2 & \alpha_2^2 & \cdots & \alpha_r^2\\
\vdots&\vdots &\ddots & \vdots\\
\alpha_1^a & \alpha_2^a & \cdots & \alpha_r^a\\
\end{matrix}
\right];\quad B_i=
\left[
\begin{matrix}
\lambda_i & \lambda_i & \cdots & \lambda_i\\
\alpha_1^{a+1} & \alpha_2^{a+1} & \cdots & \alpha_r^{a+1}\\
\end{matrix}
\right].
\end{equation*}
Suppose that we have $a$ erasures per local group and two more. We can easily correct the coordinates corresponding to local groups which have at most $a$ erasures in them. This is because every matrix $A_i$ is a Vandermonde matrix and all its $a \times a$ minors are non-zero. Now we are left with two cases:

\medskip \noindent
\textbf{Case 1:} Both the extra erasures occurred in the same local group. Say, the $i^{th}$ local group. In this case, we can correct the erased coordinates because any $(a+2)\times (a+2)$ minor of $\left[\begin{matrix}
A_i \\ B_i\\
  \end{matrix}\right]$ (which is a Vandermonde matrix after scaling and permuting rows) is non-zero.

\medskip\noindent
\textbf{Case 2:} The two extra erasures occur in different groups say groups $\ell$ and $\ell'$, so we are left with two groups with $a+1$ erasures in each. Let $S$ be the columns erased in group $\ell$ and let $S'$ be the columns erased in group $\ell'$. We want to argue that the following $(2a+2)\times (2a+2)$ submatrix is full rank:
\begin{equation}
M=\left[
\begin{array}{c|c}
A_\ell(S)&0\\
\hline
0&A_{\ell'}(S')\\
\hline
B_\ell(S)& B_{\ell'}(S')\\
\end{array}
\right].
\end{equation}
Let $S=\set{\gamma_1,\gamma_2,\cdots,\gamma_{a+1}}$ and $S'=\set{\gamma'_1,\gamma'_2,\cdots,\gamma'_{a+1}}$, then
by Lemma~\ref{lem:blockdet},
\begin{align*}
\det(M)=0 &\iff
\det \left[
\begin{array}{cc}
\det\mattwoone{A_\ell(S)}{B_\ell(S)^{(1)}}&\det\mattwoone{A_{\ell'}(S')}{B_{\ell'}(S')^{(1)}}\\
\det\mattwoone{A_\ell(S)}{B_\ell(S)^{(2)}}&\det\mattwoone{A_{\ell'}(S')}{B_{\ell'}(S')^{(2)}}\\
\end{array}
\right]=0\\
&\iff
\det\left[
\begin{array}{cc}
\det\left(\begin{matrix}
\gamma_1 & \cdots & \gamma_{a+1}\\
\gamma_1^2 & \cdots & \gamma_{a+1}^2\\
\vdots& \ddots & \vdots \\
\gamma_1^a & \cdots & \gamma_{a+1}^a\\
\lambda_\ell & \cdots & \lambda_\ell\\
\end{matrix}\right)
&
\det\left(\begin{matrix}
\gamma'_1 & \cdots & \gamma'_{a+1}\\
(\gamma'_1)^2 & \cdots & (\gamma'_{a+1})^2\\
\vdots& \ddots & \vdots \\
(\gamma'_1)^a & \cdots & (\gamma'_{a+1})^a\\
\lambda_{\ell'} & \cdots & \lambda_{\ell'}\\
\end{matrix} \right)\\
\det\left(\begin{matrix}
\gamma_1 & \cdots & \gamma_{a+1}\\
\gamma_1^2 & \cdots & \gamma_{a+1}^2\\
\vdots& \ddots & \vdots \\
\gamma_1^a & \cdots & \gamma_{a+1}^a\\
\gamma_1^{a+1} & \cdots & \gamma_{a+1}^{a+1}\\
\end{matrix} \right)
&
\det\left(\begin{matrix}
\gamma_1' & \cdots & \gamma'_{a+1}\\
{\gamma_1'}^2 & \cdots & (\gamma'_{a+1})^2\\
\vdots& \ddots & \vdots \\
{\gamma_1'}^a & \cdots & (\gamma'_{a+1})^a\\
{\gamma_1'}^{a+1} & \cdots & (\gamma'_{a+1})^{a+1}\\
\end{matrix}\right)\\
\end{array}
\right]=0\\
&\iff \det\left[
\begin{matrix}
\lambda_\ell & \lambda_{\ell'}\\
\prod_{i\in [a+1]} \gamma_i & \prod_{i\in [a+1]} \gamma'_i\\
\end{matrix}
\right]=0
 \end{align*}
 where we factored out the (non-zero) Vandermonde determinant from each column.
Since $\gamma_i, \gamma'_i \in G$ and $\lambda_\ell, \lambda_{\ell'}$ are in different cosets of $G$, the last determinant is not zero.
\end{proof}

In Lemma~\ref{lem:H2Preciese}, given $n$ and $r$ such that $r\mid n$, we want to find a small field $\F_q$ such that $\F_q^*$ contains a subgroup of size at least $r$ and with at least $n/r$ cosets. For example, if $n+1$ is a prime power, then we can take $q=n+1$. The following lemma shows that one can always find such a field of size $q=O(n)$. We prove it in Section~\ref{Sec:subgroup_lemma}.

\begin{lemma}\label{lem:subgroup}
Let $r,n$ be some positive integers with $r\le n$. Then there exists a finite field $\F_q$ with $q=O(n)$ such that the multiplicative group $\F_q^*$ contains a subgroup of size at least $r$ and with at least $n/r$ cosets. If additionally we require that the field has characteristic two, then such a field exists with $q=n\cdot \exp(O(\sqrt{\log n})).$
\end{lemma}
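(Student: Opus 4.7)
The plan is to reduce, in both cases of the lemma, to finding a prime power $q$ together with a divisor $d$ of $q-1$ satisfying $d \ge r$ and $(q-1)/d \ge n/r$. This reduction is immediate because $\F_q^*$ is cyclic of order $q-1$, so its subgroups are in bijection with the divisors of $q-1$, and the two stated conditions translate directly into $d \ge r$ and $(q-1)/d \ge n/r$. Equivalently, I seek a prime power $q \ge n+1$ whose predecessor $q-1$ admits a divisor in the interval $[r,\, r(q-1)/n]$.

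For the general characteristic bound $q = O(n)$, my plan is a case analysis on $s := \lceil n/r \rceil$. When $s \le \log n$ (equivalently $r \ge n/\log n$), I would invoke the Siegel--Walfisz theorem with modulus $s$ to obtain a prime $p \equiv 1 \pmod s$ in $[n+1, Cn]$ for an absolute constant $C$, and take $q = p$ together with the subgroup of $\F_p^*$ of index $s$, which has size $(p-1)/s \ge r$ and exactly $s \ge n/r$ cosets. The symmetric range (small $r$, roughly $r \le (\log n)^A$) is handled by the same theorem with modulus $r$ instead of $s$. Any remaining intermediate range is handled by a Bertrand-based prime-power construction $q = \ell^k$ with $\ell \in [r+1, 2r]$ prime and $\F_\ell^* \subset \F_{\ell^k}^*$ as the subgroup, with $k$ chosen minimal so that $(\ell^k-1)/(\ell-1) \ge n/r$, combined with an optimization over $\ell$ (or averaging over $\ell$) to keep $q = O(n)$ uniformly.

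For the characteristic two bound, my plan is to set $q = 2^m$ and use the subfield subgroup $\F_{2^\ell}^* \subset \F_{2^m}^*$ of order $2^\ell - 1$, where $\ell \mid m$. Writing $L := \lceil \log_2(r+1) \rceil$ and $M := \lceil \log_2 n \rceil$, the constraints $2^\ell - 1 \ge r$ and $(2^m-1)/(2^\ell-1) \ge n/r$ become $\ell \ge L$, $\ell \mid m$, and (up to lower-order terms) $m - \ell \ge M - L$; the target $q = n \cdot \exp(O(\sqrt{\log n}))$ translates to $m \le M + O(\sqrt M)$. This reduces to a purely combinatorial divisor-in-short-window claim: for every $1 \le L \le M$ there exist integers $\ell \mid m$ with $\ell \ge L$, $m - \ell \ge M - L$, and $m \le M + O(\sqrt M)$. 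I would establish this by a case split. For $L \le \sqrt M$, pick an integer $\ell \in [L, L + \sqrt M]$ and take $m$ to be the smallest multiple of $\ell$ in $[M, M+\ell]$; since $\ell = O(\sqrt M)$, this yields $m \le M + O(\sqrt M)$. For $L > \sqrt M$, the complementary quotient $k := m/\ell$ is forced to satisfy $k \le O(\sqrt M)$, and I would search over small integer values of $k$ to find one whose corresponding $\ell = m/k$ lies in $[L, L + O(\sqrt M)]$; an elementary length calculation shows that the admissible range of $k$ contains an integer in the regime $\sqrt M < L \le M$.

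The main obstacle I anticipate is the number-theoretic bookkeeping in the intermediate regimes. For the general characteristic case, the range $r \in [(\log n)^A, n/\log n]$ lies beyond the reach of Siegel--Walfisz and may require either Bombieri--Vinogradov-style averaging over moduli or a careful combination of the direct and subfield constructions to maintain $q = O(n)$ uniformly. For the characteristic two case, patching the two construction regimes near $L \approx \sqrt M$ must be done cleanly, and the sharp slack $O(\sqrt{\log n})$ relies on the elementary fact that any interval of length $\sqrt M$ contains a multiple of every positive integer $\ell \le \sqrt M$.
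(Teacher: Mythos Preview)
Your reduction to divisors of $q-1$ is correct, and for the general-characteristic bound your instinct that Bombieri--Vinogradov-type averaging is what ultimately closes the argument is right. The paper does exactly this, without your case split: writing $\{a,b\} = \{r,\lceil n/r\rceil\}$ with $a \le b$, it uses a Bombieri--Friedlander--Iwaniec estimate to show that some modulus $m \in [a,2a]$ (note $2a \le \sqrt{4ab}$, so these moduli are in range) admits a prime $p \equiv 1 \pmod m$ in $(4ab,8ab]$. Your Siegel--Walfisz cases only handle moduli up to $(\log n)^{O(1)}$, and your prime-power fallback $q = \ell^k$ with a prime $\ell \in [r,2r]$ gives only $q \le \ell^2 n/r \le 4rn$ in the worst case; the optimization over $\ell$ helps only when varying $\ell$ over $[r,2r]$ moves $\log_\ell(n/r)$ by at least $1$, and this variation is $\Theta(\log n/(\log r)^2)$, so you need $r \le \exp(O(\sqrt{\log n}))$. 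The intermediate range is therefore genuinely uncovered by your stated constructions, though you correctly name the remedy.

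Your characteristic-two plan has a hard gap, not a bookkeeping issue. Restricting to subfield subgroups $\F_{2^\ell}^* \subset \F_{2^m}^*$ with $\ell \mid m$ forces $m \ge 2\ell$ whenever at least two cosets are needed. Taking $r = n/2$ (so $L = M-1$) then gives $m \ge 2(M-1)$ and $q \ge n^{2-o(1)}$; more generally, for any constant $\epsilon > 0$ and $r \in [n^{1/2+\epsilon}, n/2]$ your construction yields $q \ge r^2 \ge n^{1+2\epsilon}$, so the claimed ``elementary length calculation'' for $L > \sqrt M$ is simply false in this regime. The paper instead exploits the much richer factorization $x^{2^m}-1 = (x-1)\prod_{i=0}^{m-1}(1+x^{2^i})$ with $x = 2^\ell$: for any $\alpha \in [0,2^m-1]$ with binary expansion $\alpha = \sum_{i\in S}2^i$, the product $A = \prod_{i\in S}(1+x^{2^i})$ is a divisor of $q-1$ of size $\approx x^\alpha$. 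Choosing $\ell$ and $2^m$ both $\Theta(\sqrt{\log n})$ (so $q = x^{2^m}$) gives divisors of $q-1$ at multiplicative granularity $x = \exp(\Theta(\sqrt{\log n}))$ with $q = n\cdot\exp(O(\sqrt{\log n}))$, which is exactly what is needed.
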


\noindent Combining Lemma~\ref{lem:subgroup} with Lemma~\ref{lem:H2Preciese} gives the following theorem.

\begin{theorem}\label{Th:H2Main}
Let $r \mid n,$  $a<r$ be integers. Let $g=\frac{n}{r}.$ Assume that $n-ga-2$ is positive. Then there exists an explicit maximally recoverable $(n,r,h=2,a,q)$-local reconstruction code with $q=O(n).$ If we require the field to be of characteristic 2, such a code exists with $q \le n\cdot\exp(O(\sqrt{\log n})).$
\end{theorem}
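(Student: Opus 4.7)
The plan is to obtain Theorem~\ref{Th:H2Main} as a direct combination of the two building blocks already established in this section. Lemma~\ref{lem:H2Preciese} gives an explicit construction of a maximally recoverable $(n,r,h=2,a,q)$-LRC under the sole field-theoretic hypothesis that $\F_q^*$ contains a subgroup of size at least $r$ with at least $n/r$ cosets (all other hypotheses, namely $r\mid n$, $a<r$, and $n-ga-2>0$, are exactly the hypotheses of Theorem~\ref{Th:H2Main}). Lemma~\ref{lem:subgroup} in turn shows that such a field always exists with $q=O(n)$, and with $q=n\cdot\exp(O(\sqrt{\log n}))$ if we additionally insist that $\mathrm{char}(\F_q)=2$. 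So the proof is essentially an assembly step.

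Concretely, I would proceed as follows. First, given the input parameters $n,r,a$ of Theorem~\ref{Th:H2Main}, invoke Lemma~\ref{lem:subgroup} with these $r$ and $n$; this produces a prime power $q=O(n)$ together with an explicit subgroup $G\le \F_q^*$ of size at least $r$ whose index in $\F_q^*$ is at least $n/r$. Second, feed this field $\F_q$ (and the subgroup $G$) into Lemma~\ref{lem:H2Preciese}: pick any $r$ distinct elements $\alpha_1,\dots,\alpha_r$ of $G$ and any $g=n/r$ coset representatives $\lambda_1,\dots,\lambda_g$ of $G$ in $\F_q^*$, and form the parity check matrix $H$ with the block structure in~(\ref{fig:MRtopology}) whose blocks $A_i,B_i$ are as specified in the proof of Lemma~\ref{lem:H2Preciese}. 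By the conclusion of Lemma~\ref{lem:H2Preciese} the resulting code is maximally recoverable, giving the desired $(n,r,2,a,q)$-LRC with $q=O(n)$.

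For the characteristic~$2$ clause, the argument is identical, except that in the first step I invoke the second half of Lemma~\ref{lem:subgroup}, which produces a field of characteristic $2$ with $q\le n\cdot\exp(O(\sqrt{\log n}))$ still containing a subgroup with the same index/size properties. Plugging this field into Lemma~\ref{lem:H2Preciese} then yields an explicit MR LRC over a field of characteristic~$2$ and size $n\cdot\exp(O(\sqrt{\log n}))$.

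There is no real obstacle to overcome here: the two lemmas have been set up precisely so that their interfaces match. The only things I would double-check in writing out the proof are that (a)~the hypothesis $n-ga-2>0$ of Lemma~\ref{lem:H2Preciese} is identical to the hypothesis of Theorem~\ref{Th:H2Main} (so that the dimension is strictly positive and the code is non-trivial), and (b)~Lemma~\ref{lem:subgroup} produces the field and subgroup explicitly, so the overall construction is explicit as claimed. All genuine content — the determinantal identity (Lemma~\ref{lem:blockdet}), the Vandermonde/coset verification in Lemma~\ref{lem:H2Preciese}, and the number-theoretic search for a small field in Lemma~\ref{lem:subgroup} — has been isolated into the earlier statements, so the proof of Theorem~\ref{Th:H2Main} itself is a one-line composition.
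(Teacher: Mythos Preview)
Your proposal is correct and matches the paper's approach exactly: the paper's proof of Theorem~\ref{Th:H2Main} is literally the one-line statement ``Combining Lemma~\ref{lem:subgroup} with Lemma~\ref{lem:H2Preciese} gives the following theorem.'' Your write-up just unpacks that combination explicitly, and the checks you flag (matching hypotheses, explicitness) are the only verifications needed.
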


\section{Maximally recoverable LRCs with $h=3$}\label{Sec:H3}
In this section, we present our construction of maximally recoverable local reconstruction codes with three heavy parity symbols. Our construction extends the ideas in the construction of Section~\ref{Sec:H2} using field extensions. In addition to the determinantal identity~\ref{lem:blockdet}, we will need the following identity which follows immediately from Lemma~\ref{lem:blockdet_gen}.

\begin{lemma}\label{lem:blockdet21} Let $C_1$ be an $a\times (a+1)$ matrix, $C_2$ be an $a\times (a+2)$ matrix, $D_1$ be a $3 \times (a+1)$ matrix and $D_2$ be a $3\times (a+2)$ matrix and let  $D_i^{(j)}$ be the $j^{th}$ row of $D_i$. Then,
\begin{align*}
\det \left[
\begin{array}{c|c}
C_1 & 0 \\
\hline
0 &C_2\\
\hline
D_1 & D_2 \\
\end{array}
\right]
=0
\iff 
\det\mattwoone{C_1}{D_1^{(1)}} \cdot \det\matthreeone{C_2}{D_2^{(2)}}{D_2^{(3)}}
- \det\mattwoone{C_1}{D_1^{(2)}} \cdot \det\matthreeone{C_2}{D_2^{(1)}}{D_2^{(3)}} + \det\mattwoone{C_1}{D_1^{(3)}}\cdot \det\matthreeone{C_2}{D_2^{(1)}}{D_2^{(2)}}
=0.
\end{align*}
\end{lemma}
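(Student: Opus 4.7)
The plan is to evaluate the determinant of the $(2a+3)\times(2a+3)$ matrix $M$ in the statement by the generalized Laplace expansion along its first $a+1$ columns, and then to use the two zero blocks to collapse the sum to exactly three surviving terms whose combination is the claimed alternating sum.

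First, I would view $M$ as having two column blocks: a left block $L$ of $a+1$ columns with $C_1$ on top, zeros in the middle $a$ rows, and $D_1$ at the bottom; and a right block $R$ of $a+2$ columns with zeros on top, $C_2$ in the middle, and $D_2$ at the bottom. Laplace expansion along $L$ gives
\begin{equation*}
\det(M)=\sum_{\sigma}\varepsilon(\sigma)\,\det(L_\sigma)\,\det(R_{\bar{\sigma}}),
\end{equation*}
where $\sigma$ ranges over $(a+1)$-subsets of $[2a+3]$, $\bar{\sigma}$ is its complement, and $\varepsilon(\sigma)=\pm 1$ is the standard Laplace sign.

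Next I would prune this sum using the zero patterns. The condition $\det(L_\sigma)\ne 0$ forces $\sigma\subseteq\{1,\dots,a\}\cup\{2a+1,2a+2,2a+3\}$, while $\det(R_{\bar{\sigma}})\ne 0$ forces $\{1,\dots,a\}\cap\bar{\sigma}=\emptyset$, i.e.\ $\{1,\dots,a\}\subseteq\sigma$; combined with $|\sigma|=a+1$ this pins $\sigma$ down to $\{1,\dots,a,2a+j\}$ for a unique $j\in\{1,2,3\}$. For each such $\sigma$ the two minors are exactly $\det\mattwoone{C_1}{D_1^{(j)}}$ and $\det\matthreeone{C_2}{D_2^{(k)}}{D_2^{(l)}}$, where $\{j,k,l\}=\{1,2,3\}$ and $k<l$, producing the three products that appear on the right-hand side of the lemma.

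Finally, I would compute the three signs $\varepsilon(\sigma)$ using the closed form $(-1)^{\sum_{i\in\sigma}i+\binom{a+2}{2}}$. A short arithmetic check shows that as $j$ runs through $1,2,3$ the signs share a common factor $(-1)^{a}$ and follow the pattern $+,-,+$, so $\det(M)$ equals $(-1)^{a}$ times the bracketed alternating sum in the lemma, which gives the stated equivalence since $(-1)^{a}\ne 0$. The only mildly delicate step is this sign bookkeeping; the reduction to three surviving terms is forced entirely by the block structure, and the identity can be viewed as the $h=3$ specialization of Lemma~\ref{lem:blockdet_gen}, paralleling the role of Lemma~\ref{lem:blockdet} for full blocks of equal column-widths.
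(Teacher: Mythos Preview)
Your proof is correct and is essentially the same argument the paper uses: the paper states that the lemma ``follows immediately from Lemma~\ref{lem:blockdet_gen}'', and that general lemma is itself established by exactly the Laplace expansion and zero-block pruning you carry out here for the special case $\ell=2$, $t_1=1$, $t_2=2$. Your sign computation $(-1)^{\|\sigma\|+\binom{a+2}{2}}$ yields the pattern $(-1)^a\cdot(+,-,+)$, matching the overall factor $(-1)^{a(\sum_i t_i(\ell-i))}=(-1)^a$ in Lemma~\ref{lem:blockdet_gen}, so the two routes agree on the nose.
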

Our construction is based on Cauchy matrices, so we will also need the the following lemma about the determinants of such matrices.
\begin{lemma}(\cite{LN})\label{lem:cauchy}
 Let $\alpha_1,\cdots,\alpha_m, \beta_1,\cdots, \beta_m\in \F_q$ be all distinct; then
 \begin{align*}
 \det
 &\begin{bmatrix}
 \frac{1}{\alpha_1-\beta_1} &  \frac{1}{\alpha_2-\beta_1}& \cdots & \frac{1}{\alpha_m-\beta_1}\\
  \frac{1}{\alpha_1-\beta_2} &  \frac{1}{\alpha_2-\beta_2}& \cdots & \frac{1}{\alpha_m-\beta_2}\\
\vdots & \vdots&\ddots & \vdots\\
\frac{1}{\alpha_1-\beta_m} &  \frac{1}{\alpha_2-\beta_m}& \cdots & \frac{1}{\alpha_m-\beta_m}\\
 \end{bmatrix}=
 \frac{\prod_{i>j} (\alpha_i-\alpha_j)(\beta_j-\beta_i)}{\prod_{i,j}(\alpha_i-\beta_j)}.
 \end{align*}
 \end{lemma}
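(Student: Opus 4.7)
The plan is to prove the Cauchy determinant identity by induction on $m$, performing row and column operations to reduce to a smaller Cauchy determinant. The base case $m=1$ is immediate since both sides equal $1/(\alpha_1 - \beta_1)$.

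For the inductive step, first subtract the last row from every other row. The $(i,j)$ entry for $i < m$ becomes
\begin{equation*}
\frac{1}{\alpha_j - \beta_i} - \frac{1}{\alpha_j - \beta_m} = \frac{\beta_i - \beta_m}{(\alpha_j - \beta_i)(\alpha_j - \beta_m)},
\end{equation*}
while the last row is unchanged. Factoring $(\beta_i - \beta_m)$ out of each row $i < m$ and $1/(\alpha_j - \beta_m)$ out of each column $j$, the matrix becomes one whose last row is the all-ones vector and whose top $(m-1) \times m$ block is again a Cauchy-type matrix with entries $1/(\alpha_j - \beta_i)$ for $i < m$.

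Next, subtract the last column from every other column. In the last row this produces zeros in the first $m-1$ entries and a $1$ in the last entry; in rows $i < m$ the $(i,j)$ entry with $j < m$ becomes
\begin{equation*}
\frac{1}{\alpha_j - \beta_i} - \frac{1}{\alpha_m - \beta_i} = \frac{\alpha_m - \alpha_j}{(\alpha_j - \beta_i)(\alpha_m - \beta_i)}.
\end{equation*}
Factor $(\alpha_m - \alpha_j)$ out of each column $j < m$ and $1/(\alpha_m - \beta_i)$ out of each row $i < m$. Cofactor expansion along the last row (which now equals $(0,\ldots,0,1)$) reduces the determinant to the $(m-1) \times (m-1)$ Cauchy determinant in the variables $\alpha_1,\ldots,\alpha_{m-1}$ and $\beta_1,\ldots,\beta_{m-1}$. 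The main work of the proof is bookkeeping: collecting the scalar factors $\prod_{i<m}(\beta_i - \beta_m)$, $\prod_{j<m}(\alpha_m - \alpha_j)$, $\prod_{i<m} 1/(\alpha_m - \beta_i)$, and $\prod_j 1/(\alpha_j - \beta_m)$ pulled out along the way, combining them with the inductive formula, and verifying that the result matches
\begin{equation*}
\frac{\prod_{i>j}(\alpha_i - \alpha_j)(\beta_j - \beta_i)}{\prod_{i,j}(\alpha_i - \beta_j)}.
\end{equation*}

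The only step that is not entirely mechanical is checking that the factors assemble correctly, but this is routine: the new $\alpha_m$-related factors fill in the $i = m$ or $j = m$ terms missing from the inductive hypothesis. An alternative and arguably more elegant approach is to clear denominators and view $\det(M)\cdot \prod_{i,j}(\alpha_i - \beta_j)$ as a polynomial of the appropriate degree in the $\alpha$'s and $\beta$'s, argue by antisymmetry that it is divisible by both Vandermonde factors $\prod_{i>j}(\alpha_i - \alpha_j)$ and $\prod_{i>j}(\beta_j - \beta_i)$, and fix the leading constant by specialization; I would mention this only as a remark since the inductive route gives the identity directly.
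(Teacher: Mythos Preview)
Your proof is correct and is the standard inductive argument for the Cauchy determinant. Note, however, that the paper does not actually supply a proof of this lemma: it is stated with a citation to \cite{LN} and used as a black box in the $h=3$ construction. So there is nothing in the paper to compare your argument against; what you have written is a self-contained proof of a classical identity that the authors simply quote.
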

 Matrices of the above form are called Cauchy matrices. Every minor of a Cauchy matrix is non-zero because square submatrices of a Cauchy matrix are also Cauchy matrices. We are now ready to present the construction for three global parities.
\begin{lemma}\label{lem:H3Precise}
Let $r \mid n,$  $a<r$ be integers. Let $g=\frac{n}{r}$. Assume that $n-ga-3$ is positive. Suppose $q_0\ge 2r+3$ is a prime power such that there exists a subgroup of $\F_{q_0}^*$ of size at least $r+2$ and with at least $n/r$ cosets. Then there exists an explicit maximally recoverable $(n,r,h=3,a,q=q_0^3)$-local reconstruction code.
\end{lemma}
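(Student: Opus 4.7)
The plan is to extend the $h=2$ construction of Lemma~\ref{lem:H2Preciese} to three heavy parities by working over $\F_q=\F_{q_0^3}$, using a parity-check matrix of the form~(\ref{fig:MRtopology}) in which $A_i$ is a fixed (group-independent) matrix supported on $r$ distinct evaluation points $\alpha_1,\ldots,\alpha_r\in G\subseteq \F_{q_0}^*$, and $B_i$ is a group-dependent block combining the coset labels $\lambda_1,\ldots,\lambda_g$ (drawn from pairwise distinct cosets of $G$, exactly as in the $h=2$ case) with Cauchy-type data $\{1/(\alpha_j-\mu_i^{(k)})\}_{j,k}$, where the ``poles'' $\mu_i^{(k)}\in \F_{q_0^3}$ are chosen so that Lemma~\ref{lem:cauchy} applies to every sub-matrix that arises in the case analysis below. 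The hypothesis $|G|\ge r+2$ gives $r$ distinct $\alpha_j$'s in $G$ (with room to spare), $q_0\ge 2r+3$ leaves enough distinct elements of $\F_{q_0}$ to serve as further Cauchy poles, and the move to the degree-$3$ extension $\F_{q_0^3}$ supplies the $\F_{q_0}$-independent direction that the third heavy parity needs; this extra dimension is precisely what distinguishes $h=3$ from $h=2$, where a single coset separation already sufficed.

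I would verify maximal recoverability by splitting erasure patterns according to how the three extra erasures distribute across local groups and applying the matching determinantal identity in each case. In \emph{Case 1}, all three extra erasures lie in a single group $i$, and non-singularity of every $(a+3)\times(a+3)$ minor of $\mattwoone{A_i}{B_i}$ follows from Lemma~\ref{lem:cauchy} applied to $\mattwoone{A_i}{B_i}$ viewed as a Cauchy-type matrix on $r$ columns. In \emph{Case 2}, the extra erasures split as $2+1$ across two groups $\ell_1,\ell_2$, and Lemma~\ref{lem:blockdet21} rewrites the $(2a+3)\times(2a+3)$ minor as a $\pm$-combination of products of an $(a+1)\times(a+1)$ and an $(a+2)\times(a+2)$ sub-determinant; after pulling out the common Vandermonde/Cauchy prefactors from each group, the vanishing reduces to a small polynomial identity in $\lambda_{\ell_1},\lambda_{\ell_2}$, the $\alpha_j$-products over $S_1,S_2$, and the Cauchy-tail quantities determined by the $\mu_i^{(k)}$'s. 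In \emph{Case 3}, the split is $1+1+1$ across three groups $\ell_1,\ell_2,\ell_3$, and Lemma~\ref{lem:blockdet} reduces the $(3a+3)\times(3a+3)$ determinant to a $3\times 3$ determinant whose $(i,k)$ entry is $\det\mattwoone{A_{\ell_k}(S_k)}{B_{\ell_k}(S_k)^{(i)}}$; factoring each column by its common Vandermonde/Cauchy prefactor leaves a $3\times 3$ reduced matrix whose $k$-th column has the form $\matthreeone{L_k}{M_k}{T_k}$, with $L_k,M_k\in\F_{q_0}$ (built from the coset label $\lambda_{\ell_k}$ and the $\alpha$-products over $S_k$) and $T_k\in\F_{q_0^3}\setminus \F_{q_0}$ a Cauchy tail.

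In both Cases~2 and~3 non-vanishing then follows from one uniform principle: the first two reduced rows live in $\F_{q_0}$ and any $2\times 2$ determinant in them is non-zero by the coset-separation argument of Lemma~\ref{lem:H2Preciese} (since $\lambda_{\ell}/\lambda_{\ell'}\notin G$ while the $\alpha$-products lie in $G$), while the third row lives in $\F_{q_0^3}\setminus \F_{q_0}$ and therefore cannot be cancelled by any $\F_{q_0}$-linear combination of the first two; Lemma~\ref{lem:cauchy} provides the ``no-accidental-coincidence'' guarantee for the Cauchy tails themselves. I expect Case~3 to be the main obstacle: the $\mu_i^{(k)}$'s must be chosen so that the reduced $3\times 3$ determinant is non-zero \emph{uniformly} over all valid triples $(\ell_k,S_k)$ of groups and $(a+1)$-subsets, and tracking which Cauchy poles can accidentally coincide with $\alpha$'s or with each other is where the slack $q_0\ge 2r+3$ is consumed; the degree-$3$ extension matches the number $h=3$ of heavy parities for exactly the algebraic-independence reason above.
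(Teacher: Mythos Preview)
Your high-level plan (Cauchy-based blocks, degree-$3$ extension, three-case analysis via Lemmas~\ref{lem:blockdet} and~\ref{lem:blockdet21}) matches the paper, but your Case~3 argument has a genuine gap. Your ``uniform principle'' claims that if the first two rows of the reduced $3\times 3$ matrix lie in $\F_{q_0}$ with all $2\times 2$ minors nonzero, and the third row lies entrywise in $\F_{q_0^3}\setminus\F_{q_0}$, then the determinant is nonzero because row~3 ``cannot be cancelled by any $\F_{q_0}$-linear combination of the first two''. This is a non-sequitur: determinant vanishing is about $\F_q$-linear, not $\F_{q_0}$-linear, dependence. For instance, rows $(1,1,1)$ and $(1,2,3)$ in $\F_{q_0}^3$ have all $2\times 2$ minors nonzero, yet with row~3 equal to $(\zeta,\zeta,\zeta)$ for any $\zeta\in\F_{q_0^3}\setminus\F_{q_0}$ the determinant is zero. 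What is actually needed (expanding along the $\F_{q_0^3}$-row) is that the three entries $T_1,T_2,T_3$ be $3$-wise $\F_{q_0}$-linearly independent while the cofactors lie in $\F_{q_0}^*$; your Cauchy tails depend on both $\ell_k$ and $S_k$ and you give no mechanism to ensure this uniformly.

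The paper avoids the problem by allocating the roles differently. The entire Cauchy structure---poles $\beta_1,\ldots,\beta_{a+3}$ and points $\alpha_1,\ldots,\alpha_r$---lives in $\F_{q_0}$, and the extension field enters only through an explicit row scalar $\lambda_i=v_0+\gamma_i v_1+\gamma_i^2 v_2\in\F_{q_0^3}$ in the first row of $B_i$, designed so that any three $\lambda$'s are $\F_{q_0}$-independent by a Vandermonde argument; the coset labels $\mu_i\in\F_{q_0}$ go in the second row. After Cauchy reduction in Case~3 the row-$2$ entries become $\mu_{\ell_k}\prod_{i\in S_k}(\alpha_i-\beta_{a+3})/(\alpha_i-\beta_{a+2})$, and for the coset argument to make the $2\times 2$ minors of rows~2--3 nonzero these products must lie in $G$. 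The paper enforces this not by taking $\alpha_j\in G$ (which does not survive a Cauchy reduction) but by taking $\alpha_j\in\Omega=\{\alpha:(\alpha-\beta_{a+2})/(\alpha-\beta_{a+3})\in G\}$; this is also why the hypothesis reads $|G|\ge r+2$ (so that $|\Omega|=|G|-1\ge r+1$ leaves room to avoid $\beta_{a+1}$) rather than $|G|\ge r$. Your sketch is missing both the explicit $3$-wise-independent $\lambda_i$'s and this $\Omega$-trick.
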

\begin{proof}
Let $G\subset \F_{q_0}^*$ be the multiplicative subgroup from the statement of the theorem. Choose distinct $\beta_{a+1}, \beta_{a+2}, \beta_{a+3}\in \F_{q_0}$ and let $$\Omega=\left\{\alpha\in \F_{q_0}: \frac{\alpha-\beta_{a+2}}{\alpha-\beta_{a+3}}\in G\right\}.$$
Clearly $|\Omega|= |G|-1\ge r+1$, so we can choose distinct $\alpha_1,\cdots,\alpha_r\in \Omega\setminus \set{\beta_{a+1}}$. Finally, since $q_0\ge 2r+3 \ge r+a+3$, we can choose distinct $\beta_1,\cdots,\beta_a \in \F_{q_0}\setminus \set{\alpha_1,\cdots,\alpha_r, \beta_{a+1},\beta_{a+2},\beta_{a+3}}$. Let $\mu_1,\cdots,\mu_g \in \F_{q_0}$ be elements from distinct cosets of $G$.

Now let $\F_q$ be a degree 3 extension of $\F_{q_0}$, so we have $q=q_0^3$. As $\F_q$ is a 3-dimensional vector space over $\F_{q_0}$, choose a basis $v_0,v_1,v_2\in \F_q$ for this space and choose distinct $\gamma_1,\cdots,\gamma_g\in \F_{q_0}$. Define $\lambda_i=v_0+\gamma_i v_1+\gamma_i^2 v_2$. Then any  three of the elements $\lambda_1,\cdots, \lambda_g\in \F_q$ are linearly independent over $\F_{q_0}$; we call this property $3$-wise independence over $\F_{q_0}$. Define the matrices $A_i$ and $B_i$ as follows:
\begin{align*}
A_i=
\left[
\begin{matrix}
\frac{1}{\alpha_1-\beta_1} & \cdots & \frac{1}{\alpha_r-\beta_1}\\
\vdots &\ddots & \vdots\\
\frac{1}{\alpha_1-\beta_a} & \cdots & \frac{1}{\alpha_r-\beta_a}\\
\end{matrix}
\right];\quad B_i=
\left[
\begin{matrix}
\frac{\lambda_i}{\alpha_1-\beta_{a+1}} & \cdots & \frac{\lambda_i}{\alpha_r-\beta_{a+1}}\\
\frac{\mu_i}{\alpha_1-\beta_{a+2}} & \cdots & \frac{\mu_i}{\alpha_r-\beta_{a+2}}\\
\frac{1}{\alpha_1-\beta_{a+3}} & \cdots & \frac{1}{\alpha_r-\beta_{a+3}}\\
\end{matrix}
\right].
\end{align*}
Now we will show that the above construction satisfies the MR property. We have $a$ erasures per local group and $3$ more. We can easily correct groups with only $a$ erasures because $A_i$ are Cauchy matrices where every $a\times a$ minor is non-zero. So we only need to worry about local groups with more than $a$ erasures. There are three cases.

\smallskip\noindent
\textbf{Case 1:} All three extra erasures in the same group.\\
Say we have $a+3$ erasures in local group $i$, then we can correct these errors because the matrix $\mattwoone{A_i}{B_i}$ is a Cauchy matrix (except for some scaling factors in the rows), and therefore each of its  $(a+3)\times (a+3)$ minors is non-zero by Lemma~\ref{lem:cauchy}.

\smallskip \noindent
\textbf{Case 2:} The three extra erasures are distributed across two groups.\\
Suppose the extra erasures occur in groups $\ell, \ell'$ with $(a+1)$ erasures in group $\ell$ corresponding to a subset $S\subseteq [r]$ of its columns and $(a+2)$ erasures in group $\ell'$ corresponding to a subset $S'\subseteq [r]$ of its columns. To correct these erasures we need to show the following matrix is full rank:
\begin{equation}
\label{eq:H3-2+1}
\left[
\begin{array}{c|c}
A_\ell(S) & 0 \\
\hline
0 &A_{\ell'}(S')\\
\hline
B_\ell(S) & B_{\ell'}(S') \\
\end{array}
\right]  \ .
\end{equation}
By Lemma~\ref{lem:blockdet21}, the above matrix fails to be full rank iff
\begin{align*}
\det\mattwoone{A_\ell(S)}{B_\ell(S)^{(1)}} \cdot \det\matthreeone{A_{\ell'}(S')}{B_{\ell'}(S')^{(2)}}{B_{\ell'}(S')^{(3)}}
- \det\mattwoone{A_\ell(S)}{B_\ell(S)^{(2)}} \cdot \det\matthreeone{A_{\ell'}(S')}{B_{\ell'}(S')^{(1)}}{B_{\ell'}(S')^{(3)}} 
+ \det\mattwoone{A_\ell(S)}{B_\ell(S)^{(3)}}\cdot \det\matthreeone{A_{\ell'}(S')}{B_{\ell'}(S')^{(1)}}{B_{\ell'}(S')^{(2)}}
=0.
\end{align*}
The above determinant is a $\F_q$-linear combination of $\lambda_\ell$ and $\lambda_{\ell'}$ and the coefficient of $\lambda_\ell$, which arises from the first term, is non-zero because $\mattwoone{A_\ell}{B_\ell}$ and $\mattwoone{A_{\ell'}}{B_{\ell'}}$ are Cauchy matrices. By $3$-wise independence of $\lambda$'s, this linear combination cannot be zero, and therefore the matrix \eqref{eq:H3-2+1} has full rank.

\smallskip\noindent
\textbf{Case 3:} The three extra erasures occur in distinct groups.\\
Suppose the three extra erasures occur in groups $\ell_1,\ell_2,\ell_3\in [g]$ and let $S_1,S_2,S_3\subseteq [r]$ be sets of size $a+1$ corresponding to the erasures in the groups $\ell_1,\ell_2,\ell_3$ respectively. To correct these erasures we need to show the following matrix is full rank:
\begin{align*}
\left[
\begin{array}{c|c|c}
A_{\ell_1}(S_1) & 0  & 0\\
\hline
0 &A_{\ell_2}(S_2)  & 0\\
\hline
0 & 0 & A_{\ell_3}(S_3) \\
\hline
B_{\ell_1}(S_1) & B_{\ell_2}(S_2) & B_{\ell_3}(S_3) \\
\end{array}
\right]
\end{align*}
By Lemma~\ref{lem:blockdet}, if the above matrix is not full rank then
\begin{align*}
\det\begin{bmatrix}
\det\mattwoone{A_{\ell_1}(S_1)}{B^{(1)}_{\ell_1}(S_1)} & \det\mattwoone{A_{\ell_2}(S_2)}{B^{(1)}_{\ell_2}(S_2)} & \det\mattwoone{A_{\ell_3}(S_3)}{B^{(1)}_{\ell_3}(S_3)}\\
\det\mattwoone{A_{\ell_1}(S_1)}{B^{(2)}_{\ell_1}(S_1)} & \det\mattwoone{A_{\ell_2}(S_2)}{B^{(2)}_{\ell_2}(S_2)} & \det\mattwoone{A_{\ell_3}(S_3)}{B^{(2)}_{\ell_3}(S_3)}\\
\det\mattwoone{A_{\ell_1}(S_1)}{B^{(3)}_{\ell_1}(S_1)} & \det\mattwoone{A_{\ell_2}(S_2)}{B^{(3)}_{\ell_2}(S_2)} & \det\mattwoone{A_{\ell_3}(S_3)}{B^{(3)}_{\ell_3}(S_3)}\\
\end{bmatrix}
=0.
\end{align*}
For $k\in \{1,2,3\}$, let $c_k=\prod_{i>j, i,j\in S_k}(\alpha_i-\alpha_j), d=\prod_{i>j, i,j\in [a]}(\beta_j-\beta_i), e_k=\prod_{i\in S_k, j\in [a]}(\alpha_i-\beta_j).$ By Lemma~\ref{lem:cauchy}, we can write down explicit expressions for the entries in the above determinant to get:
\begin{align*}
\det\begin{bmatrix}
\lambda_{\ell_1}\frac{c_1d \prod_{i\in[a]}(\beta_i-\beta_{a+1})}{e_1\prod_{i\in S_1}(\alpha_i-\beta_{a+1})} &
\lambda_{\ell_2}\frac{c_2d \prod_{i\in[a]}(\beta_i-\beta_{a+1})}{e_2\prod_{i\in S_2}(\alpha_i-\beta_{a+1})} &
\lambda_{\ell_3}\frac{c_3d \prod_{i\in[a]}(\beta_i-\beta_{a+1})}{e_3\prod_{i\in S_3}(\alpha_i-\beta_{a+1})} &
\\
\vspace{0.01cm}
\\
\mu_{\ell_1}\frac{c_1d \prod_{i\in[a]}(\beta_i-\beta_{a+2})}{e_1\prod_{i\in S_1}(\alpha_i-\beta_{a+2})} &
\mu_{\ell_2}\frac{c_2d \prod_{i\in[a]}(\beta_i-\beta_{a+2})}{e_2\prod_{i\in S_2}(\alpha_i-\beta_{a+2})} &
\mu_{\ell_3}\frac{c_3d \prod_{i\in[a]}(\beta_i-\beta_{a+2})}{e_3\prod_{i\in S_3}(\alpha_i-\beta_{a+2})} &
\\
\vspace{0.01cm}
\\
\frac{c_1d \prod_{i\in[a]}(\beta_i-\beta_{a+3})}{e_1\prod_{i\in S_1}(\alpha_i-\beta_{a+3})} &
\frac{c_2d \prod_{i\in[a]}(\beta_i-\beta_{a+3})}{e_2\prod_{i\in S_2}(\alpha_i-\beta_{a+3})} &
\frac{c_3d \prod_{i\in[a]}(\beta_i-\beta_{a+3})}{e_3\prod_{i\in S_3}(\alpha_i-\beta_{a+3})} &
\\
\end{bmatrix}
=0.
\end{align*}
We can scale rows and columns to conclude that
\begin{align*}
\det\begin{bmatrix}
\lambda_{\ell_1}\prod_{i\in S_1}\left(\frac{\alpha_i-\beta_{a+3}}{\alpha_i-\beta_{a+1}}\right) &
\lambda_{\ell_2}\prod_{i\in S_2}\left(\frac{\alpha_i-\beta_{a+3}}{\alpha_i-\beta_{a+1}}\right) &
\lambda_{\ell_3}\prod_{i\in S_3}\left(\frac{\alpha_i-\beta_{a+3}}{\alpha_i-\beta_{a+1}}\right)
\\
\mu_{\ell_1}\prod_{i\in S_1}\left(\frac{\alpha_i-\beta_{a+3}}{\alpha_i-\beta_{a+2}}\right) &
\mu_{\ell_2}\prod_{i\in S_2}\left(\frac{\alpha_i-\beta_{a+3}}{\alpha_i-\beta_{a+2}}\right) &
\mu_{\ell_3}\prod_{i\in S_3}\left(\frac{\alpha_i-\beta_{a+3}}{\alpha_i-\beta_{a+2}}\right)
\\
1 & 1 & 1
\end{bmatrix}
=0.
\end{align*}
By the choice of $\alpha$'s, $\prod_{i\in S_j}\left(\frac{\alpha_i-\beta_{a+3}}{\alpha_i-\beta_{a+2}}\right)\in G$ for $j=1,2,3$. By writing the Laplace expansion of the determinant over the first row, the above determinant is a linear combination in $\lambda_{\ell_1}, \lambda_{\ell_2},\lambda_{\ell_3}$ with coefficients from $\F_{q_0}$. The coefficients of $\lambda$'s in this linear combination are non-zero because $\mu_{\ell_1},\mu_{\ell_2},\mu_{\ell_3}$ belong to distinct cosets of $G$ in $\F_{q_0}^*$. Because $\lambda$'s are $3$-wise independent over $\F_{q_0}$, we get a contradiction.
\end{proof}
\noindent
Combining Lemma~\ref{lem:H3Precise} with Lemma~\ref{lem:subgroup} gives the following theorem.
\begin{theorem}\label{Th:H3Main}
Let $r \mid n,$  $a<r$ be integers. Let $g=\frac{n}{r}\ge 2.$ Assume that $n-ga-3$ is positive. Then there exists an explicit maximally recoverable $(n,r,h=3,a,q)$-local reconstruction code with $q=O(n^3).$ If we require the field to be of characteristic 2, such a code exists with $q=n^3\cdot\exp(O(\sqrt{\log n})).$
\end{theorem}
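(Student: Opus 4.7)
The plan is to deduce Theorem~\ref{Th:H3Main} as a direct corollary of Lemma~\ref{lem:H3Precise} and Lemma~\ref{lem:subgroup}; the two lemmas together carry essentially all of the content, so what remains is a parameter-matching bookkeeping step.

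My first step is to invoke Lemma~\ref{lem:subgroup} with the pair $(r', n')$ chosen so that its output matches the hypothesis of Lemma~\ref{lem:H3Precise}. Concretely, I would set $r' = r+2$ and $n' = \lceil (r+2)n/r \rceil$. Since $g = n/r \ge 2$ we have $r \le n/2$, so $(r+2)/r \le 2$ once $r \ge 2$ (and the small cases are immediate), and therefore $n' = O(n)$. Lemma~\ref{lem:subgroup} then produces a prime power $q_0$ with $q_0 = O(n)$ (and $q_0 \le n \cdot \exp(O(\sqrt{\log n}))$ in the characteristic-$2$ case) such that $\F_{q_0}^*$ contains a subgroup of size at least $r+2$ having at least $n'/r' \ge n/r$ cosets, which is exactly what Lemma~\ref{lem:H3Precise} requires.

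The second step is to verify the side hypothesis $q_0 \ge 2r+3$ of Lemma~\ref{lem:H3Precise}. This is automatic: since $q_0 - 1$ is at least the product of the subgroup size and the number of cosets, we get $q_0 - 1 \ge (r+2)(n/r) \ge 2(r+2) = 2r+4$, so $q_0 \ge 2r+5 > 2r+3$. Feeding $q_0$ into Lemma~\ref{lem:H3Precise} then yields an explicit MR $(n,r,3,a,q)$-LRC over $\F_q$ with $q = q_0^3$, giving the claimed $q = O(n^3)$ (respectively $q \le n^3 \cdot \exp(O(\sqrt{\log n}))$ when we insist on characteristic $2$, since a degree-$3$ extension of a characteristic-$2$ field remains characteristic $2$).

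There is really no obstacle to overcome here; all of the substance sits in the two supporting lemmas (the Cauchy-matrix based construction with $3$-wise independent evaluation points over a degree-$3$ field extension in Lemma~\ref{lem:H3Precise}, and the number-theoretic control on small fields possessing the requisite multiplicative subgroup structure in Lemma~\ref{lem:subgroup}). The only mild care needed is choosing $(r', n')$ so that both the subgroup size and the coset count come out right simultaneously, which the choice $(r+2, \lceil (r+2)n/r \rceil)$ accomplishes.
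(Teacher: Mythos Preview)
Your proposal is correct and follows the same approach as the paper, which literally just says ``Combining Lemma~\ref{lem:H3Precise} with Lemma~\ref{lem:subgroup} gives the following theorem.'' You have simply made the parameter matching explicit (applying Lemma~\ref{lem:subgroup} with $r'=r+2$ and $n'\approx (r+2)n/r$, then verifying $q_0\ge 2r+3$ from the subgroup/coset count), which is exactly the bookkeeping the paper leaves implicit.
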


\section{Maximally recoverable LRCs from elliptic curves}\label{Sec:Elliptic}
Our construction of MR $(n,r=3,h=3,a=1,q)$-LRCs is technically
disjoint from our results in the previous sections. We observe that in
this narrow case, maximally recoverable LRCs are equivalent to
families of \emph{matching collinear triples} in the projective plane
$\PF_q^2,$ i.e., sets of points partitioned into collinear triples,
where no three points other than those forming a triple are
collinear. In Section~\ref{SubSec:LRCsFromTriples} we state the
quantitative parameters of such a family $A$ that we can obtain and
translate those to parameters of an MR LRC. The goal of
Section~\ref{Sec:matching_collinear} is to construct the family $A$ using elliptic curves and 3-AP free sets. In Section~\ref{SubSec:EliipticAndAP} we develop the necessary
machinery of elliptic curves, and in Section~\ref{SubSec:CollinearFamily} we carry out
the construction.

\subsection{LRCs from matching collinear triples}\label{SubSec:LRCsFromTriples}
We will reduce the problem of constructing maximally recoverable codes for $h=3,r=3,a=1$ to the problem of constructing matching collinear triples in $\PF_q^2$ which we define below.
\begin{definition}
We say that $A\subset \PF_q^2$ has \textit{matching collinear triples} if $A$ can be partitioned into triples, $A=\sqcup_{i=1}^m \{a_i,b_i,c_i\}$, such that the only collinear triples in $A$ are $\{a_i,b_i,c_i\}$ for  $i\in [m]$.
\end{definition}

What is the largest subset $A\subset \PF_q^2$ with matching collinear triples? If we consider all the $q+1$ lines through some fixed point of $A$, at most one line can contain two other points of $A$. All other lines can contain at most one other point of $A$. So $|A|\le q+3$. The following lemma shows that we can construct a set $A$ with size $|A|\ge q^{1-o(1)}$. It is an interesting open question if we can get $|A|\geq  \Omega(q)$.

\begin{lemma}\label{lem:matching_construction}
For any prime power $q$, there is an explicit set $A\subset \PF_q^2$ with matching collinear triples of size $|A|\ge q\cdot \exp(-C\sqrt{\log q})$ where $C>0$ is some absolute constant.
\end{lemma}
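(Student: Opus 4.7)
The plan is to transport a Behrend-type $3$-AP-free set into the group of rational points of an elliptic curve and assemble it into matching collinear triples using the chord-and-tangent group law. I would first choose an elliptic curve $E/\F_q$ whose group of $\F_q$-rational points $E(\F_q)$ contains a cyclic subgroup $G$ of order $N=\Omega(q)$; this exists for every prime power $q$ by Hasse's theorem together with classical results on the group structure of elliptic curves over finite fields (for instance, one can take an $E$ with $|E(\F_q)|$ prime in the Hasse interval). The key geometric fact is that three distinct points of $E(\F_q)\subset\PF_q^2$ are collinear if and only if they sum to the identity in $E(\F_q)$. Fixing a generator $P$ of $G$ and identifying $G$ with $\Z_N$ via $k\leftrightarrow kP$ then reduces the problem to constructing a set $A\subset\Z_N$ of size $N\cdot\exp(-O(\sqrt{\log N}))$ that partitions into zero-sum triples, with no other three distinct elements of $A$ summing to zero.

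Next, I would apply Behrend's theorem to obtain an explicit $3$-AP free set $B\subset\{1,\dots,M\}$ with $M=\lfloor N/C\rfloor$ for a suitable absolute constant $C$, satisfying $|B|\ge M\cdot\exp(-O(\sqrt{\log M}))$. Pick three shifts with $s_1+s_2+s_3\equiv 0\pmod N$ placed approximately at $0$, $N/3$, $2N/3$ (e.g.\ $s_1=0$, $s_2=\lfloor N/3\rfloor$, $s_3=N-s_2$), and for each $b\in B$ form the triple
\[
T_b \;=\; \bigl\{\,b+s_1,\; b+s_2,\; -2b+s_3\,\bigr\}\subset \Z_N,
\]
whose elements sum to $(1+1-2)b+(s_1+s_2+s_3)\equiv 0\pmod N$ and, for $C$ large enough, lie in three pairwise disjoint sub-intervals of $\Z_N$ near $0$, $N/3$, and $2N/3$. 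This forces the triples $\{T_b\}_{b\in B}$ to be pairwise disjoint and their elements distinct, yielding $A:=\bigsqcup_{b\in B}T_b$ of size $|A|=3|B|$; this translates through the identification $k\leftrightarrow kP$ to the desired subset of $\PF_q^2$.

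The heart of the proof is verifying that $A$ has no zero-sum triple other than the $T_b$'s. I would enumerate all profiles $(n_1,n_2,n_3)$ with $n_1+n_2+n_3=3$ describing how many of the three chosen elements come from each part $P_1,P_2,P_3$, and write the sum as $t_1+t_2-2t_3+(\text{deterministic shift})$ in $\Z_N$, where $t_i$ is the sum of the $b$'s contributed by part $P_i$. For every profile except $(1,1,1)$, the quantity $t_1+t_2-2t_3$ is an integer confined to a short interval around $0$ (of length at most $4M$), which for $C$ large enough cannot meet the non-zero residue modulo $N$ dictated by the shifts; this rules out those profiles by range considerations. The remaining profile $(1,1,1)$ yields the equation $b_1+b_2=2b_3$ with $b_i\in B$, which by 3-AP freeness forces $b_1=b_2=b_3$ and therefore recovers a legitimate triple $T_{b_1}$ rather than a new collinearity.

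The main obstacle is tuning the shifts and the constant $C$ so that every non-trivial profile is ruled out either by a range bound or by 3-AP freeness; this amounts to a careful but routine case analysis over the ten possible profiles. A secondary issue is verifying that for every prime power $q$ there exists an elliptic curve $E/\F_q$ with $E(\F_q)$ containing a cyclic subgroup of order $\Omega(q)$, which is classical via Hasse's theorem and the Waterhouse/Deuring description of attainable group orders.
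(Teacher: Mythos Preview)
Your proposal is correct and follows essentially the same route as the paper: transport a Behrend $3$-AP-free set into a cyclic group realized by a cubic curve, build zero-sum triples by placing three shifted copies near $0$, $N/3$, $2N/3$, and rule out extraneous zero-sums by a range argument together with AP-freeness on the surviving $(1,1,1)$ profile. The one notable difference is that the paper sidesteps your ``secondary obstacle'' entirely by using a \emph{singular} Weierstrass cubic $E:(Y-\alpha X)(Y-\beta X)Z=X^3$, whose smooth locus satisfies $E_{ns}(\F_q)\cong\F_q^*$ and is therefore cyclic of order exactly $q-1$ for every prime power $q$; this avoids having to invoke Hasse together with Waterhouse/Deuring to exhibit a non-singular elliptic curve with a large cyclic group of rational points (the paper explicitly remarks that your non-singular variant also works).
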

We will prove Lemma~\ref{lem:matching_construction} in Section~\ref{SubSec:CollinearFamily}.
\begin{lemma}\label{lem:reduction_matching_triples}
Assume $g\geq 2.$ There exists a subset $S\subset \PF_q^2$ that has $g$ matching collinear triples \emph{if and only if} there exists a maximally recoverable $(3g,r=3,h=3,a=1,q)$-local reconstruction code.
\end{lemma}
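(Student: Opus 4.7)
The plan is to establish a bijective correspondence between matching collinear triples in $\PF_q^2$ and MR $(3g,3,3,1,q)$-LRCs via the ``difference vectors'' used in the lower bound proof, and then translate admissible erasure patterns into (non)collinearity conditions. For the forward direction, I would start with an MR LRC whose parity check matrix has the shape~(\ref{fig:MRtopology}). Since each $1\times 3$ matrix $A_i$ must have all nonzero entries (else the local MDS property fails), I rescale columns so that $A_i=(1,1,1)$ uniformly. For each group $i$ and pair $\{j,k\}\in\binom{[3]}{2}$ I define the difference vector $v_{i,jk}:=B_i(j)-B_i(k)\in\F_q^3$ and let $X_i=\{[v_{i,12}],[v_{i,13}],[v_{i,23}]\}\subset\PF_q^2$. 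The identity $v_{i,12}+v_{i,23}=v_{i,13}$ forces each $X_i$ to be collinear for free. Adapting the arguments of Claims~\ref{claim:diff_groups} and~\ref{claim:same_group} to the present regime (using the $(3,2,1,\ldots,1)$ erasure pattern when $g=2$), I would verify that every $v_{i,jk}$ is nonzero and that the $3g$ projective points are pairwise distinct, so $A:=\bigsqcup_i X_i$ has exactly $3g$ elements partitioned into $g$ collinear triples.

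The heart of the proof is showing $A$ has no collinear triple outside the partition, and this is where the MR property is used. Because $r=3$ and $a=1$, the only admissible MR erasure distributions across the $g$ groups are (I) $(2,2,2,1,\ldots,1)$ --- three extras in three distinct groups --- and (II) $(3,2,1,\ldots,1)$ --- extras split $2{+}1$ in two groups; per-group counts strictly above $r=3$ are impossible. For each admissible pattern I would form the $(g+3)\times(g+3)$ submatrix on the erased columns. Each group with a single erasure contributes a column whose unique nonzero $A$-row is its own, so a Schur-complement elimination strips those groups away. In the residual matrix, pivoting a single column per remaining group against its local row converts the heavy-parity block into differences $v_{i,jk}$, leaving a $3\times 3$ determinant $\det[v_{\ell_1,S_1}\mid v_{\ell_2,S_2}\mid v_{\ell_3,S_3}]$ drawn either from three distinct $X_i$'s (case I) or from the pattern $\{v_{\ell_1,S},v_{\ell_1,S'},v_{\ell_2,T}\}$ spanning two triples (case II). Its non-vanishing is exactly projective non-collinearity of three points of $A$, so MR precisely forbids all $1{+}1{+}1$ and $2{+}1$ collinear triples across distinct $X_i$'s, which combined with the allowed $X_i$-triples yields the matching collinear triples property.

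For the reverse direction I would run the construction backwards. Given $A=\bigsqcup_i\{a_i,b_i,c_i\}$, pick affine representatives $\tilde a_i,\tilde b_i,\tilde c_i\in\F_q^3\setminus\{0\}$ with $\tilde b_i=\tilde a_i+\tilde c_i$; this is possible because distinctness of the three projective points forces every coefficient in their collinearity relation to be nonzero. Setting $A_i=(1,1,1)$, $B_i(1)=\tilde b_i$, $B_i(2)=\tilde c_i$, $B_i(3)=0$ gives $v_{i,12}=\tilde a_i$, $v_{i,13}=\tilde b_i$, $v_{i,23}=\tilde c_i$, so the resulting $X_i$'s recover exactly the given triples, and the same Schur-complement analysis shows that MR reduces to non-collinearity of every admissible triple in $A$, which the matching property guarantees. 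The main technical obstacle is the Schur-complement reduction itself: I need to check carefully that the residual $3\times 3$ determinant indeed tracks projective non-collinearity of exactly the right three labeled points, and must dispose of potential degeneracies (a $v_{i,jk}$ vanishing, or two $v$'s being projectively equal) --- each such degeneracy is either ruled out by a specific MR pattern in the forward direction or by distinctness of points within $A$ in the reverse direction.
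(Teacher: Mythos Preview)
Your proposal is correct and follows essentially the same approach as the paper's proof: both directions hinge on normalizing $A_i=(1,1,1)$, forming the difference vectors $B_i(j)-B_i(k)$, and reducing the full-rank condition on the erased-column submatrix (via column subtraction, which you phrase as a Schur complement) to linear independence of three such differences, thereby matching MR correctability of the $(2,2,2,\dots)$ and $(3,2,\dots)$ patterns with non-collinearity of the corresponding projective triples. Your explicit handling of the $g=2$ case via the $(3,2)$ pattern and your chosen affine lifts in the reverse direction differ only cosmetically from the paper's choices $B_i=[0,-b_i,c_i]$ with $a_i+b_i+c_i=0$.
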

\begin{proof}
We first show how to obtain codes from families of collinear triples. Let $S=\cup_{i=1}^g \{a_i,b_i,c_i\}$ be such that the only collinear triples in $S$ are $\{a_i,b_i,c_i\}$ for $i\in [g]$. From now, we will think of elements of $S$ as vectors in $\F_q^3$ such that every triple of points except for the triples $\set{a_i,b_i,c_i}$ are linearly independent. We can scale each vector with non-zero elements in $\F_q$ such that $a_i+b_i+c_i=0$ in $\F_q^3$ for every $i\in [g].$ For $i\in [g],$ define blocks $A_i$ and $B_i$ of the parity check matrix~(\ref{fig:MRtopology}) as:
$$A_i=\begin{bmatrix}
1 & 1& 1\\
\end{bmatrix};\ B_i=\begin{bmatrix}
0&-b_i&c_i\\
\end{bmatrix}.$$
We need to correct 1 erasure per group and any 3 extra erasures. We can correct groups with a single erasure because $A_i$ is a simple parity check constraint on all the coordinates of the group. We now have to correct groups with more than one erasure, there are two cases:

\smallskip\noindent
\textbf{Case 1:} The three extra erasures are in two groups.\\
Suppose the two groups are $i,j$ and in group $i$ all the coordinates are erased and in group $j$ the second and third coordinates are erased (the other two cases are similar). To correct these erasures, we have to argue that the following matrix is full rank:
$$
\left[
\begin{array}{ccc|cc}
1& 1& 1 & 0 & 0\\
\hline
0 & 0&0& 1 & 1\\
\hline
0& -b_i &c_i & -b_j& c_j\\
\end{array}
\right]
$$
Subtract the first column in each group from the rest, it is equivalent to the following matrix being full rank:
\begin{align*}
\left[
\begin{array}{ccc|cc}
1& 0& 0 & 0 & 0\\
\hline
0 & 0&0& 1 & 0\\
\hline
0& -b_i &c_i & -b_j& c_j+b_j\\
\end{array}
\right]=
\left[
\begin{array}{ccc|cc}
1& 0& 0 & 0 & 0\\
\hline
0 & 0&0& 1 & 0\\
\hline
0& -b_i &c_i & -b_j& a_j\\
\end{array}
\right]
\end{align*}
which is true because $b_i,c_i, a_j$ are linearly independent.

\smallskip\noindent
\textbf{Case 2:} The three extra erasures are in distinct groups.\\
Suppose the three groups are $i,j,k$ and in each group the second and third columns are erased (the other cases are similar). To correct these erasures, we have to argue that the following matrix is full rank:
$$
\left[
\begin{array}{cc|cc|cc}
1& 1 & 0 & 0& 0&0\\
\hline
0 & 0&1& 1 & 0&0\\
\hline
0 & 0&0& 0 & 1&1\\
\hline
-b_i &c_i & -b_j& c_j&  -b_k& c_k\\
\end{array}
\right]
$$
Subtract the first column in each group from the rest, it is equivalent to the following matrix being full rank:
\begin{align*}
\left[
\begin{array}{cc|cc|cc}
1& 0 & 0 & 0& 0&0\\
\hline
0 & 0&1& 0 & 0&0\\
\hline
0 & 0&0& 0 & 1&0\\
\hline
-b_i &c_i+b_i & -b_j& c_j+b_j&  -b_k& c_k+b_k\\
\end{array}
\right]=
\left[
\begin{array}{cc|cc|cc}
1& 0 & 0 & 0& 0&0\\
\hline
0 & 0&1& 0 & 0&0\\
\hline
0 & 0&0& 0 & 1&0\\
\hline
-b_i & -a_i & -b_j& -a_j&  -b_k& -a_k\\
\end{array}
\right]
\end{align*}
which is true because $a_i,a_j, a_k$ are linearly independent.

\noindent
\textbf{Reverse connection.}
We now proceed to show how to obtain a set with matching collinear triples from codes. Given a maximally recoverable $(3g,r=3,h=3,a=1,q)$-local reconstruction code with a parity check matrix~(\ref{fig:MRtopology}), without loss of generality assume that for all $i\in [g],$
$$A_i=\begin{bmatrix}
1 & 1& 1\\
\end{bmatrix};\ B_i=\begin{bmatrix}
v^1_i&v^2_i&v^3_i\\
\end{bmatrix},$$
where $\{v^s_i\}_{s\in [3], i\in [g]}\subseteq \mathbb{F}_q^3.$ For each $i\in [g],$ define
$$
a_i = v^2_i - v^1_i \qquad b_i = v^3_i - v^2_i \qquad c_i = v^1_i - v^3_i.
$$
Clearly, for all $i\in [g],$ $a_i+b_i+c_i=0.$ Consider $\{a_i,b_i,c_i\}_{i\in [g]}$ as elements of $\PF_q^2$ and define our family to be $S=\cup_{i=1}^g \{a_i,b_i,c_i\}.$ It remains to show that all triples of elements of $S$ other than $\{a_i,b_i,c_i\}$ are non-collinear. When all three elements $v^\alpha_i - v^\beta_i, v^\gamma_j - v^\delta_j, v^\epsilon_k - v^\zeta_k$ belong to different groups this follows from the fact that, as implied by the MR property, the matrix
\begin{align*}
\left[
\begin{array}{cc|cc|cc}
1& 1 & 0 & 0& 0&0\\
\hline
0 & 0&1& 1 & 0&0\\
\hline
0 & 0&0& 0 & 1&1\\
\hline
v^\beta_i &v^\alpha_i & v^\delta_j & v^\gamma_j &  v^\zeta_k& v^\epsilon_k\\
\end{array}
\right]=
\left[
\begin{array}{cc|cc|cc}
1& 0 & 0 & 0& 0&0\\
\hline
0 & 0&1& 0 & 1&0\\
\hline
0 & 0&0& 0 & 0&0\\
\hline
v^\beta_i &v^\alpha_i - v^\beta_i & v^\delta_j & v^\gamma_j -v^\delta_j &  v^\zeta_k& v^\epsilon_k-v^\zeta_k\\
\end{array}
\right]
\end{align*}
is full rank. When triples come from two groups, (say, $v^\beta_i - v^\alpha_i, v^\gamma_i - v^\alpha_i, v^\delta_j - v^\epsilon_j$) this again follows from the MR property, as the matrix
\begin{align*}
\left[
\begin{array}{ccc|cc}
1& 1& 1 & 0 & 0\\
\hline
0 & 0&0& 1 & 1\\
\hline
v^\alpha_i& v^\beta_i &v^\gamma_i & v^\epsilon_j& v^\delta_j\\
\end{array}
\right] =
\left[
\begin{array}{ccc|cc}
1 & 0 & 0 & 0 & 0\\
\hline
0 & 0 & 0 & 1 & 0\\
\hline
v^\alpha_i& v^\beta_i - v^\alpha_i &v^\gamma_i - v^\alpha_i & v^\epsilon_j & v^\delta_j - v^\epsilon_j\\
\end{array}
\right]
\end{align*}
is also full rank.
\end{proof}
Combining Lemma~\ref{lem:matching_construction} and Lemma~\ref{lem:reduction_matching_triples} along with the fact that all the constructions are explicit gives the following theorem.
\begin{theorem}
For any $n>3$ which is a multiple of $3$ and for any finite field $\mathbb{F}_q,$ there exists an explicit maximally recoverable $(n,r=3,h=3,a=1,q)$-local reconstruction code provided that $q\geq \Omega \left(n\cdot \exp\left(C\sqrt{\log n}\right)\right)$ where $C>0$ is some absolute constant.
\end{theorem}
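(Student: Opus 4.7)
The plan is to combine the two main ingredients established immediately before this theorem: Lemma~\ref{lem:matching_construction}, which constructs an explicit set $A \subseteq \PF_q^2$ with matching collinear triples of size $|A| \geq q \cdot \exp(-C\sqrt{\log q})$ for any prime power $q$, and Lemma~\ref{lem:reduction_matching_triples}, which produces an explicit maximally recoverable $(3g, r=3, h=3, a=1, q)$-LRC from any set in $\PF_q^2$ containing $g \geq 2$ matching collinear triples. Since we are targeting codes of length $n$ with local groups of size $3$, we need a set with $g = n/3$ matching collinear triples, that is, a set of size at least $n$.

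Concretely, I would proceed as follows. First, using Bertrand's postulate (or the existence of primes in short intervals), pick any prime power $q$ in the range prescribed by the theorem statement, say $q = \Theta(n \cdot \exp(C' \sqrt{\log n}))$ for a suitably chosen absolute constant $C' > C$. Second, apply Lemma~\ref{lem:matching_construction} with this $q$ to obtain an explicit $A \subseteq \PF_q^2$ with matching collinear triples of size $|A| \geq q \cdot \exp(-C\sqrt{\log q})$. Third, observe that $\log q = \log n + O(\sqrt{\log n}) = (1+o(1))\log n$, so $\sqrt{\log q} \leq (1+o(1))\sqrt{\log n}$; consequently, with $C'$ chosen larger than $C$ by an appropriate constant factor, we guarantee $q \cdot \exp(-C\sqrt{\log q}) \geq n$, hence $|A| \geq n$. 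Fourth, discard triples from $A$ (keeping the partition into collinear triples) until exactly $n/3$ triples remain, yielding a subset $S \subseteq A \subseteq \PF_q^2$ with $|S|=n$ and exactly $g = n/3 \geq 2$ matching collinear triples; this preservation of the "matching" property is immediate because removing triples can only remove collinearities. Finally, feed $S$ into Lemma~\ref{lem:reduction_matching_triples} to obtain the desired explicit MR $(n, r=3, h=3, a=1, q)$-LRC.

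There is no real obstacle here — the argument is a direct quantitative chaining of the two preceding lemmas. The only point that requires a small amount of care is the calibration of constants in the $\sqrt{\log}$ factors: one must verify that choosing $q$ slightly larger than $n$ by an $\exp(\Theta(\sqrt{\log n}))$ factor suffices to swallow the corresponding $\exp(-\Theta(\sqrt{\log q}))$ loss in the size of $A$, which is straightforward since $\log q$ and $\log n$ differ by lower-order terms. Explicitness is preserved throughout: the choice of $q$, the construction of $A$ via Lemma~\ref{lem:matching_construction}, the pruning to $n/3$ triples, and the parity-check matrix produced by the forward direction of Lemma~\ref{lem:reduction_matching_triples} are all explicit.
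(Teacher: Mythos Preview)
Your approach is essentially the same as the paper's one-line proof (combine Lemma~\ref{lem:matching_construction} and Lemma~\ref{lem:reduction_matching_triples}), and the pruning and constant-calibration steps you spell out are exactly the routine checks needed. One small correction: the theorem is stated for \emph{any} finite field $\F_q$ satisfying the lower bound, not for some $q$ you choose, so the Bertrand's postulate step is both unnecessary and a slight misreading of the quantifier --- simply start with a given $q$ meeting the bound, apply Lemma~\ref{lem:matching_construction} to that $q$, and proceed as you describe.
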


\subsection{Matching Collinear Triples from AP free sets}\label{Sec:matching_collinear}
In this section, we will prove Lemma~\ref{lem:matching_construction} by constructing a large $A\subset \PF_q^2$ with matching collinear triples. The main idea is to reduce the problem to constructing a large subset $A\subset \Z/N\Z$  with \emph{matching tri-sums} where $N=\Omega(q)$. A  subset $A\subset \Z/N\Z$ has matching tri-sums if $A$ can partitioned into disjoint triples, $A=\sqcup_i\{a_i,b_i,c_i\}$ such that the only 3 element subsets of $A$ which sum to zero are the triples $\{a_i,b_i,c_i\}$ in the partition. Such sets can be constructed from subsets of $[N]$ without any non-trivial arithmetic progressions.
The best known construction of a subset of $[N]$ with no non-trivial three term arithmetic progressions is due to Behrend~\cite{Beh46} which was slightly improved in~\cite{Elk11}. An explicit construction with similar bounds as~\cite{Beh46} was given in~\cite{Moser53}.

\begin{theorem}[\cite{Beh46,Moser53,Elk11}]\label{thm:Behrend}
For some absolute constant $C>0$, there exists an explicit $A\subset \{1,2,\cdots,N\}$ with $|A|\ge N\cdot \exp(-C\sqrt{\log N})$ which doesn't contain any 3 term arithmetic progressions i.e. there doesn't exist distinct $x,y,z\in A$ such that $x+z=2y$.
\end{theorem}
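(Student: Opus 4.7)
The plan is to implement Behrend's classical sphere construction. I would fix integers $M, d \ge 2$ (to be optimized at the end) and consider the cube $B = \{0, 1, \ldots, M-1\}^d \subset \mathbb{Z}^d$. The key geometric observation is that a sphere in $\mathbb{R}^d$ centered at the origin is strictly convex in the arithmetic sense: if distinct points $u, v, w$ all of Euclidean norm $\rho$ satisfy $u + w = 2v$, then by the parallelogram identity
\[
\|u-w\|^2 \;=\; 2\|u\|^2 + 2\|w\|^2 - \|u+w\|^2 \;=\; 4\rho^2 - 4\|v\|^2 \;=\; 0,
\]
forcing $u = w$, a contradiction. Hence any subset of a single sphere in $\mathbb{Z}^d$ is automatically free of non-trivial 3-term arithmetic progressions.

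Next I would partition $B$ by squared Euclidean norm. Since $\|v\|^2$ takes at most $d(M-1)^2 + 1 \le dM^2$ integer values on $B$, pigeonhole produces a radius $r$ for which the shell $B_r := \{v \in B : \|v\|^2 = r\}$ has size at least $M^d/(dM^2) = M^{d-2}/d$. To transfer $B_r$ to the integers I would use the base-$2M$ encoding $\phi(v_1, \ldots, v_d) := \sum_{i=1}^{d} v_i (2M)^{i-1}$. Because every coordinate of $u + w$ with $u, w \in B$ lies in $\{0, \ldots, 2M-2\}$, strictly below the base $2M$, the coordinate-wise addition produces no carries between positions; hence $\phi(u) + \phi(w) = 2\phi(v)$ in $\mathbb{Z}$ is equivalent to $u + w = 2v$ coordinate-wise in $\mathbb{Z}^d$. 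Combining this with the sphere argument, $A := \phi(B_r)$ is a 3-AP-free subset of $\{0, 1, \ldots, (2M)^d - 1\}$ of cardinality at least $M^{d-2}/d$.

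Finally, given a target $N$, I would set $d := \lceil \sqrt{\log_2 N}\, \rceil$ and $M := \lfloor N^{1/d}/2 \rfloor$, so that $(2M)^d \le N$; a harmless shift then places $A$ inside $\{1, \ldots, N\}$. Taking logarithms of $|A| \ge M^{d-2}/d$ yields
\[
\log |A| \;\ge\; \log N - d\log 2 - (2/d)\log N - \log d + O(1),
\]
and at the chosen $d$ the two middle terms both have order $\sqrt{\log N}$, giving $|A| \ge N \exp\!\left(-C\sqrt{\log N}\right)$ for an absolute constant $C$. The construction is explicit because $d$, $M$, and an enumeration of the shell $B_r$ are all computable in $\poly(\log N)$ time. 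There is no genuine conceptual obstacle in the argument; the only delicate step is the parameter balancing in this last paragraph, a routine Behrend-style optimization that trades the base-expansion loss $(2M)^d/M^d = 2^d$ against the pigeonhole loss $dM^2$.
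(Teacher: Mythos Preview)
The paper does not supply a proof of this theorem; it is quoted as a black box from the cited references \cite{Beh46,Moser53,Elk11}. Your proposal is a correct and complete rendition of Behrend's original sphere-in-a-cube construction, including the parallelogram-identity argument for AP-freeness on a sphere, the pigeonhole over radii, the carry-free base-$2M$ encoding, and the balancing $d \approx \sqrt{\log N}$. There is nothing to compare against in the paper itself, and no gap in your argument.
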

It is also known that any set $A\subset \{1,2,\cdots,N\}$ with no non-trivial 3 term arithmetic progressions should have size $|A|\lesssim \frac{(\log\log N)^4}{\log N}\cdot N$~\cite{Blo16}.

The reduction from matching collinear triples in $\F_q^2$ to subsets of $\Z/N\Z$ with matching tri-sums is simple when $q$ is a prime. In this case we can set $N=q$. Three points $(x_1,y_1),(x_2,y_2),(x_3,y_3)\in \F_q^2$ on the cubic curve $Y=X^3$ are collinear iff $x_1+x_2+x_3=0$. So we can get a large subset of $\PF_q^2$ with matching collinear triples, from a large subset of $\F_q\cong \Z/q\Z$ with matching tri-sums. And from Theorem~\ref{thm:Behrend}, we can get such a set of size $\ge q\cdot \exp(-O(\sqrt{\log q}))$.

When $q$ is not prime, the additive group of $\F_q$ is not cyclic anymore and subsets of $\F_q$ with matching tri-sums are much smaller. For example, if $\F_q$ has characteristic 2, which is the main setting of interest for us, the size of the largest subset of $\F_q$ with matching tri-sums is $\le q^c$ for some absolute constant $c<1$~\cite{Klein16}. We will use some results on elliptic curves which are a special kind of cubic curves to make the reduction work over any field.

\subsubsection{Elliptic curves}\label{SubSec:EliipticAndAP}
We will give a quick introduction to elliptic curves, please refer to~\cite{Sil09, Men13} for proofs and formal definitions. Let $\K$ be a finite field and $\bK$ be its algebraic closure. 
A singular Weierstrass equation\footnote{Usually elliptic curves are defined as curves given by non-singular Weierstrass equations. But for our purpose, it is easier to work with singular Weierstrass equations.} $E$ with singularity at $(X,Y,Z)=(0,0,1)$ is a cubic equation given by: $$E:\ Y^2Z+a_1XYZ-a_3X^2Z=X^3.$$ We associate with $E$ the set of all points in $\PbK^2$ which satisfy the equation $E$. There is exactly one point in $E$ with $Z$-coordinate equal to $0$, namely $(0:1:0)$, we call this special point \emph{the point at infinity} and denote it by $\cO$.
The set of non-singular $\K$-rational points of $E$, denoted by $E_{ns}(\K)$ is defined as follows:
\begin{align*}
E_{ns}(\K)=\{(x:y:1) | F(x,y,1)&=0,\ x,y\in \K,
(x,y)\ne (0,0) \}\cup \{\cO\}.
\end{align*}
$E_{ns}(\K)$ is an abelian group under a certain addition operation `$+$', with the point at infinity $\cO$ as the group identity. Under this operation, three points $a,b,c\in E_{ns}(\K)$ satisfy $a+b+c=\cO$ iff $a,b,c$ are collinear in $\PK^2$. The following theorem shows that $E_{ns}(\K)$ is isomorphic to $\K^*$ when $E$ is of a special form.

\begin{theorem}[Theorem 8.1 in~\cite{Men13}]
\label{Th:singular_weierstrass}
Let $E: (Y-\alpha X)(Y-\beta X)Z=X^3$ be a singular Weierstrass equation with $\alpha, \beta \in \K$ and $\alpha\ne \beta$. Then the map $\phi:E_{ns}(\K) \to \K^*$ defined as:
$$\phi:\cO\mapsto 1 \hspace{1cm} \phi:(x,y,1)\mapsto \frac{y-\beta x}{y-\alpha x}$$
is a group isomorphism.
\end{theorem}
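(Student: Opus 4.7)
The plan is to verify in turn that $\phi$ is well-defined on $E_{ns}(\K)$, that it is a bijection onto $\K^*$, and that it is a group homomorphism.

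For well-definedness, I would check that $y-\alpha x$ and $y-\beta x$ are nonzero on every affine non-singular point. Substituting $y=\alpha x$ into the defining equation $(y-\alpha x)(y-\beta x)=x^3$ forces $x^3=0$, hence $(x,y)=(0,0)$, which is the excluded singular point; the symmetric argument handles $y-\beta x$. For bijectivity, I would exhibit an explicit inverse: given $t\in \K^*$ with $t\ne 1$, the relation $y-\beta x=t(y-\alpha x)$ forces $y=cx$ with $c=(\beta-t\alpha)/(1-t)$; substituting $y=cx$ into the curve gives $(c-\alpha)(c-\beta)x^2=x^3$, and a routine computation yields the unique nonzero solution $x=t(\beta-\alpha)^2/(1-t)^2$ (nonzero since $\alpha\ne\beta$ and $t\notin\{0,1\}$). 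The value $t=1$ corresponds only to $\cO$, giving a two-sided inverse.

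The substantive step is showing $\phi$ is a homomorphism. I would invoke the standard characterization of the group law on $E_{ns}(\K)$: three points sum to $\cO$ if and only if they are collinear in $\PK^2$ (counted with multiplicity). It therefore suffices to prove $\phi(P_1)\phi(P_2)\phi(P_3)=1$ for every collinear triple in $E_{ns}(\K)$. For a non-vertical line $y=\lambda x+\mu$, substituting into the curve yields the monic cubic $x^3-(\lambda-\alpha)(\lambda-\beta)x^2-\mu(2\lambda-\alpha-\beta)x-\mu^2=0$, whose roots are the $x$-coordinates $x_1,x_2,x_3$ of the three intersection points; expanding $\prod_i((\lambda-\beta)x_i+\mu)$ and $\prod_i((\lambda-\alpha)x_i+\mu)$ using Vieta's formulas and applying the identity $u^2-u(2\lambda-\alpha-\beta)+(\lambda-\alpha)(\lambda-\beta)=0$ at $u=\lambda-\beta$ and $u=\lambda-\alpha$ respectively, both products collapse to $\mu^3$, so $\prod_i \phi(P_i)=1$. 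For a vertical line $x=x_0$ (the case where $\cO$ is one of the three collinear points), the two affine intersections satisfy a quadratic in $y$ whose Vieta relations directly give $\phi(P)\phi(Q)=1$.

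The main obstacle is spotting the algebraic cancellation powering the Vieta step; once the quadratic identity above is identified (it is precisely the polynomial whose roots are $\lambda-\alpha$ and $\lambda-\beta$), the rest of the computation is mechanical. Secondary care is needed for degenerate collinear configurations such as tangent lines producing double roots or lines through the node, but the same identity continues to apply, so only minor bookkeeping is required to handle these cases.
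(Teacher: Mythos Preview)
The paper does not prove this theorem; it is quoted verbatim from~\cite{Men13} and used as a black box, so there is no ``paper's own proof'' to compare against. Your argument is correct and is essentially the standard textbook proof of this fact: well-definedness and bijectivity are straightforward, and the homomorphism property follows from the chord-tangent description of the group law together with the Vieta computation you outline (the quadratic $u^2-(2\lambda-\alpha-\beta)u+(\lambda-\alpha)(\lambda-\beta)$ vanishing at $u=\lambda-\alpha,\lambda-\beta$ is exactly the right observation). One small remark: you should note that a non-vertical line meeting $E_{ns}(\K)$ in three points (with multiplicity) cannot have $\mu=0$, since the line $y=\lambda x$ passes through the node $(0,0)$ with intersection multiplicity two there, leaving only one smooth affine intersection; this is implicit in your treatment but worth stating so the division by $\mu^3$ is justified.
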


Since $\K^*$ is a cyclic group for any finite field $\K$, $E_{ns}(\K)$ is isomorphic to $\Z/N\Z$ for $N=|\K|-1$ when $E$ is a singular Weierstrass equation as in Theorem~\ref{Th:singular_weierstrass}.

%

\subsubsection{Proof of Lemma~\ref{lem:matching_construction}}\label{SubSec:CollinearFamily}
\begin{proof}
Let $E$ be a singular Weierstrass equation\footnote{It is not essential to work with singular Weierstrass equations. The proof also works with non-singular elliptic curves as long as the group of $\K$-rational points is cyclic or has a large cyclic subgroup.} defined over $\F_q$ as in Theorem~\ref{Th:singular_weierstrass}. By Theorem~\ref{Th:singular_weierstrass}, $E_{ns}(\F_q)\cong \Z/N\Z$ where $N=q-1$. Recall that $a,b,c\in E_{ns}(\F_q)$ satisfy $a+b+c=\cO$ in the group iff they are collinear. 

Let $B\subset \{1,2,\cdots,N/20\}$ be an explicit subset of size $|B|\gtrsim N\cdot \exp(-C\sqrt{\log N})$ with no 3-term arithmetic progressions, as guaranteed by Theorem~\ref{thm:Behrend}. Now define subsets $A_1,A_2,A_3\subset \Z/N\Z$ as 
\begin{align*}
A_1=\left\{x: x\in B\right\}, A_2=\left\{\left\lfloor\frac{N}{3}\right\rfloor+x : x\in B\right\},
A_3=\left\{N-\left\lfloor\frac{N}{3}\right\rfloor-2x : x\in B\right\}.
\end{align*}
Clearly, $A_1,A_2,A_3$ are disjoint. Finally we define $\tilde{A}=A_1\cup A_2\cup A_3$.
Now we claim that the only triples from $\tilde{A}$ which sum to zero in $\Z/N\Z$ are $\{x,\floor{N/3}+x,N-\floor{N/3}-2x\}$ for $x\in B$ and these triples form a partition of $\tilde{A}$.

It is not hard to see that if three distinct elements $a,b,c\in \tilde{A}$ satisfy $a+b+c=0,$ then $a,b,c$ should come from 3 different sets $A_1,A_2,A_3$. So after reordering, we can assume $$a=x,b=\floor{N/3}+y,c=N-\floor{N/3}-2z$$ for some $x,y,z\in B$.
Thus $a+b+c=0$ implies that $x+y=2z$, which implies that $x=y=z$ since $B$ is free from 3 arithmetic progressions.

Finally let $A\subset \PF_q^2$ be the set of points in $E_{ns}(\F_q)$ which map to the set $\tilde{A}\subset \Z/N\Z$ under the isomorphism $E_{ns}(\F_q)\cong \Z/N\Z$. Now it is easy to see that $A$ has matching collinear triples and we have $|A|\gtrsim q\cdot \exp(-C\sqrt{\log q})$.
\end{proof}

\section{Open problems}\label{Sec:Open}
In this work we made progress towards quantifying the minimal size of finite fields required for existence of maximally recoverable local reconstruction codes and obtained both lower and upper bounds. There is a wide array of questions that remain open. Here we highlight some of them:
\begin{itemize}
\itemsep=0ex
\item Our lower bound~(\ref{Eqn:MainLowerFormOur_mega}) implies that even in the regime of constant $a$ and $h,$ when $h\geq 3, a\ge 1$ and $r$ grows with $n$ there exist no MR codes over fields of size $O(n).$ It would be of great interest to understand if such codes always exist when all parameters $a,h,$ and $r$ are held constant and only $n$ grows.

\item Our lower bound~(\ref{Eqn:MainLowerFormOur_mega}) is of the form $q= \Omega(nr^\alpha)$ where $\alpha>0$ in all parameter ranges except when $a=0$ or $h=2$ or $g=2$ or $(g=3, h=4,a=1)$. When $a=0$ or $h=2$, we now know that there are linear field size constructions for any $r$. Is this also true when $g=2$?

\item In the case of fields of characteristic two, can one reduce the field sizes in Theorems~\ref{Th:H2Main} and~\ref{Th:H3Main} to $O(n)$ and $O(n^3)$ to match the case of prime fields?

\item Our Lemma~\ref{lem:reduction_matching_triples} provides an equivalence between the parameters of families of matching collinear triples in the projective plane and maximally recoverable local reconstruction codes with $r=3,h=3,$ and $a=1.$ We hope that this reduction will be useful to obtain an $\omega(n)$ lower bound for the alphabet size of MR $(n,r=3,h=3,a=1,q)$-LRCs, or lead to a construction over fields of linear size. It is also very interesting to see if techniques similar to those in Section~\ref{Sec:matching_collinear} can be used to get codes over fields of nearly linear size when $r>3$ or $a>1$ or $h>3.$

\item Finally, it is interesting to see if our lower bound in Theorem~\ref{Th:lowerbound_mega} can be generalized to the setting of non-linear codes. Basic results about LRCs such as distance vs. redundancy trade-off~\cite{GHSY} have been generalized to non-linear setting in~\cite{XOR_ELE,FY}.
\end{itemize}

\section*{Acknowledgements}
We thank Madhu Sudan his very useful suggestion to use pairwise independence properties of hyperplanes to prove Lemma~\ref{lem:hyperplane_incidence}. We would like to thank Parikshit Gopalan for allowing us to include his Proposition~\ref{prop:MainParikshit} in this paper. 

We are grateful to Cheng Huang for asking the question that led us to start this project, Suryateja Gavva and Ilya Shkredov for helpful discussions about this work.   


\bibliographystyle{alpha}
\bibliography{references}

\newcommand{\etalchar}[1]{$^{#1}$}
\begin{thebibliography}{DGW{\etalchar{+}}10}

\bibitem[Bal12]{MainMDS1}
Simeon Ball.
\newblock On sets of vectors of a finite vector space in which every subset of
  basis size is a basis.
\newblock {\em Journal of European Mathematical Society}, 14:733--748, 2012.

\bibitem[Beh46]{Beh46}
Felix~A Behrend.
\newblock On sets of integers which contain no three terms in arithmetical
  progression.
\newblock {\em Proceedings of the National Academy of Sciences},
  32(12):331--332, 1946.

\bibitem[BFI86]{BFI86}
Enrico Bombieri, John~B Friedlander, and Henryk Iwaniec.
\newblock Primes in arithmetic progressions to large moduli.
\newblock {\em Acta Mathematica}, 156(1):203--251, 1986.

\bibitem[BHH13]{BHH}
Mario Blaum, James~Lee Hafner, and Steven Hetzler.
\newblock Partial-{MDS} codes and their application to {RAID} type of
  architectures.
\newblock {\em IEEE Transactions on Information Theory}, 59(7):4510--4519,
  2013.

\bibitem[Bla13]{Blaum}
Mario Blaum.
\newblock Construction of {PMDS} and {SD} codes extending {RAID} 5.
\newblock Arxiv 1305.0032, 2013.

\bibitem[Blo16]{Blo16}
Thomas~F Bloom.
\newblock A quantitative improvement for {R}oth's theorem on arithmetic
  progressions.
\newblock {\em Journal of the London Mathematical Society}, page jdw010, 2016.

\bibitem[BPSY16]{BPSY}
Mario Blaum, James Plank, Moshe Schwartz, and Eitan Yaakobi.
\newblock Construction of partial {MDS} and sector-disk codes with two global
  parity symbols.
\newblock {\em IEEE Transactions on Information Theory}, 62(5):2673--2681,
  2016.

\bibitem[CHL07]{CHL}
Minghua Chen, Cheng Huang, and Jin Li.
\newblock On maximally recoverable property for multi-protection group codes.
\newblock In {\em IEEE International Symposium on Information Theory (ISIT)},
  pages 486--490, 2007.

\bibitem[CK17]{CK}
Gokhan Calis and Ozan Koyluoglu.
\newblock A general construction fo {PMDS} codes.
\newblock {\em IEEE Communications Letters}, 21(3):452--455, 2017.

\bibitem[DGW{\etalchar{+}}10]{Dimakis_1}
Alexandros~G. Dimakis, Brighten Godfrey, Yunnan Wu, Martin~J. Wainwright, and
  Kannan Ramchandran.
\newblock Network coding for distributed storage systems.
\newblock {\em IEEE Transactions on Information Theory}, 56(9):4539--4551,
  2010.

\bibitem[Elk11]{Elk11}
Michael Elkin.
\newblock An improved construction of progression-free sets.
\newblock {\em Israel journal of mathematics}, 184(1):93--128, 2011.

\bibitem[FY14]{FY}
Michael Forbes and Sergey Yekhanin.
\newblock On the locality of codeword symbols in non-linear codes.
\newblock {\em Discrete mathematics}, 324:78--84, 2014.

\bibitem[GHJY14]{GHJY}
Parikshit Gopalan, Cheng Huang, Bob Jenkins, and Sergey Yekhanin.
\newblock Explicit maximally recoverable codes with locality.
\newblock {\em IEEE Transactions on Information Theory}, 60(9):5245--5256,
  2014.

\bibitem[GHK{\etalchar{+}}17]{GHKSWY}
Parikshit Gopalan, Guangda Hu, Swastik Kopparty, Shubhangi Saraf, Carol Wang,
  and Sergey Yekhanin.
\newblock Maximally recoverable codes for grid-like topologies.
\newblock In {\em 28th Annual Symposium on Discrete Algorithms (SODA)}, pages
  2092--2108, 2017.

\bibitem[GHSY12]{GHSY}
Parikshit Gopalan, Cheng Huang, Huseyin Simitci, and Sergey Yekhanin.
\newblock On the locality of codeword symbols.
\newblock {\em IEEE Transactions on Information Theory}, 58(11):6925 --6934,
  2012.

\bibitem[GJX18]{GLX-ff}
Venkatesan Guruswami, Lingfei Jin, and Chaoping Xing.
\newblock Constructions of maximally recoverable local reconstructon codes via
  function fields.
\newblock Manuscript, 2018.

\bibitem[Gop17]{Gopalan}
Parikshit Gopalan.
\newblock Personal communication, 2017.

\bibitem[GW16]{GW}
Venkatesan Guruswami and Mary Wootters.
\newblock Repairing {R}eed-{S}olomon codes.
\newblock In {\em 48th ACM Symposium on Theory of Computing (STOC)}, pages
  216--226, 2016.

\bibitem[GYBS17]{GYBS}
Ryan Gabrys, Eitan Yaakobi, Mario Blaum, and Paul Siegel.
\newblock Construction of partial {MDS} codes over small finite fields.
\newblock In {\em 2017 IEEE International Symposium on Information Theory
  (ISIT)}, pages 1--5, 2017.

\bibitem[HCL07]{HCL}
Cheng Huang, Minghua Chen, and Jin Li.
\newblock Pyramid codes: flexible schemes to trade space for access efficiency
  in reliable data storage systems.
\newblock In {\em 6th IEEE International Symposium on Network Computing and
  Applications (NCA 2007)}, pages 79--86, 2007.

\bibitem[HSX{\etalchar{+}}12]{HuangSX}
Cheng Huang, Huseyin Simitci, Yikang Xu, Aaron Ogus, Brad Calder, Parikshit
  Gopalan, Jin Li, and Sergey Yekhanin.
\newblock Erasure coding in {W}indows {A}zure {S}torage.
\newblock In {\em USENIX Annual Technical Conference (ATC)}, pages 15--26,
  2012.

\bibitem[HY16]{HY}
Guangda Hu and Sergey Yekhanin.
\newblock New constructions of {SD} and {MR} codes over small finite fields.
\newblock In {\em 2016 IEEE International Symposium on Information Theory
  (ISIT)}, pages 1591--1595, 2016.

\bibitem[Kle16]{Klein16}
Robert Kleinberg.
\newblock A nearly tight upper bound on tri-colored sum-free sets in
  characteristic 2.
\newblock {\em arXiv preprint arXiv:1605.08416}, 2016.

\bibitem[KLR17]{KLR}
Daniel Kane, Shachar Lovett, and Sankeerth Rao.
\newblock Labeling the complete bipartite graph with no zero cycles.
\newblock In {\em 58th IEEE Symposium on Foundations of Computer Science
  (FOCS)}, 2017.

\bibitem[LN83]{LN}
Rudolf Lidl and Harald Niederreiter.
\newblock {\em Finite Fields}.
\newblock Cambridge University Press, Cambridge, 1983.

\bibitem[Lov18]{Lovett18}
Shachar Lovett.
\newblock A proof of the {GM-MDS} conjecture.
\newblock {\em Electronic Colloquium on Computational Complexity {(ECCC)}},
  25:47, 2018.

\bibitem[MBG{\etalchar{+}}13]{Men13}
A.J. Menezes, I.F. Blake, X.H. Gao, R.C. Mullin, S.A. Vanstone, and
  T.~Yaghoobian.
\newblock {\em Applications of Finite Fields}.
\newblock The Springer International Series in Engineering and Computer
  Science. Springer US, 2013.

\bibitem[Mos53]{Moser53}
Leo Moser.
\newblock {\em On non-averaging sets of integers}.
\newblock Canadian Mathematical Society, 1953.

\bibitem[MS77]{MS}
F.~J. MacWilliams and N.~J.~A. Sloane.
\newblock {\em The Theory of Error Correcting Codes}.
\newblock North Holland, Amsterdam, New York, 1977.

\bibitem[PD14]{Dimakis_0}
Dimitris Papailiopoulos and Alexandros Dimakis.
\newblock Locally repairable codes.
\newblock {\em IEEE Transactions on Information Theory}, 60(10):5843--5855,
  2014.

\bibitem[PGM13]{Plank}
J.~S. Plank, K.~M. Greenan, and E.~L. Miller.
\newblock Screaming fast {G}alois field arithmetic using {I}ntel {SIMD}
  instructions.
\newblock In {\em 11th Usenix Conference on File and Storage Technologies
  (FAST)}, pages 299--306, San Jose, February 2013.

\bibitem[SAP{\etalchar{+}}13]{XOR_ELE}
Maheswaran Sathiamoorthy, Megasthenis Asteris, Dimitris~S. Papailiopoulos,
  Alexandros~G. Dimakis, Ramkumar Vadali, Scott Chen, and Dhruba Borthakur.
\newblock {XOR}ing elephants: novel erasure codes for big data.
\newblock In {\em Proceedings of VLDB Endowment (PVLDB)}, pages 325--336, 2013.

\bibitem[Sil09]{Sil09}
J.H. Silverman.
\newblock {\em The Arithmetic of Elliptic Curves}.
\newblock Graduate Texts in Mathematics. Springer New York, 2009.

\bibitem[TB14]{TB}
Itzhak Tamo and Alexander Barg.
\newblock A family of optimal locally recoverable codes.
\newblock {\em IEEE Transactions on Information Theory}, 60:4661--4676, 2014.

\bibitem[TPD16]{TPD}
Itzhak Tamo, Dimitris Papailiopoulos, and Alexandros~G. Dimakis.
\newblock Optimal locally repairable codes and connections to matroid theory.
\newblock {\em IEEE Transactions on Information Theory}, 62:6661--6671, 2016.

\bibitem[WTB17]{WTB}
Zhiying Wang, Itzhak Tamo, and Jehoshua Bruck.
\newblock Optimal rebuilding of multiple erasures in {MDS} codes.
\newblock {\em IEEE Transactions on Information Theory}, 63:1084--1101, 2017.

\bibitem[YB17]{YB}
Min Ye and Alexander Barg.
\newblock Explicit constructions of high-rate {MDS} array codes with optimal
  repair bandwidth.
\newblock {\em IEEE Transactions on Information Theory}, 63:2001--2014, 2017.

\bibitem[Yek12]{Y_now}
Sergey Yekhanin.
\newblock Locally decodable codes.
\newblock {\em Foundations and trends in theoretical computer science},
  6(3):139--255, 2012.

\bibitem[YH18]{YH18}
Hikmet Yildiz and Babak Hassibi.
\newblock Optimum linear codes with support constraints over small fields.
\newblock {\em CoRR}, abs/1803.03752, 2018.

\end{thebibliography}

\appendix

\section{Proof of Proposition~\ref{prop:LowerAsymptotic_gsmall}}
\label{Sec:proofof_lowerasymptotic_gsmall}
We will first focus on the case when $a\le h-2\ceil{h/g}$ and later in Proposition~\ref{prop:Mainlowerbound_gsmall_alarge} we will deal with the case $a>h-2\ceil{h/g}$.
\begin{proposition}
\label{prop:Mainlowerbound_gsmall_asmall}
Suppose $a,g,h$ be fixed constants such that $2\le g\le h$ and $a \le h-2\ceil{h/g}$. Let $C$ be a maximally recoverable $(n,r,h,a,q)$-LRC where $r=n/g$ is the size of each local group. Then $$q\ge \Omega_{a,h,g}(n^{1+a/\ceil{h/g}}).$$
\end{proposition}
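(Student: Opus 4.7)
The plan is to adapt the random-hyperplane argument of Proposition~\ref{prop:MainOur} and Lemma~\ref{lem:hyperplane_incidence} to the subspace setting: each $X_i$ will consist of subspaces of $\F_q^h$ of dimension roughly $h/g$, rather than points of $\PF_q^{h-1}$. Set $t := \lceil h/g \rceil$ and $s := gt - h \in \{0,\ldots,g-1\}$; fix any partition $[g] = G_1 \sqcup G_2$ with $|G_1| = g-s$, $|G_2| = s$, and put $t_i := t$ for $i \in G_1$, $t_i := t-1$ for $i \in G_2$, so that $\sum_i t_i = h$. For each $i \in [g]$ and each $S \subseteq [r]$ with $|S| = a + t_i$, let $K_{i,S}$ be a matrix whose columns form a basis of $\ker A_i(S)$ (dimension $t_i$ by the MDS property), and define $V_{i,S} := B_i(S)\,K_{i,S} \subseteq \F_q^h$. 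Put $X_i := \{V_{i,S} : |S| = a+t_i\}$.

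Mimicking Claims~\ref{claim:diff_groups}--\ref{claim:same_group}, the MR property yields three structural facts: (i) $\dim V_{i,S} = t_i$ exactly; (ii) for $S \ne S'$ in the same group, $\dim(V_{i,S} \cap V_{i,S'}) \le \max(0, |S \cap S'| - a)$, so in particular the subspaces in $X_i$ are pairwise distinct and $|X_i| = \binom{r}{a+t_i}$; (iii) for any choice of $V_i \in X_i$, the subspaces $V_1,\ldots,V_g$ jointly span $\F_q^h$. Fact (i) follows from MR applied to the single-group pattern of $a+t_i \le a+h$ erasures; (iii) follows from MR applied to the pattern with $|S_i| = a+t_i$ erasures in each group $i$, whose total extras sum to $\sum t_i = h$. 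For (ii), if $v \in V_{i,S} \cap V_{i,S'}$ one writes $v = B_i(S) K_{i,S} y = B_i(S') K_{i,S'} y'$ and observes that the difference lifts to a vector in $\ker\!\mattwoone{A_i(S\cup S')}{B_i(S\cup S')}$; here the hypothesis $a \le h - 2\lceil h/g \rceil$ guarantees $|S \cup S'| \le 2(a+t) \le a+h$, so MR forces this kernel to be trivial, driving the support of both preimages into $S \cap S'$ and bounding the intersection by $\dim\ker A_i(S \cap S') = \max(0, |S\cap S'|-a)$.

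Now suppose for contradiction that $q \le c_0 \cdot r^{1+a/t}$ for a small constant $c_0 = c_0(a,h,g) > 0$ to be chosen. Let $H$ be a uniformly random hyperplane in $\F_q^h$, and set $Z_i := |\{V \in X_i : V \subseteq H\}|$. Since $\Pr[V \subseteq H] = (q^{h-t_i}-1)/(q^h-1) \asymp q^{-t_i}$ for any $V$ of dimension $t_i$, we get $\E[Z_i] \asymp r^{a+t_i}/q^{t_i}$, which at $q = c_0 r^{1+a/t}$ is $\Theta(c_0^{-t})$ for $i \in G_1$ and $\Theta(c_0^{-(t-1)} r^{a/t})$ for $i \in G_2$. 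The key estimate is the second moment bound
\begin{align*}
\frac{\E[Z_i^2]}{\E[Z_i]^2} \;\le\; \frac{1}{\E[Z_i]} \;+\; \sum_{k=0}^{a+t_i-1} P_k \cdot q^{\max(0,\,k-a)},
\end{align*}
where $P_k = \binom{a+t_i}{k}\binom{r-a-t_i}{a+t_i-k}/\binom{r}{a+t_i} \asymp_{a,t} r^{-k}$ is the probability that two random $(a+t_i)$-subsets of $[r]$ share exactly $k$ elements and the $q^{\max(0,k-a)}$ factor comes from (ii). The terms with $k \le a$ sum to $1 - o_r(1)$; each term with $k = a+j$, $1 \le j \le t_i - 1$, evaluates to $O_{a,t}(c_0^j r^{-a(t-j)/t}) = o_r(1)$; and the diagonal contribution $1/\E[Z_i]$ is $O(c_0^t)$ for $i \in G_1$ and $o_r(1)$ for $i \in G_2$. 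Chebyshev then gives $\Pr[Z_i = 0] \le \E[Z_i^2]/\E[Z_i]^2 - 1 \le O(c_0^t) + o_r(1)$ for $i \in G_1$ and $o_r(1)$ for $i \in G_2$. Choosing $c_0$ so that $(g-s)\cdot O(c_0^t) < 1/2$ and taking $r$ large, a union bound yields $\Pr[\forall i:Z_i \ge 1] > 0$, producing a hyperplane $H$ containing a subspace from each $X_i$ and contradicting (iii); hence $q \ge \Omega_{a,h,g}(n^{1+a/\lceil h/g \rceil})$.

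The principal obstacle is the second-moment bound above. In the point case of Lemma~\ref{lem:hyperplane_incidence}, distinct members of $X_i$ are genuinely disjoint, but here two subspaces $V_{i,S}, V_{i,S'} \in X_i$ can meet in up to $|S \cap S'| - a$ dimensions, and each extra dimension of intersection inflates $\Pr[V+V' \subseteq H]$ by a factor of $q$. Fact (ii) provides exactly the combinatorial control needed to offset this inflation by the $r^{-(a+j)}$ rarity of highly-overlapping pairs, and the balance tips precisely at the critical scale $q \asymp r^{1+a/t}$. This is also the step where the hypothesis $a \le h - 2\lceil h/g \rceil$ is used essentially: without it, $|S \cup S'|$ could exceed $a+h$ and the MR-based kernel-triviality argument underpinning (ii) would break.
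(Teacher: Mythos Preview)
Your proposal is correct and follows essentially the same approach as the paper: define the $t_i$-dimensional subspaces $V_{i,S}=B_i(S)\ker A_i(S)$, establish the spanning property across groups (the paper's Claim~\ref{claim:diff_groups_2}) and the intersection bound within a group (Claim~\ref{claim:same_groups_2}, using $|S\cup S'|\le 2(a+t)\le a+h$), and then run a second-moment argument on a random hyperplane to force a contradiction when $q=o_{a,h,g}(n^{1+a/\lceil h/g\rceil})$. The only cosmetic differences are that the paper proves the \emph{exact} equality $\dim(p_{i,S}\cap p_{i,T})=\max(0,|S\cap T|-a)$ (you prove only the $\le$ direction, which is all that is needed for the second-moment upper bound) and samples a uniform vector $f\in\F_q^h$ rather than a uniform hyperplane, which makes the probabilities exactly $q^{-\dim(V+V')}$ and slightly streamlines the calculation.
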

\begin{proof}
Let $t_1\ge t_2 \ge \cdots \ge t_g$ be such that $t_i=\ceil{h/g}$ or $t_i=\floor{h/g}$ and $\sum_{i=1}^g t_i=h$. Given a matrix $M$, we will denote its kernel by $\ker(M)=\{x: Mx=0\}$ and its image by $\img(M)=\{y: \exists x\text{ s.t. } Mx=y\}$. We call the subspace spanned by the rows of $M$ as the row space of $M$ and the subspace spanned by the columns of $M$ as the column space of $M$ and their dimensions are both equal to $\rk(M)$. Note that $\img(M)$ is equal to the column space of $M$ and $\ker(M)$ is the orthogonal subspace of the row space of $M$. $M^\perp$ is defined as a matrix with independent columns such that $\img(M^\perp)=\ker(M)$ and so $MM^\perp=0$. Note that $M^\perp$ is not unique, any matrix whose columns span $\ker(M)$ can be used as $M^\perp$, but the specific choice of $M^\perp$ is not important for the proof.

  According to the discussion in Section~\ref{Sec:Prelim} the code $C$ admits a parity check matrix of the shape
\begin{equation}
\label{eqn:MRparitycheck_2}
\left[
\begin{array}{c|c|c|c}
A_1 & 0 & \cdots & 0\\
\hline
0 &A_2 & \cdots & 0\\
\hline
\vdots & \vdots & \ddots & \vdots \\
\hline
0 & 0 & \cdots & A_g \\
\hline
B_1 & B_2 & \cdots & B_g \\
\end{array}
\right].
\end{equation}
Here $A_1,A_2,\cdots,A_g$ are $a\times r$ matrices over $\F_q$, $B_1,B_2,\cdots,B_g$ are $h\times r$ matrices over $\F_q.$ The rest of the matrix is filled with zeros. Every $a\times a$ minor in each matrix $\{A_i\}_{i\in [g]}$ has full rank. So for every subset $S\subseteq [r]$ of size $|S|=a+t_i,$ the matrix $A_i(S)$ is an $a\times(a+t_i)$ matrix of full rank. Let $A_i(S)^\perp$ be an $(a+t_i)\times t_i$ matrix of full rank such that $A_i(S)A_i(S)^\perp =0$ (note that $A_i(S)^\perp$ is not unique). Now define $$P_{i,S}=B_i(S)A_i(S)^\perp$$ which is a $h\times t_i$ matrix.

Define $p_{i,S}$ as the subspace of $\F_q^h$ spanned by the columns of $P_{i,S}$. The MR property implies that any subset of columns of the parity check matrix~(\ref{eqn:MRparitycheck_2}) which can be obtained by picking $a$ columns in each local group and $h$ arbitrary additional columns is full rank. We will use this property to make two claims about the subspaces $\left\{p_{i,S}\right\}.$

\begin{claim}\label{claim:diff_groups_2}
For every subsets $S_1,\cdots,S_g\subseteq [r]$ such that $|S_i|=a+t_i$, the spaces $p_{1,S_1},\dots,p_{g,S_g}$ together span the entire space i.e. $p_{1,S_1}\oplus p_{2,S_2}\oplus \cdots \oplus p_{g,S_g}=\F_q^h$.
\end{claim}
\begin{proof}
Consider the following matrix equation:
\begin{equation}
\left[
\begin{array}{c|c|c|c}
A_{\ell_1}(S_1) & 0 & \cdots & 0\\
\hline
0 &A_{\ell_2}(S_2) & \cdots & 0\\
\hline
\vdots & \vdots & \ddots & \vdots \\
\hline
0 & 0 & \cdots & A_{\ell_h}(S_h) \\
\hline
B_{\ell_1}(S_1) & B_{\ell_2}(S_2) & \cdots & B_{\ell_h}(S_h) \\
\end{array}
\right]
\left[
\begin{array}{c|c|c|c}
A_{\ell_1}(S_1)^\perp & 0 & \cdots & 0\\
\hline
0 &A_{\ell_2}(S_2)^\perp & \cdots & 0\\
\hline
\vdots & \vdots & \ddots & \vdots \\
\hline
0 & 0 & \cdots & A_{\ell_h}(S_h)^\perp \\
\end{array}
\right]
=
\left[
\begin{array}{c|c|c|c}
0 & 0 & \cdots & 0\\
\hline
0 &0 & \cdots & 0\\
\hline
\vdots & \vdots & \ddots & \vdots \\
\hline
0 & 0 & \cdots & 0 \\
\hline
P_{\ell_1,S_1} & P_{\ell_2,S_2} & \cdots & P_{\ell_h,S_h}\\
\end{array}
\right].
\end{equation}

Let us denote the matrices in the above equation by $M_1,M_2,M_3$ such that the above equation becomes $M_1M_2=M_3$. By MR property, when we erase the coordinates corresponding to $S_1,\cdots,S_g$ in groups $1,\cdots,g$ respectively, the resulting erasure pattern is correctable. This implies that the $(ag+h)\times (ag+h)$ matrix $M_1$ is full rank. Also $M_2$ has full column rank because of its block structure. So $M_3$, which is an $(ag+h)\times h$ matrix, should have full column rank. This proves the required statement since $p_{i,S}$ is the column space of $P_{i,S}$.
\end{proof}

The above claim in particular implies that the matrices $P_{i,S}$ have full rank and that $p_{i,S}$ is a $t_i$-dimensional subspace of $\F_q^h$ for every $i$ and $S$. The following claim explains for a fixed $i$, how subspaces $\{p_{i,S}:|S|=a+t_i\}$ intersect with each other.
 
 \begin{claim}
 \label{claim:same_groups_2}
 Let $i\in [g]$ and $S,T$ be subsets of $[r]$ of size $a+t_i$ such that $|S\cap T|=\ell$.
 \begin{enumerate}
 \item If $\ell \le a$ then $p_{i,S}\cap p_{i,T}=\phi$.
 \item If $\ell =a+\ell'$ for $\ell' \ge 1$ then $\dim(p_{i,S}\cap p_{i,T})=\ell'$.
 \end{enumerate}
\end{claim}
\begin{proof}
Consider the following matrix equation:
\begin{equation}
\left[
\begin{array}{c|c}
A_i(S) & A_i(T)\\
\hline
B_i(S) & B_i(T)\\
\end{array}
\right]
\left[
\begin{array}{c|c}
A_i(S)^\perp & 0\\
\hline
0 & A_i(T)^\perp\\
\end{array}
\right]
=
\left[
\begin{array}{c|c}
0 & 0\\
\hline
P_{i,S} & P_{i,T}\\
\end{array}
\right].
\end{equation}
Let us denote the matrices that appear in the above equation to be $M_1,M_2,M_3$ in that order so that above equation becomes $M_1M_2=M_3$. The matrix $M_1$ is an $(a+h)\times 2(a+t_i)$ matrix of rank $|S\cup T|=2(a+t_i)-\ell$. This is because any $a+h$ columns of $\mattwoone{A_i}{B_i}$ are linearly independent by MR property and $|S\cup T|\le 2(a+t_i) \le a+h$  by the assumption that $a\le h-2\ceil{h/g}$. Wlog, we can reorder the columns of $M_1$ such that the first $\ell$ columns of $\mattwoone{A_i(S)}{B_i(S)}$ and $\mattwoone{A_i(T)}{B_i(T)}$ are identical.
$M_2$ is an $2(a+t_i)\times 2t_i$ matrix of full rank. $M_3$ is an $(a+h)\times 2t_i$ matrix and $\dim(p_{i,S}\cap p_{i,T}) = 2t_i-\rk(M_3)=\dim(\ker(M_3)).$ Since $\ker(M_2)=\phi$, $$\dim(p_{i,S}\cap p_{i,T})=\dim(\ker(M_3))=\dim(\img(M_2)\cap \ker(M_1)).$$

\noindent
\textbf{Case 1:} $|S\cap T|=\ell \le a$

\noindent
We need to show that $\img(M_2)\cap \ker(M_1)=\phi$. Suppose there is a non-zero vector in $\img(M_2)\cap \ker(M_1)$, say $\beta$. We completely understand the kernel of $M_1$, the only linear dependencies of the columns of $M_1$ occur because of repetitions i.e. $$\mathrm{ker}(M_1)=\linearspan{e_1-e_{a+t_i+1},\dots,e_\ell-e_{a+t_i+\ell}}.$$ 
So the first half of $\beta$ is a non-zero vector in $\img(A_i(S)^\perp)=\ker(A_i(S))$ which is supported on the first $\ell$ coordinates. But we know that any $a$ columns of $A_i(S)$ are linearly independent and so its kernel cannot contain any non-zero $\ell$-sparse vector when $\ell \le a$, leading to a contradiction.

\smallskip \noindent
\textbf{Case 2:} $|S\cap T| =\ell= a+\ell'$

\noindent We need to show that $\dim(\img(M_2)\cap \ker(M_1))=\ell'$.
\begin{itemize}
\item We will first show  that $\dim(\img(M_2)\cap \ker(M_1))\ge \ell'$.

We will exhibit $\ell'$ linearly independent vectors in $\img(M_2)\cap \ker(M_1)$. The first $a$ columns of $A_i(S)$ are linearly independent. So the next $\ell'$ columns of $A_i(S)$ can be written as linear combinations of them. This gives $\ell'$ linearly independent vectors in $\ker(A_i(S))=\img(A_i(S)^\perp)$, call them $\alpha_1,\dots,\alpha_{\ell'}$.  Since the first $a+\ell'$ columns of $A_i(S)$ and $A_i(T)$ are the same, the vectors $\alpha_1,\dots,\alpha_{\ell'}$ are also in $\ker(A_i(T))=\img(A_i(T)^\perp)$. Thus the vectors $\mattwoone{\alpha_1}{-\alpha_1},\cdots, \mattwoone{\alpha_{\ell'}}{-\alpha_{\ell'}}$ are in the column space of $M_2$. But since $\alpha_1,\cdots, \alpha_{\ell'}$ are supported on the first $a+\ell'$ coordinates and the first $a+\ell'$ columns of $\mattwoone{A_i(S)}{B_i(S)}$ and $\mattwoone{A_i(T)}{B_i(T)}$ are identical, it is easy to see that $\mattwoone{\alpha_1}{-\alpha_1},\cdots, \mattwoone{\alpha_{\ell'}}{-\alpha_{\ell'}}$ are in the kernel of $M_1$. Moreover these vectors are linearly independent because $\alpha_1,\cdots,\alpha_{\ell'}$ are linearly independent. This proves that $\dim(\img(M_2)\cap \ker(M_1)) \ge \ell'$.

\item We now show that $\dim(\img(M_2)\cap \ker(M_1))\le \ell'$.\\
Suppose $\dim(\img(M_2)\cap \ker(M_1))=\ell'' \ge \ell'+1$. So $\img(M_2)\cap \ker(M_1)$ contains a non-zero vector, say $\beta$, whose first $\ell''-1$ coordinates are zero. Since 
$$\beta \in \mathrm{ker}(M_1)=\linearspan{e_1-e_{a+t_i+1},\dots,e_\ell-e_{a+t_i+\ell}},$$ and the first $\ell''-1$ coordinates of $\beta$ are zero, $$\beta\in \linearspan{e_{\ell''}-e_{a+t_i+\ell''},\dots,e_{\ell}-e_{a+t_i+\ell}}.$$ Since $\beta\in \img(M_2)$, the first half of $\beta$ is a non-zero vector in $\img(A_i(S)^\perp)$  supported on $\ell-(\ell''-1)\le a$ coordinates. This is a contradiction because any $a$ columns of $A_i(S)$ are linearly independent and thus $\img(A_i(S)^\perp)=\ker(A_i(S))$ cannot contain a non-zero $a$-sparse vector. \qedhere
\end{itemize}
\end{proof}

Now we will show that if $q=o_{a,g,h}(n^{1+a/\ceil{h/g}})$ then a random $(h-1)$-dimensional subspace of $\F_q^h$ will contain $p_{1,S_1},p_{2,S_2},\dots,p_{g,S_g}$ for some subsets $S_1,\dots,S_g \subset [r]$ with $|S_i|=a+t_i$ with high probability, which contradicts Claim~\ref{claim:diff_groups_2}. Let $f$ be a uniformly random vector in $\F_q^h$ and let $F=\{x\in F_q^h: \inpro{x}{f}=0\}$ i.e. the set of vectors orthogonal to $f$. If $f\ne 0$, then $F$ is a $(h-1)$-dimensional subspace and if $f=0$ then $F=\F_q^h$. We want to calculate the probability that $F$ contains $p_{1,S_1},p_{2,S_2},\dots,p_{g,S_g}$ for some subsets $S_1,\dots,S_g$ conditioned on $F$ not being the entire space i.e. $f\ne 0$. Let's ignore the conditioning for now and estimate the required probability.

Fix some $i\in [g]$. Let $Z_i$ be the number of subspaces among $\{p_{i,S}: S\in \binom{[r]}{a+t_i}\}$ which are contained in $F$. We have $\Pr[Z_i>0]\ge \E[Z_i]^2/\E[Z_i^2].$ The probability that $F$ contains a fixed $p_{i,S}$ which is a $t_i$-dimensional subspace is $1/q^{t_i}$. Therefore,
$$\E[Z_i]=\sum_{S\subset [r], |S|=a+t_i} \Pr[p_{i,S}\in F]=\frac{\binom{r}{a+t_i}}{q^{t_i}}.$$ 
\begin{align*}
\E[Z_i^2]&=\sum_{S,T\in \binom{r}{a+t_i}} \Pr[p_{i,S},p_{i,T}\in F]\\
&=\sum_{\ell=0}^a\sum_{S,T: |S\cap T|=\ell} \Pr[p_{i,S},p_{i,T}\in F] + \sum_{\ell'=1}^{t_i} \sum_{S,T: |S\cap T|=a+\ell'} \Pr[p_{i,S},p_{i,T}\in F].
\end{align*}
By Claim~\ref{claim:same_groups_2}, if $|S\cap T|\le a$, then $p_{i,S}\cap p_{i,T}=\phi$ and so $$\Pr[p_{i,S},p_{i,T}\in F]=\frac{1}{q^{2t_i}}.$$ And if $|S\cap T|=a+\ell'$ then $\dim(p_{i,S}\cap p_{i,T})=\ell'$ and so $$\Pr[p_{i,S},p_{i,T}\in F]=\frac{1}{q^{2t_i-\ell'}}.$$ Therefore,
\begin{align*}
\E[Z_i^2]=\sum_{\ell=0}^a \binom{r}{a+t_i}\binom{r-(a+t_i)}{a+t_i-\ell}\binom{a+t_i}{\ell} \frac{1}{q^{2t_i}} + \sum_{\ell'=0}^{t_i} \binom{r}{a+t_i}\binom{r-(a+t_i)}{t_i-\ell'}\binom{a+t_i}{a+\ell'} \frac{1}{q^{2t_i-\ell'}}.
\end{align*}
Therefore,
\begin{align*}
\frac{\E[Z_i^2]}{\E[Z_i]^2}=1 + \sum_{\ell'=1}^{t_i} (c_{\ell'}+o_{a,g,h}(1)) \frac{q^{\ell'}}{n^{a+\ell'}}+o_{a,g,h}(1)
\end{align*}
where $c_{\ell'}$ are constants depending only on $a,g,h$ and indepedent of $n,q$.

When $q=o_{a,g,h}(n^{1+a/t_i})$, which is true since $t_i\le \ceil{h/g}$, $E[Z_i^2]/E[Z_i]^2 =1+o(1)$ and so $\Pr[Z_i>0]=1-o(1)$. By union bound, $\Pr[\forall i\in [g], Z_i>0]=1-o(1)$. Note that $q$ should grow with $n$ to have enough subspaces for  Claim~\ref{claim:same_groups_2} to hold. Therefore $\Pr[f=0]=1/q^h=o(1).$ So $$\Pr\left[\forall i\in [g], Z_i>0 \big| f\ne 0\right]\ge \Pr[\forall i\in [g], Z_i>0]-\Pr[f=0]=1-o(1)$$ which implies the required contradiction.
\end{proof}

Using the suggestion of Parikshit Gopalan~\cite{Gopalan}, we can generalize Proposition~\ref{prop:Mainlowerbound_gsmall_asmall} to the case when $a>h-2\ceil{h/g}.$ In this case, we modify the proof of Proposition~\ref{prop:Mainlowerbound_gsmall_asmall} where we only consider sets $S_i$ that have size $a+t_i$ but are constrained to contain the set $\{1,2,\ldots,a+2t_i-h\},$ as this ensures that pairwise unions still have size at most $a+h.$ Clearly, the total number of such sets is $\binom{r-a+h-2t_i}{h-t_i}.$ The rest of the proof remains the same and yields the following:
\begin{proposition}\label{prop:Mainlowerbound_gsmall_alarge}
Assume $a,h,g$ are fixed constants such that $a\ge h-2\ceil{h/g}$ and $h\ge g\ge 2$, then any maximally recoverable $(n,r,h,a,q)$-local reconstruction code with $g=n/r$ local groups must have
\begin{equation}\label{Eqn:Mainlowerbound_gsmall_alarge}
q \geq \Omega_{a,h,g}(n^{h/\ceil{h/g}-1}).
\end{equation}
\end{proposition}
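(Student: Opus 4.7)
The strategy is to follow the second-moment argument of Proposition~\ref{prop:Mainlowerbound_gsmall_asmall}, but to replace the ambient family of $(a+t_i)$-subsets of $[r]$ by a proper subfamily so that the constraint $|S\cup T|\le a+h$ needed in Claim~\ref{claim:same_groups_2} persists even when $a$ is large. Concretely, for each $i\in[g]$ I set $m_i=\max\{0,\,a+2t_i-h\}$ and let $\mathcal F_i$ be the family of subsets $S\subseteq[r]$ of size $a+t_i$ that contain the fixed prefix $\{1,\dots,m_i\}$. Any $S,T\in\mathcal F_i$ then satisfy $|S\cap T|\ge m_i$, and hence
\[
|S\cup T|=2(a+t_i)-|S\cap T|\le 2(a+t_i)-m_i\le a+h,
\]
which is exactly what the proof of Claim~\ref{claim:same_groups_2} required; Claim~\ref{claim:diff_groups_2} is unaffected by the restriction since it only uses the ability to pick $a$ columns from each group plus $h$ more.

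Next I redo the moment computation verbatim with $\mathcal F_i$ in place of $\binom{[r]}{a+t_i}$. Let $F\subseteq\F_q^h$ be a uniformly random hyperplane and $Z_i=\#\{S\in\mathcal F_i:p_{i,S}\subseteq F\}$. Then $|\mathcal F_i|=\binom{r-m_i}{h-t_i}=\Theta(r^{h-t_i})$, so $\E[Z_i]=\Theta(r^{h-t_i}/q^{t_i})$. Grouping pairs $(S,T)\in\mathcal F_i\times\mathcal F_i$ by $\ell=|S\cap T|\ge m_i$, the number of pairs with intersection $\ell$ is
\[
|\mathcal F_i|\cdot\binom{h-t_i}{\ell-m_i}\binom{r-a-t_i}{h-t_i-(\ell-m_i)}=\Theta\!\left(r^{2(h-t_i)-(\ell-m_i)}\right),
\]
and by Claim~\ref{claim:same_groups_2} the joint probability $\Pr[p_{i,S},p_{i,T}\subseteq F]$ equals $1/q^{2t_i}$ when $\ell\le a$ and $1/q^{2t_i-\ell'}$ when $\ell=a+\ell'$ with $\ell'\ge 1$. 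Collecting terms (the leading $\Theta(r^{2(h-t_i)}/q^{2t_i})$ comes from pairs with $\ell=m_i$, i.e.\ intersection equal to the prefix, which dominates by maximizing the number of free positions) yields
\[
\frac{\E[Z_i^2]}{\E[Z_i]^2}=1+O(r^{-1})+\sum_{\ell'=1}^{t_i}\Theta\!\left(\frac{q^{\ell'}}{r^{h-2t_i+\ell'}}\right).
\]

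The binding correction term is at $\ell'=t_i$, giving $q^{t_i}/r^{h-t_i}$, which is $o(1)$ iff $q=o(r^{h/t_i-1})$. Taking $t_i=\lceil h/g\rceil$ (the largest, hence most restrictive choice) and recalling $r=n/g$ with $g$ constant, this threshold becomes $q=o(n^{h/\lceil h/g\rceil-1})$. If $q$ were below this threshold we would get $\E[Z_i^2]/\E[Z_i]^2=1+o(1)$ for every $i$, so Paley--Zygmund gives $\Pr[Z_i>0]=1-o(1)$, and a union bound over the $g=O(1)$ indices yields $\Pr[\forall i:Z_i>0]>0$. Hence some single hyperplane $F$ would simultaneously contain a subspace $p_{i,S_i}$ for every $i\in[g]$, contradicting Claim~\ref{claim:diff_groups_2}, which forces $p_{1,S_1}\oplus\cdots\oplus p_{g,S_g}=\F_q^h$. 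This gives $q\ge\Omega_{a,h,g}(n^{h/\lceil h/g\rceil-1})$, as required.

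The only delicate step is the second-moment bookkeeping: one must verify that the prefix restriction simultaneously preserves the $\Theta(r^{2(h-t_i)})$ leading count of pairs intersecting exactly in the prefix (which generates the leading $1$ in the ratio and matches $\E[Z_i]^2$) and keeps every high-overlap correction $q^{\ell'}/r^{h-2t_i+\ell'}$ below $1$ at exactly the claimed threshold. All other ingredients---the random-hyperplane framework, the dimension classification of $p_{i,S}\cap p_{i,T}$, and the final union-bound-plus-contradiction---are inherited from Proposition~\ref{prop:Mainlowerbound_gsmall_asmall} with no substantive change.
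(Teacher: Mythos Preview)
Your approach is exactly the paper's: restrict each family to subsets containing the fixed prefix $\{1,\dots,a+2t_i-h\}$ so that $|S\cup T|\le a+h$ is forced, then rerun the second-moment argument of Proposition~\ref{prop:Mainlowerbound_gsmall_asmall}; your writeup in fact supplies considerably more detail than the paper's one-line sketch. One small caveat (which the paper's sketch shares): your assertion that the $\ell=m_i$ pairs contribute $\Theta(r^{2(h-t_i)}/q^{2t_i})$ tacitly assumes $m_i\le a$, which fails only when $g=2$ and $h$ is odd---but in that lone case the exponent $h/\lceil h/g\rceil-1$ drops below $1$ and the bound already follows from the trivial $q\ge r-1$, so nothing is lost.
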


\begin{proof}[Proof of Proposition~\ref{prop:LowerAsymptotic_gsmall}]
Follows immediately from Propositions~\ref{prop:Mainlowerbound_gsmall_asmall} and ~\ref{prop:Mainlowerbound_gsmall_alarge}.
\end{proof}

\section{Determinantal identities}
\label{Sec:determinantal}
For our constructions, we will need some determinantal identities which we prove here. We need the following expansion of determinant of a column partitioned matrix.
\begin{lemma}
\label{lem:Laplace_expansion}
For $i\in [\ell]$, let $F_i$ be an $h\times t_i$ matrix with $\sum_{i=1}^\ell t_i=h$. Then,
$$\det[F_1|F_2|\cdots|F_\ell]=\sum_{S_1\sqcup \cdots \sqcup S_\ell =[h], |S_i|=t_i} \sgn(S_1,\cdots,S_\ell) \prod_{i\in [\ell]} \det F_i^{(S_i)}$$ where $S_1\sqcup \dots \sqcup S_\ell$ ranges over partitions of $[h]$ such that $|S_i|=t_i$. Here $\sgn(S_1,\cdots,S_\ell)$ is the sign of the permutation taking $(1,2,\cdots,h)$ to $(\tilde{S}_1,\tilde{S}_2,\cdots,\tilde{S}_\ell)$ where $\tilde{S_i}$ is the tuple formed by ordering the elements of $S_i$ in increasing order.
\end{lemma}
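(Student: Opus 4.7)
The plan is to proceed by induction on $\ell$, reducing to the classical Laplace expansion by complementary minors (the $\ell=2$ case).

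The base case $\ell=1$ is tautological since $\det[F_1] = \det F_1^{([h])}$ and there is a unique partition $S_1 = [h]$ with sign $+1$. For $\ell=2$, I will invoke the classical Laplace expansion: writing $M = [F_1 \mid F_2]$ with $F_1 \in \F^{h \times t_1}$ and $F_2 \in \F^{h \times t_2}$, one has
\[
\det M = \sum_{\substack{S \subseteq [h] \\ |S| = t_1}} \epsilon(S) \, \det F_1^{(S)} \, \det F_2^{(\bar S)},
\]
where $\epsilon(S)$ is the sign of the permutation sending $(1,2,\ldots,h)$ to the concatenation $(\tilde S, \tilde{\bar S})$ with elements listed in increasing order. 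This can be proved directly from $\det M = \sum_{\sigma \in S_h} \sgn(\sigma) \prod_j M_{\sigma(j),j}$ by grouping permutations according to the set $S = \{\sigma(1), \ldots, \sigma(t_1)\}$: the contribution from each group factors as a product of two determinants, and the extra sign is exactly the sign of the shuffle that brings the concatenated list $(\tilde S, \tilde{\bar S})$ into increasing order.

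For the inductive step, assume the identity holds for $\ell - 1$ partitions. Group the blocks into $F_1$ and $G := [F_2 \mid \cdots \mid F_\ell]$, which has $h - t_1$ columns. Apply the $\ell=2$ case to obtain
\[
\det[F_1 \mid G] = \sum_{\substack{S_1 \subseteq [h] \\ |S_1| = t_1}} \epsilon(S_1) \, \det F_1^{(S_1)} \, \det G^{(\bar{S_1})}.
\]
Now $G^{(\bar{S_1})}$ is itself a $(h-t_1) \times (h-t_1)$ matrix partitioned into $\ell-1$ column blocks of sizes $t_2,\ldots,t_\ell$, and its rows are indexed by the ordered set $\bar{S_1}$. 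The inductive hypothesis, applied with the row indices of $G^{(\bar{S_1})}$ relabelled $1,\ldots,h-t_1$ according to their position in $\tilde{\bar{S_1}}$, gives
\[
\det G^{(\bar{S_1})} = \sum_{\substack{S_2 \sqcup \cdots \sqcup S_\ell = \bar{S_1} \\ |S_i| = t_i}} \sgn'(S_2,\ldots,S_\ell) \prod_{i=2}^\ell \det F_i^{(S_i)},
\]
where $\sgn'$ is the shuffle-sign within $\bar{S_1}$. Substituting and collecting, the only thing to check is the sign identity
\[
\epsilon(S_1) \cdot \sgn'(S_2,\ldots,S_\ell) = \sgn(S_1, S_2, \ldots, S_\ell),
\]
i.e.\ that composing the two shuffles (first bring $(\tilde{S_1}, \tilde{\bar{S_1}})$ into increasing order, then rearrange within $\tilde{\bar{S_1}}$ to produce $(\tilde{S_2}, \ldots, \tilde{S_\ell})$) yields exactly the single shuffle taking $(\tilde{S_1},\tilde{S_2},\ldots,\tilde{S_\ell})$ to $(1,2,\ldots,h)$.

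The main obstacle is the sign bookkeeping in this last step. It is routine but must be done carefully: the signs compose as signs of permutations because shuffle-signs are multiplicative under composition of the corresponding block permutations, and every partition $(S_1,\ldots,S_\ell)$ of $[h]$ into pieces of sizes $t_1,\ldots,t_\ell$ arises exactly once in the double sum. Once this composition of signs is verified, the double sum collapses to the stated single sum over ordered partitions, completing the induction.
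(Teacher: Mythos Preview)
Your inductive proof is correct, but the paper takes a more direct route. Rather than inducting on $\ell$, the paper starts from the Leibniz formula
\[
\det F = \sum_{\pi \in S_h} \sgn(\pi) \prod_{j=1}^h F_{\pi(j)j},
\]
and groups the permutations $\pi$ according to the ordered partition $(S_1,\ldots,S_\ell)$ of $[h]$ defined by $S_i = \pi(T_i)$, where $T_i$ is the set of column indices belonging to block $F_i$. For a fixed partition, the sum over $\pi$ with $\pi(T_i)=S_i$ factors as a product of sums over bijections $\sigma_i : T_i \to S_i$, each of which is $\det F_i^{(S_i)}$; the global sign $\sgn(\pi)$ decomposes as $\sgn(\tilde S_1,\ldots,\tilde S_\ell)\prod_i \sgn(\sigma_i)$ because one can sort $(\pi(1),\ldots,\pi(h))$ by first sorting within each block and then sorting the concatenation of sorted blocks.

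The trade-off: your approach reuses the classical $\ell=2$ Laplace expansion as a black box and is conceptually modular, but it pushes all the work into the sign-composition identity $\epsilon(S_1)\cdot \sgn'(S_2,\ldots,S_\ell) = \sgn(S_1,\ldots,S_\ell)$, which you correctly flag as the main obstacle. The paper's one-shot grouping avoids this recursive sign bookkeeping entirely, at the cost of unpacking the Leibniz formula from scratch. Both arguments are short and of comparable difficulty; the paper's is slightly more self-contained.
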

\begin{proof}
Given distinct integers $a_1,\cdots,a_n$, define $\sgn(a_1, a_2, \cdots, a_n):=(-1)^t$ where $t$ is number of transpositions needed to sort the elements $a_1,a_2,\cdots,a_n$ in increasing order. Thus for a permutation $\pi\in S_h$, $\sgn(\pi)=\sgn(\pi(1),\pi(2),\cdots,\pi(h))$. Let $F=[F_1|F_2|\cdots|F_\ell]$ and  for $i\in [\ell]$, let $T_i=\{t_{i-1}+1,\cdots,t_i\}$ where $t_0=0$. We can expand $\det(F)$ as:
\begin{align*}
\det(F)&=\sum_{\pi\in S_h} \sgn(\pi) \prod_{i=1}^h F_{\pi(i)i}\\
&=\sum_{S_1\sqcup \cdots \sqcup S_\ell =[h], |S_i|=t_i} \ \ \sum_{\pi:\ \pi(T_i)=S_i} \sgn(\pi) \prod_{i=1}^h F_{\pi(i)i}\\
\end{align*}
Note that if $\pi(T_i)=S_i,$ then for $i\in [\ell]$,
$$\sgn(\pi)=\sgn(\tilde{S_1},\cdots,\tilde{S_\ell}) \prod_{i=1}^\ell \sgn(\pi(t_{i-1}+1),\cdots,\pi(t_i))$$
because we can sort $(\pi(1),\cdots,\pi(h))$ first within each group to get $(\tilde{S_1},\cdots,\tilde{S_\ell})$ and then sort it to get $(1,2,\cdots,h).$ Therefore,
\begin{align*}
&\sum_{\pi:\ \pi(T_i)=S_i} \sgn(\pi) \prod_{i=1}^h F_{\pi(i)i} \\
&= \sum_{\sigma_1:T_1\to S_1, \dots,\ \sigma_\ell:T_\ell \to S_\ell} \sgn(\tilde{S_1},\cdots,\tilde{S_\ell}) \prod_{i=1}^\ell \left(\sgn(\sigma_i(t_{i-1}+1),\cdots,\sigma_i(t_i)) \prod_{j=t_{i-1}+1}^{t_i} F_{\sigma_i(j)j} \right)\\
&\hspace{3cm}\tag{where the summation is over all bijections $\sigma_i:T_i\to S_i$}\\
&= \sgn(\tilde{S_1},\cdots,\tilde{S_\ell}) \prod_{i=1}^\ell \left(\sum_{\sigma_i:T_i\to S_i}  \sgn(\sigma_i(t_{i-1}+1),\cdots,\sigma_i(t_i)) \prod_{j=t_{i-1}+1}^{t_i} F_{\sigma_i(j)j}\right)\\
&= \sgn(\tilde{S_1},\cdots,\tilde{S_\ell}) \prod_{i=1}^\ell \det F_i^{(S_i)}. \qedhere
\end{align*}
\end{proof}
%
\begin{lemma}
\label{lem:blockdet_gen}
For $i\in [\ell]$, let $C_i$ be an $a\times (a+t_i)$ matrix and $D_i$ be an $h \times (a+t_i)$ matrix for some $t_1+t_2+\cdots+t_\ell=h$ where $t_i\ge 1$. Then,
\begin{align*}
\det \left[
\begin{array}{c|c|c|c}
C_1 & 0 & \cdots & 0\\
\hline
0 &C_2 & \cdots & 0\\
\hline
\vdots& \vdots & \ddots & \vdots \\
\hline
0 & 0 & \cdots & C_\ell \\
\hline
D_1 & D_2 & \cdots & D_\ell \\
\end{array}
\right] = (-1)^{a(\sum_{i=1}^\ell t_i(\ell-i))}
\sum_{S_1\sqcup \cdots \sqcup S_\ell =[h], |S_i|=t_i} \sgn(S_1,\cdots,S_\ell) \prod_{i\in [\ell]} \det \mattwoone{C_i}{D_i^{(S_i)}}
\end{align*}
where $S_1\sqcup \dots \sqcup S_\ell$ ranges over partitions of $[h]$ such that $|S_i|=t_i$ and $\sgn(S_1,\cdots,S_\ell)$ is defined as in Lemma~\ref{lem:Laplace_expansion}.
\end{lemma}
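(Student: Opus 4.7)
\medskip

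\textbf{Proof plan for Lemma~\ref{lem:blockdet_gen}.}

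The plan is to apply generalized Laplace expansion twice: first to the full $(a\ell+h)\times(a\ell+h)$ matrix along its bottom $h$ rows (the heavy-parity rows), and then to Lemma~\ref{lem:Laplace_expansion} on the resulting $h\times h$ minors of $[D_1|\cdots|D_\ell]$. The pairing that emerges will then be repackaged as a product of $(a+t_i)\times(a+t_i)$ determinants, each of which is itself the Laplace expansion of $\det\mattwoone{C_i}{D_i^{(S_i)}}$ along its bottom $t_i$ rows.

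More concretely, I would first write the big matrix as $M$ and apply cofactor expansion along the last $h$ rows. This gives a signed sum, indexed by $h$-element subsets $T$ of the $a\ell+h$ column indices, of the product $\det(M_{\text{top},[n]\setminus T})\cdot\det(M_{\text{bot},T})$. Because the top part is block-diagonal with blocks $C_i$ of size $a\times(a+t_i)$, this product vanishes unless $T$ meets the $i$-th column block in exactly $t_i$ indices for every $i$. So $T$ factors uniquely as $T=V_1\sqcup\cdots\sqcup V_\ell$ with $V_i$ of size $t_i$ inside block $i$, its complement $U_i$ in block $i$ has size $a$, and the top determinant equals $\prod_i\det C_i(U_i)$ up to a sign $\epsilon_1(V_1,\ldots,V_\ell)$ coming from the interleaving of $U_i$ and $V_i$ within each block and from the global complementation sign of the Laplace expansion. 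The bottom determinant is exactly $\det[D_1(V_1)|\cdots|D_\ell(V_\ell)]$, to which I apply Lemma~\ref{lem:Laplace_expansion}, producing a sum over row-partitions $S_1\sqcup\cdots\sqcup S_\ell=[h]$ with $|S_i|=t_i$ of $\sgn(S_1,\ldots,S_\ell)\prod_i\det D_i(V_i)^{(S_i)}$.

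Next I would swap the order of summation so that the outer sum is over the partitions $(S_1,\ldots,S_\ell)$ and the inner sum, for fixed $(S_i)$, factors as a product over $i$ of $\sum_{V_i}\epsilon_i\,\det C_i(U_i)\,\det D_i(V_i)^{(S_i)}$, where $\epsilon_i$ is the local piece of $\epsilon_1$. The inner sum for each $i$ is precisely the Laplace expansion of $\det\mattwoone{C_i}{D_i^{(S_i)}}$ along its bottom $t_i$ rows (a $(a+t_i)\times(a+t_i)$ determinant), so it equals that determinant up to a fixed sign depending only on $a$ and $t_i$. Collecting everything yields the claimed identity.

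The only real obstacle is the sign bookkeeping: one must track (i) the global sign from the generalized Laplace expansion of $M$ which depends on which of the last $h$ rows are paired with which column positions, (ii) the local signs from pulling $V_i$ out of each block, and (iii) the signs from the final $t_i$-row Laplace expansion inside each $(a+t_i)\times(a+t_i)$ determinant. The cleanest way is to first move each block of $a$ rows of $C_i$ so that it is immediately above the row block $D_i^{(S_i)}$ that it eventually pairs with, counting the required row swaps; the net effect is $(-1)^{a\sum_{i=1}^{\ell}t_i(\ell-i)}$, because the $i$-th $C$-block must pass over the $t_{i+1}+\cdots+t_\ell$ rows of $D_{i+1}^{(S_{i+1})},\ldots,D_\ell^{(S_\ell)}$ that currently sit between it and its partner, each of the $a$ rows contributing an independent sign. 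The contribution $\sgn(S_1,\ldots,S_\ell)$ comes out of the Laplace step applied to $[D_1(V_1)|\cdots|D_\ell(V_\ell)]$, and the interleaving signs within each block cancel against those inside the final $(a+t_i)\times(a+t_i)$ Laplace expansion. Once this accounting is carried out, the formula in the statement follows.
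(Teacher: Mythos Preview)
Your proposal is correct, but the paper takes a shorter and cleaner route. Instead of expanding $M$ along its bottom $h$ rows (column subsets $T$), then invoking Lemma~\ref{lem:Laplace_expansion} on the resulting $h\times h$ minor of $[D_1|\cdots|D_\ell]$, and finally reassembling each factor via yet another Laplace expansion, the paper applies Lemma~\ref{lem:Laplace_expansion} \emph{once}, directly to the big matrix along its column partition $F=[F_1|\cdots|F_\ell]$ with $F_i=\mattwoone{\cdots C_i \cdots}{D_i}$. This gives a sum over \emph{row} partitions $T_1\sqcup\cdots\sqcup T_\ell=[a\ell+h]$ with $|T_i|=a+t_i$, and the block-diagonal structure immediately kills every term except those with $T_i\supseteq[(i-1)a+1,ia]$ (otherwise $F_i^{(T_i)}$ has a zero row). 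Writing $T_i=[(i-1)a+1,ia]\cup S_i$ yields $\det F_i^{(T_i)}=\det\mattwoone{C_i}{D_i^{(S_i)}}$ in one stroke, with only a single sign $\sgn(T_1,\ldots,T_\ell)$ to unwind. Your approach works, but it trades one Laplace step for three (row expansion of $M$, column expansion of the $h\times h$ minor, and row expansion of each $(a+t_i)\times(a+t_i)$ block), with correspondingly heavier sign bookkeeping; the paper's single application avoids all of that.
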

\begin{proof}
Let
\begin{align*}
F=\left[F_1|F_2|\cdots|F_\ell\right]=
 \left[
\begin{array}{c|c|c|c}
C_1 & 0 & \cdots & 0\\
\hline
0 &C_2 & \cdots & 0\\
\hline
\vdots& \vdots & \ddots & \vdots \\
\hline
0 & 0 & \cdots & C_\ell \\
\hline
D_1 & D_2 & \cdots & D_\ell \\
\end{array}
\right].
\end{align*}
Let $[p,q]$ be the integers between $p$ and $q,$ i.e., $[p,q]=\{i:p\le i\le q\}$. By Lemma~\ref{lem:Laplace_expansion},
\begin{align*}
\det F =\det[F_1|F_2|\cdots|F_\ell]&=\sum_{T_1\sqcup \cdots \sqcup T_\ell =[a\ell+h], |T_i|=a+t_i} \sgn(T_1,\cdots,T_\ell) \prod_{i\in [\ell]} \det F_i^{(T_i)}
\end{align*}
Note that the only terms which survive correspond to partitions $T_1\sqcup T_2 \sqcup \cdots \sqcup T_\ell$ of rows of $F$ such that for every $i\in [\ell]$, $T_i$ contains the rows of $C_i$ (i.e. $[(i-1)a+1,ia]$). In the other terms, there exists some $i\in [\ell]$ such that $F_i^{(T_i)}$ contains a zero row and thus $\det F_i^{(T_i)}=0$. Such partitions are given by $T_i=[(i-1)a+1,ia]\cup S_i$ where $S_1\sqcup S_2 \cdots \sqcup S_\ell$ is some partition of rows of $[D_1|D_2|\cdots|D_\ell]$ such that $|S_i|=t_i$. So the expansion for $\det F$ can be written as:
\begin{align*}
\det F&=\sum_{S_1\sqcup \cdots \sqcup S_\ell =[a\ell+1,a\ell+h], |S_i|=t_i} \sgn([1,a]\cup S_1,\cdots,[(\ell-1)a+1,\ell a]\cup S_\ell) \prod_{i\in [\ell]} \det F_i^{([(i-1)a,ia]\cup S_i)}\\
&= (-1)^{a(\sum_{i=1}^\ell t_i(\ell-i))} \sum_{S_1\sqcup \cdots \sqcup S_\ell =[a\ell+1,a\ell+h], |S_i|=t_i} \sgn([1,\ell a],S_1,S_2,\cdots S_\ell) \prod_{i\in [\ell]} \det F_i^{([(i-1)a,ia]\cup S_i)}\\
 &=(-1)^{a(\sum_{i=1}^\ell t_i(\ell-i))} \sum_{S_1\sqcup \cdots \sqcup S_\ell =[h], |S_i|=t_i} \sgn(S_1,S_2,\cdots S_\ell) \prod_{i\in [\ell]} \det \mattwoone{C_i}{D_i^{(S_i)}}. \qedhere
\end{align*}

\end{proof}

\noindent We will now prove Lemma~\ref{lem:blockdet}, which was used in our constructions in Sections~\ref{Sec:H2} and \ref{Sec:H3}.
\begin{proof}[Proof of Lemma~\ref{lem:blockdet}]
After applying Lemma~\ref{lem:blockdet_gen}, we just need to note that 
$$\sum_{S_1\sqcup \cdots \sqcup S_h =[h], |S_i|=1} \sgn(S_1,\cdots,S_\ell) \prod_{i\in [\ell]} \det \mattwoone{C_i}{D_i^{(S_i)}}= \sum_{\pi} \sgn(\pi) \prod_{i\in [h]} \det \mattwoone{C_i}{D_i^{(\pi(i))}}$$ where the last summation is over all permutations $\pi$ of $h$ elements which is the exactly the required determinant.
\end{proof}

\section{Proof of Lemma~\ref{lem:subgroup}}
\label{Sec:subgroup_lemma}
The goal of the section is to prove Lemma~\ref{lem:subgroup} which is restated here for convenience.
\begin{lemma}[Restatement of Lemma~\ref{lem:subgroup}]
Let $r,n$ be some positive integers with $r\le n$. Then there exists a finite field $\F_q$ with $q=O(n)$ such that the multiplicative group $\F_q^*$ contains a subgroup of size at least $r$ and with at least $n/r$ cosets. If additionally we require that the field has characteristic two, then such a field exists with $q=n\cdot \exp(O(\sqrt{\log n})).$
\end{lemma}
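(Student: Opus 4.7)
The plan is to recast the lemma as a purely number-theoretic question and then exploit divisibility structure of $q-1$. Since $\F_q^*$ is cyclic of order $q-1$, it contains a subgroup $G$ of size $\ge r$ with $\ge n/r$ cosets if and only if $q-1$ admits a factorization $q-1 = d\cdot c$ with $d \ge r$ and $c \ge n/r$; equivalently, $q-1$ has a divisor in the interval $[r,\; (q-1) r / n]$. In particular any such $q$ satisfies $q \ge n+1$. Writing $m = \lceil n/r \rceil$, a clean sufficient condition is that $q$ is a prime power with $m \mid q-1$ and $q - 1 \ge mr$, because then $d = (q-1)/m \ge r$ has index $m \ge n/r$.

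For the first statement (any characteristic, $q = O(n)$), I would split into cases according to the size of $r$. When $r$ is large enough that $m = \lceil n/r \rceil$ is small, one can apply the Siegel--Walfisz theorem (or even an elementary argument for $m \le \log n$) to find a prime $p \equiv 1 \pmod m$ in the window $[n+1, Cn]$ for a universal constant $C$, and then take $q = p$. When $r$ is moderate or large, I would instead fix $p$ by Bertrand's postulate in $[n+1, 2n]$ and argue that $p-1$ has a divisor in $[r, 2r]$ for a positive fraction of such primes, using density results on integers with a prescribed divisor range (multiplication-table type bounds). Combined across cases, this yields a prime power $q = O(n)$ with the required divisor.

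For the characteristic-two statement, restrict to $q = 2^k$, so $|\F_q^*| = 2^k - 1$. The key identity $2^k - 1 = \prod_{d \mid k} \Phi_d(2)$ shows that $2^k - 1$ is divisible by $2^e - 1$ for every divisor $e$ of $k$, so choosing $k$ to have many small divisors gives $2^k - 1$ a rich divisor lattice. Concretely, take $k$ to be the smallest smooth integer with $2^k \ge n$ whose prime factors are at most $y := \exp(O(\sqrt{\log n}))$; standard estimates on smooth numbers show such a $k$ exists with $k = \log_2 n + O(\sqrt{\log n})$, giving $q = 2^k = n \cdot \exp(O(\sqrt{\log n}))$. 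I would then verify that for this $k$, some divisor $e$ of $k$ yields $2^e - 1$ (or a divisor of it) lying in the target interval $[r, (q-1)r/n]$; the density of divisors of $k$ translates into dense coverage of divisors of $2^k - 1$ at various scales.

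The main obstacle is controlling the "middle regime" where $r$ is polynomially related to $n$, since neither naive Bertrand-type arguments nor Linnik's theorem alone suffices to keep $q$ linear in $n$. Handling this regime requires combining the prime number theorem in arithmetic progressions with the multiplication-table phenomenon (Ford's theorem) to ensure that, among primes $p$ of size $\Theta(n)$, a positive proportion have $p - 1$ endowed with a divisor in any prescribed dyadic window $[r, 2r]$; for the characteristic-two version the analogous obstruction is showing that some divisor of the smooth $k$ lands in a narrow target, which is the reason one loses the $\exp(O(\sqrt{\log n}))$ factor.
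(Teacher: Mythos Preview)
Your reformulation is correct: since $\F_q^*$ is cyclic, the lemma reduces to finding a prime power $q$ such that $q-1$ has a divisor in $[r,\,(q-1)r/n]$. However, your plan for the general-characteristic case has a real gap in the ``middle regime.'' The assertion that a \emph{positive proportion} of primes $p\in[n,2n]$ have $p-1$ with a divisor in $[r,2r]$ is false: by the Erd\H{o}s multiplication-table phenomenon (sharpened by Ford), the density of integers $\le x$ with a divisor in a dyadic window $[y,2y]$ is $\Theta\bigl((\log y)^{-\delta}(\log\log y)^{-3/2}\bigr)$ with $\delta=1-(1+\ln\ln 2)/\ln 2\approx 0.086$, and the analogous statement holds for shifted primes $p-1$. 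This density tends to $0$, so your ``positive fraction'' step fails as stated; salvaging it to ``at least one such prime'' would require lower bounds on divisors of shifted primes that are considerably deeper than what is needed here, and Siegel--Walfisz/Linnik only cover the extremes $m\le(\log n)^{O(1)}$ or $m\le n^{1/L}$, leaving the middle untreated.

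The paper sidesteps this entirely by \emph{averaging over the modulus} rather than fixing it. Set $a=\min(r,n/r)\le\sqrt{n}$ and $b=\max(r,n/r)$. One seeks \emph{some} $A\in[a,2a]$ together with a prime $p\equiv 1\pmod A$ in $[4ab,8ab]$; then $B=(p-1)/A\ge b$ and $q=p=O(n)$ works. Since the moduli $A$ satisfy $A\le 2a\le 2\sqrt{n}<\sqrt{x}$ with $x=4ab$, the Bombieri--Vinogradov theorem (in the form of \cite{BFI86}) bounds $\sum_{m<\sqrt x}\bigl|\pi(x,2x;m,1)-(\Li(2x)-\Li(x))/\phi(m)\bigr|$ and forces at least one $m\in[a,2a]$ to admit such a prime. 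No information about divisors of $p-1$ in narrow windows is required; the freedom to vary the modulus is exactly what makes the problem tractable.

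For characteristic two, your smooth-$k$ idea is in the right spirit, but as written it is too vague: with $y=\exp(O(\sqrt{\log n}))$ the smoothness condition on $k\approx\log_2 n$ is vacuous, and even with a sensible $y$ the divisors $2^e-1$ for $e\mid k$ need consecutive $e$'s to differ by $O(\sqrt{\log n})$ throughout $[1,k]$, which smoothness alone does not give. The paper instead exploits the concrete factorization
\[
2^{\ell\cdot 2^m}-1=(2^\ell-1)\prod_{i=0}^{m-1}\bigl(1+2^{\ell\cdot 2^i}\bigr),
\]
so that products over subsets $S\subseteq\{0,\dots,m-1\}$ of the factors $(1+2^{\ell\cdot 2^i})$ realize divisors $\approx 2^{\ell\alpha}$ for every $\alpha=\sum_{i\in S}2^i\in\{0,\dots,2^m-1\}$ via binary expansion. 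This guarantees divisors at every scale up to multiplicative gap $2^\ell$; choosing $\ell$ and $2^m$ each $\Theta(\sqrt{\log n})$ then yields $q=2^{\ell\cdot 2^m}=n\cdot\exp(O(\sqrt{\log n}))$ with a divisor in the required window for every $r$.
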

\noindent
We will need some estimates from analytic number theory, we will setup some notation first.
\begin{align*}
&\pi(x;m,a): \text{number of primes $p\le x$ such that $p\equiv a \mod m$}\\
&\pi(x,y;m,a)=\pi(y;m,a)-\pi(x;m,a)\\
&\Li(x)=\int_2^x \frac{1}{\ln t}dt\\
&(m,a):\ \text{greatest common divisor of $m$ and $a$}\\
&\phi(m):\ \text{number of positive integers $a\le m$ such that $(a,m)=1$ (Euler's totient function)}
\end{align*}
By the prime number theorem, the number of primes $\le x$ is approximately $\Li(x)=\Theta(x/\log x)$. So if the primes are equidistributed among different congruence classes of $m$ with no obvious divisors (i.e. $a\mod m$ where $(a,m)=1$), then we expect to see approximately $\Li(x)/\phi(m)$ primes in each such congruence class. The following theorem gives an upper bound on the error term in this approximation averaged over $m< \sqrt{x}(\log x)^A$.

\begin{theorem}[Theorem from \cite{BFI86} (Page 250)]
\label{Thm:Bomb_Vinogradov}
Let $a\ne 0, A\ge 0$ be some fixed constants and $x\ge 3$. We then have
$$\sum\limits_{(m,a)=1;\ m<\sqrt{x}(\log x)^A}\left|\pi(x;m,a)-\frac{\Li(x)}{\phi(m)}\right| \lesssim_{a,A} x\frac{(\log\log x)^B}{(\log x)^3}$$
where $B$ is an absolute constant.
\end{theorem}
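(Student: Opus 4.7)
The statement is taken verbatim from the Bombieri--Friedlander--Iwaniec paper \cite{BFI86}, so the ``proof'' in the present context is simply a citation --- the paper does not reprove it, but uses it as a black box to build a prime with the desired arithmetic progression structure for Lemma~\ref{lem:subgroup}. Any genuine proposal must therefore describe how one would \emph{reconstruct} the BFI estimate itself, which is a theorem of analytic number theory well outside the coding-theoretic core of this work. Below I sketch the standard route.

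The plan is to upgrade the classical Bombieri--Vinogradov bound --- which gives $\sum_{m<\sqrt{x}/(\log x)^B}\max_{(a,m)=1}|\pi(x;m,a)-\Li(x)/\phi(m)|\ll x/(\log x)^A$ --- by controlling the \emph{individual} error $\pi(x;m,a)-\Li(x)/\phi(m)$ for a \emph{fixed} residue $a$ as $m$ ranges up to just past $\sqrt{x}$. First I would replace the von~Mangoldt function $\Lambda$ by a bilinear decomposition via a Heath--Brown or Vaughan identity, writing $\Lambda=\alpha*\beta+\alpha'*\beta'+\cdots$ as sums of convolutions of arithmetic functions supported in dyadic ranges $[M,2M]\times[N,2N]$ with $MN\asymp x$. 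After this step, proving the theorem reduces to proving ``Type I'' and ``Type II'' estimates for $\sum_m |\sum_{\substack{mn\le x \\ mn\equiv a(m)}}\alpha_m\beta_n-\text{main term}|$ with $m$ in the desired range.

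The Type I sums (where one of the factors is smooth, e.g.\ $\beta_n=1$) are handled by Poisson summation on the arithmetic progression followed by Weil's bound for complete Kloosterman sums. The Type II sums (genuinely bilinear, with both $\alpha$ and $\beta$ supported on large intervals) are where the main novelty of BFI lies: one applies Linnik's dispersion method, which after Cauchy--Schwarz reduces everything to estimating an average of incomplete Kloosterman sums $\sum_{n}S(a,b;m)$ over a family of moduli. The \emph{key input} is Deligne's bound for Kloosterman sums, combined with the Kuznetsov formula / spectral theory of automorphic forms to gain additional cancellation on average over $m$; this is exactly the ingredient that lets one cross the $\sqrt{x}$ barrier and also squeeze out the $(\log x)^{-3}(\log\log x)^B$ savings. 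Finally one reassembles the pieces, the extra factor of $(\log\log x)^B$ arising from the Heath--Brown decomposition bookkeeping and from standard sieve losses.

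The principal obstacle --- and the reason this paper simply cites \cite{BFI86} --- is the spectral/Kloosterman input: replicating the averaged Kloosterman estimates (either via Deshouillers--Iwaniec bounds on sums of Kloosterman sums or the Kuznetsov trace formula) would require a substantial analytic number theory apparatus unrelated to the coding-theoretic themes of the paper. Consequently, my proof proposal is to simply invoke \cite[Theorem, p.~250]{BFI86}; a full reproof would essentially repeat the 30+ page argument of that paper.
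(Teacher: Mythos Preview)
Your proposal is correct and matches the paper's approach: the paper does not prove this theorem at all but simply states it as a cited result from \cite{BFI86} and uses it as a black box to derive Corollary~\ref{cor:PNT_error} and then Lemma~\ref{lem:NT_lemma_prime}. Your surrounding sketch of the BFI argument (Heath--Brown decomposition, dispersion method, spectral bounds for Kloosterman sums) is accurate extra context, but the paper contains none of it---your bottom line, ``invoke \cite[Theorem, p.~250]{BFI86},'' is exactly what the paper does.
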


Applying the above theorem with $a=1,A=0$ for $x$ and $2x$, and using triangle inequality, we get the following corollary.
\begin{corollary}\label{cor:PNT_error}
For $x$ large enough,
$$\sum\limits_{m< \sqrt{x}}\left|\pi(x,2x;m,1)-\frac{(\Li(2x)-\Li(x))}{\phi(m)}\right| \lesssim x\frac{(\log\log x)^B}{(\log x)^3}$$
where $B$ is an absolute constant.
\end{corollary}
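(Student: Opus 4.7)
The plan is to derive Corollary~\ref{cor:PNT_error} by applying Theorem~\ref{Thm:Bomb_Vinogradov} at both endpoints of the interval $[x,2x]$ (with $a=1$, $A=0$) and combining them via the triangle inequality. Note that the coprimality condition $(m,a)=1$ is vacuous when $a=1$, so the theorem applies to every $m<\sqrt{x}$ without further restriction.

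First, I would invoke Theorem~\ref{Thm:Bomb_Vinogradov} at the point $x$ with $a=1,A=0$, obtaining
\[
\sum_{m<\sqrt{x}}\Bigl|\pi(x;m,1)-\tfrac{\Li(x)}{\phi(m)}\Bigr| \;\lesssim\; x\,\tfrac{(\log\log x)^B}{(\log x)^3}.
\]
Next, I would apply the same theorem at the point $2x$. This yields
\[
\sum_{m<\sqrt{2x}(\log 2x)^0}\Bigl|\pi(2x;m,1)-\tfrac{\Li(2x)}{\phi(m)}\Bigr| \;\lesssim\; 2x\,\tfrac{(\log\log 2x)^B}{(\log 2x)^3}.
\]
Since $\{m<\sqrt{x}\}\subseteq\{m<\sqrt{2x}\}$, and since for $x$ sufficiently large we have $\log\log(2x)\le \log\log x + 1$ and $\log(2x)\ge \log x$, the right-hand side is still bounded by $O\!\bigl(x(\log\log x)^B/(\log x)^3\bigr)$ (absorbing the factor of $2$ and the shifts into the implicit constant). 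Restricting the summation to $m<\sqrt{x}$ only decreases the left-hand side, so
\[
\sum_{m<\sqrt{x}}\Bigl|\pi(2x;m,1)-\tfrac{\Li(2x)}{\phi(m)}\Bigr| \;\lesssim\; x\,\tfrac{(\log\log x)^B}{(\log x)^3}.
\]

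Finally, for each $m<\sqrt{x}$, by the definition $\pi(x,2x;m,1)=\pi(2x;m,1)-\pi(x;m,1)$ and the triangle inequality,
\[
\Bigl|\pi(x,2x;m,1)-\tfrac{\Li(2x)-\Li(x)}{\phi(m)}\Bigr|
\;\le\;
\Bigl|\pi(2x;m,1)-\tfrac{\Li(2x)}{\phi(m)}\Bigr|
+\Bigl|\pi(x;m,1)-\tfrac{\Li(x)}{\phi(m)}\Bigr|.
\]
Summing this over $m<\sqrt{x}$ and applying the two displayed bounds gives the claimed inequality (with the same constant $B$, up to an adjustment of the implicit constant in $\lesssim$).

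The argument is entirely mechanical: the only care needed is to verify that shifting from $x$ to $2x$ in the application of Theorem~\ref{Thm:Bomb_Vinogradov} does not disturb the shape of the error bound, which follows because $\log\log$ and $\log$ are slowly varying, and that the stated theorem's hypothesis $(m,a)=1$ imposes no restriction when $a=1$. There is no genuine obstacle; the corollary is essentially an immediate consequence.
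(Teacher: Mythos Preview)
Your proposal is correct and follows exactly the approach the paper takes: apply Theorem~\ref{Thm:Bomb_Vinogradov} with $a=1$, $A=0$ at both $x$ and $2x$, then combine via the triangle inequality. The paper's proof is in fact just a one-line remark to this effect, so your write-up is more detailed than the original.
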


\begin{lemma}\label{lem:NT_lemma_prime}
Let $a\le b$ be some positive integers. Then there exists $A\ge a$,$B\ge b$ such that $AB+1$ is a prime and $AB=O(ab)$.
\end{lemma}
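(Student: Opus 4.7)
The plan is to produce the desired pair $(A,B)$ by locating a prime of the form $p = mk + 1$ with modulus $m$ in the window $[a, 2a]$ and with $p$ itself restricted to the dyadic interval $[x, 2x]$ for $x := 5ab$; we then set $A := m$ and $B := k = (p-1)/m$. Under these choices, $A = m \ge a$ is automatic, $B = (p-1)/m \ge x/(2a) = 5b/2 \ge b$ follows from the lower end of the window, and $AB = p - 1 \le 2x = 10ab$ gives $AB = O(ab)$. Crucially, the hypothesis $a \le b$ yields $2a \le 2\sqrt{ab} < \sqrt{5ab} = \sqrt{x}$, so the whole modulus range $[a, 2a]$ lies strictly inside the range $m < \sqrt{x}$ in which Corollary~\ref{cor:PNT_error} is applicable.

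It therefore suffices to verify $N := \sum_{m=a}^{2a} \pi(x, 2x; m, 1) > 0$. Since $[a, 2a] \subseteq [1, \sqrt{x})$, passing from the full sum in Corollary~\ref{cor:PNT_error} to the subsum over $[a,2a]$ via the triangle inequality gives
$$N \ge (\Li(2x) - \Li(x)) \sum_{m=a}^{2a} \frac{1}{\phi(m)} \; - \; C_1 \cdot \frac{x (\log\log x)^B}{(\log x)^3}$$
for some absolute constant $C_1$. The prime number theorem gives $\Li(2x) - \Li(x) = \Theta(x/\log x)$, while $\phi(m) \le m$ together with a simple integral estimate yields $\sum_{m=a}^{2a} 1/\phi(m) \ge \sum_{m=a}^{2a} 1/m \ge \ln 2$ for every $a \ge 1$. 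Hence the main term is $\Omega(ab/\log(ab))$, which dominates the error as soon as $ab$ is larger than some absolute threshold $N_0$; for all $(a,b)$ with $ab \ge N_0$ we conclude $N > 0$, producing the desired $(A, B)$.

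The finitely many small cases $ab < N_0$ are dispatched by explicit inspection: for each such pair $(a, b)$ one can exhibit a concrete prime $p = AB + 1$ with $A \ge a$, $B \ge b$, and $AB$ bounded by a constant $C_2$ depending only on $N_0$, which is absorbed into the implicit constant in $AB = O(ab)$. The main technical point is the simultaneous requirement that $[a, 2a] \subseteq [1, \sqrt{x})$ (so that Corollary~\ref{cor:PNT_error} covers the whole modulus range) and that $x/(2a) \ge b$ (so that $B \ge b$ is forced by the window); both conditions are satisfied by a single constant multiple of $ab$ precisely because $a \le b$, which is why the proof goes through uniformly.
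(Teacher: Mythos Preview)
Your argument is correct and follows essentially the same approach as the paper: both proofs pick a dyadic window around a constant multiple of $ab$, restrict the modulus $m$ to $[a,2a]$ (which lies below $\sqrt{x}$ precisely because $a\le b$), and then use Corollary~\ref{cor:PNT_error} to show that the total count of primes $p\equiv 1\pmod m$ in this window is positive once $ab$ is large. The only cosmetic differences are your choice $x=5ab$ versus the paper's $x=4ab$, a direct lower bound on $N$ rather than an argument by contradiction, and your use of $\sum_{m=a}^{2a}1/\phi(m)\ge\ln 2$ in place of the paper's cruder $\ge 1/2$.
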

\begin{proof}
If there exists some $A$ such that $a\le A \le 2a$ and there is a prime $p$ between $4ab+1$ and $8ab$ which is congruent to $1 \mod A$, then we can take $B=(p-1)/A\ge b$. Suppose this is not true, we will arrive at a contradiction. For every $a\le m \le 2a$, we have $\pi(4ab,8ab;m,1)=0.$ Applying corollary~\ref{cor:PNT_error} with $x=4ab$, we get
\begin{align*}
ab\frac{(\log\log ab)^B}{(\log ab)^3} &\gtrsim \sum_{m< 2\sqrt{ab}} \left|\pi(4ab,8ab;m,1)-\frac{(\Li(8ab)-\Li(4ab))}{\phi(m)}\right|\\
&\ge \sum_{a\le m< 2a} \left|\pi(4ab,8ab;m,1)-\frac{(\Li(8ab)-\Li(4ab))}{\phi(m)}\right|\\
&=\sum_{a\le m< 2a} \frac{(\Li(8ab)-\Li(4ab))}{\phi(m)}\\
&\ge a \frac{\Li(8ab)-\Li(4ab)}{2a} \gtrsim \frac{ab}{\log(ab)}
\end{align*}
which is a contradiction when $ab$ is large enough.
\end{proof}

In practice, it is desirable to work with fields of characteristic two, the following lemma gives us such fields.

\begin{lemma}\label{lem:NT_lemma_two}
Let  $a,b$ be some positive integers and let $n=ab$. Then there exists $A\ge a$, $B\ge b$ such that $q=AB+1$ is a power of two and $q=n\cdot \exp(O\sqrt{\log n}).$
\end{lemma}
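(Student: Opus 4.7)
I would reduce the claim to finding a positive integer $k\le K+O(\sqrt K)$, where $K:=\lceil\log_2 n\rceil$, together with a divisor $A$ of $2^k-1$ in the interval $[a,(2^k-1)/b]$; then $q:=2^k$ and $B:=(2^k-1)/A$ give the desired factorization with $q\le n\cdot\exp(O(\sqrt{\log n}))$.

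The natural candidate is $A=2^t-1$ with $t:=\lceil\log_2(a+1)\rceil$, so that $a\le A<2a$ and $A\mid 2^k-1$ precisely when $t\mid k$. Choosing $k=tm$ with $m:=\lceil 1+(\log_2 b)/t\rceil$ gives $B=(2^k-1)/A=1+2^t+\cdots+2^{t(m-1)}\ge 2^{t(m-1)}\ge b$, together with $k\le 2t+\log_2 b$, and hence $q\le 8an$. Whenever $a\le\exp(O(\sqrt{\log n}))$ this already matches the desired bound, and a symmetric argument, with the roles of $A,a$ and $B,b$ swapped, handles the case $b\le\exp(O(\sqrt{\log n}))$.

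The main challenge is the ``balanced'' regime in which both $a,b>\exp(\sqrt{\log n})$. In that regime the construction above overshoots by a factor of $2^t\gg\exp(\sqrt K)$, which is too much, so my plan is to let $t$ range over a window $[\lceil\log_2(a+1)\rceil,\,\lceil\log_2(a+1)\rceil+O(\sqrt K)]$ and find a $t$ in the window for which $K\bmod t\ge t-O(\sqrt K)$, equivalently $\lceil K/t\rceil\cdot t\le K+O(\sqrt K)$. For such a $t$ the choice $A=2^t-1$, $k=\lceil K/t\rceil\cdot t$ simultaneously gives $A\ge a$, $k\le K+O(\sqrt K)$, and $B\ge 2^{k-t}\ge b$ (using $k-t\ge\log_2 b$ in the balanced regime).

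The hard step is thus proving existence of a good $t$ in the balanced case: for every $K$ and every $t_0:=\lceil\log_2(a+1)\rceil$ lying in $[\sqrt K,K-\sqrt K]$, one must find $t\in[t_0,t_0+O(\sqrt K)]$ with $\lceil K/t\rceil\cdot t-K\le O(\sqrt K)$. A heuristic count shows that the expected number of good $t$ in the window is $\Theta(1)$, so a rigorous proof requires a careful Diophantine-approximation argument. I expect it to follow from Dirichlet's approximation theorem applied to $K/t_0$ with denominator bound $\Theta(\sqrt K)$, producing a rational $m/t$ with $t$ in the window that approximates $K/t_0$ to within $O(1/\sqrt K)$; an alternative is a pigeonhole/counting argument on the divisors of the shifts $K,K-1,\dots,K-O(\sqrt K)$ intersecting the candidate window of exponents.
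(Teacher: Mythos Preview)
Your reduction to finding $t\in[t_0,t_0+O(\sqrt K)]$ with $\lceil K/t\rceil\cdot t-K\le O(\sqrt K)$ is a genuine gap: that Diophantine statement is \emph{false} for any fixed choice of the implied constants. Pick any constant $C$ and then any integer $m_0>C$. For all sufficiently large $K$ (say $K>C^2 m_0^4$) choose
\[
t_0\in\Bigl(\tfrac{K}{m_0}+\tfrac{C\sqrt K}{m_0},\ \tfrac{K}{m_0-1}-C\sqrt K\Bigr),
\]
which is a nonempty interval since $K/(m_0(m_0-1))>C\sqrt K(1+1/m_0)$. Then for every $t\in[t_0,t_0+C\sqrt K]$ one has $K/t\in(m_0-1,m_0)$, hence $\lceil K/t\rceil=m_0$ and the deficit $m_0t-K\ge m_0 t_0-K>C\sqrt K$. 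So no $t$ in the window works. (Such $t_0$ corresponds to $a\approx n^{1/m_0}$, squarely in your ``balanced'' regime.) The heuristic count is misleading because over a window of width $O(\sqrt K)$ the residues $K\bmod t$ are not at all equidistributed --- they move piecewise linearly with slope $-\lfloor K/t\rfloor$, so when $K/t_0$ is a small constant you are stuck with a deficit of order $m_0\sqrt K$.

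The paper sidesteps this obstruction entirely by not restricting to divisors of the form $2^t-1$. It fixes $m$ with $2^m-1=\Theta(\sqrt{\log n})$, picks $\ell$ so that $2^{\ell(2^m-1)}\approx n$, sets $q=2^{\ell\cdot 2^m}$, and exploits the full factorization
\[
q-1=(2^\ell-1)\prod_{i=0}^{m-1}\bigl(1+2^{\ell\cdot 2^i}\bigr).
\]
By taking products of factors indexed by an arbitrary subset $S\subseteq\{0,\dots,m-1\}$ one gets $2^m$ divisors of $q-1$, roughly of sizes $2^{\ell\alpha}$ for every $\alpha\in\{0,1,\dots,2^m-1\}$. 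Writing the binary expansion of the $\alpha$ with $2^{\ell(\alpha-1)}<a\le 2^{\ell\alpha}$ picks out a divisor $A\ge a$ that overshoots $a$ by at most a factor $2^\ell=2^{\Theta(\sqrt{\log n})}$, which is exactly the slack allowed. Thus the key idea you are missing is to work with a product of cyclotomic-type factors rather than a single $2^t-1$; this gives a geometrically dense family of divisors of $q-1$ and removes the need for any Diophantine argument.
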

\begin{proof}
Let $m$ be a positive integer to be chosen later. Let $\ell$ be an integer such that $$2^{\ell(2^m-1)}\ge Cn+1 > 2^{(\ell-1)(2^m-1)}$$ where $C\ge 1$ is some sufficiently large constant to be chosen later and let $x=2^\ell, q=x^{2^m}$. We will now show that for any $a\le n$, we can factor $q-1$ as $A\cdot B$ where $A\ge a$ and $B\ge n/a=b$.
We can factor $q-1=x^{2^m}-1$ as: $$x^{2^m}-1= (x-1) \prod_{i\in [m]} (1+x^{2^{i-1}}).$$ We will rearrange these factors to get the desired factorization of $q-1$. Let $0\le \alpha\le 2^m-1$ be such that $x^{\alpha-1}< a\le x^\alpha$. Expand $\alpha$ into its binary expansion as $\alpha=\sum_{i\in S} 2^i$ where $S \subset \{0,1,\cdots,m-1\}$. Define $A=\prod_{i\in S}(1+x^{2^i})$ and define $B=(x^{2^m}-1)/A$. Clearly $A\ge x^\alpha \ge a$. We can lower bound $B$ as follows:
\begin{align*}
B&=\frac{(x^{2^m}-1)}{\prod_{i\in S}(1+x^{2^i})}= \prod_{i\in S} (1+x^{-2^i})^{-1} \cdot \frac{(x^{2^m}-1)}{\prod_{i\in S} x^{2^i}}\\
& \ge \exp(-\sum_{j\ge 0} x^{-2^{j}}) \frac{(x^{2^m}-1)}{x^\alpha} \ge \exp(-\sum_{j\ge 0} 2^{-2^{j}}) \frac{(x^{2^m}-1)}{xa}\\
& \ge \exp(-\sum_{j\ge 0} 2^{-2^{j}}) \frac{(x^{2^m-1}-1)}{a} \ge \exp(-\sum_{j\ge 0} 2^{-2^{j}}) \frac{Cn}{a} \ge \frac{n}{a}
\end{align*}
when $C=\exp(\sum_{j\ge 0} 2^{-2^{j}})$. Now we need to bound $q=x^{2^m}$ as a function of $n.$
\begin{align*}
q=2^{\ell 2^m}&= 2^{(\ell-1)(2^m-1)}\cdot 2^\ell \cdot 2^{2^m-1}\\
&\le (Cn+1)\cdot 2^{\ell} \cdot 2^{2^m-1}\\
& \lesssim n^{1+1/(2^m-1)} \cdot 2^{2^m-1}\\
&\lesssim n\exp(O(\sqrt{\log n}))
\end{align*}
if we choose $m$ such that $(2^m-1)=\Theta(\sqrt{\log n})$.

\end{proof}
\noindent
We are now ready to prove Lemma~\ref{lem:subgroup}.
\begin{proof} [Proof of Lemma~\ref{lem:subgroup}]
By Lemma~\ref{lem:NT_lemma_prime}, there exists $A\ge r$ and $B\ge n/r$ such that $q=AB+1$ is prime and $q=O(n)$. Since $\F_q^*$ is a cyclic group of size $q-1$ and $A$ divides $q-1$, there exists a subgroup of $\F_q^*$ of size $A\ge r$ with $B\ge n/r$ cosets. To get a finite field of characteristic two, we use Lemma~\ref{lem:NT_lemma_two} instead.
\end{proof}

\end{document}